\newcommand{\bA}{{\bf A}}
\newcommand{\bB}{{\bf B}}
\newcommand{\bD}{{\bf D}}
\newcommand{\bE}{{\bf E}}
\newcommand{\bbe}{{\bf e}}
\newcommand{\bx}{{\bf x}}
\newcommand{\bDe}{{\bm \Delta}}
\newcommand{\um}{\underline{m}}
\newcommand{\bR}{{\bf R}}
\newcommand{\bu}{{\bf u}}
\newcommand{\bo}{{\bf 0}}
\newcommand{\bH}{{\bf H}}
\newcommand{\bI}{{\bf I}}
\newcommand{\br}{{\bf r}}
\newcommand{\bY}{{\bf Y}}
\newcommand{\by}{{\bf y}}
\newcommand{\bX}{{\bf X}}
\newcommand{\bW}{{\bf W}}
\newtheorem{theorem}{Theorem}[section]
\newtheorem{lemma}{Lemma}[section]
\newtheorem{remark}{Remark}[section]
\newtheorem{corollary}{Corollary}[section]
\newtheorem{proposition}{Proposition}[section]
\newenvironment{seqnarray}{\small\begin{eqnarray*}}{\end{eqnarray*}}
\newcommand{\bqa}{\begin{eqnarray}}
\newcommand{\eqa}{\end{eqnarray}}
\newcommand{\bqn}{\begin{eqnarray*}}
\newcommand{\eqn}{\end{eqnarray*}}
\newcommand{\be}{\begin{equation}}
\newcommand{\ee}{\end{equation}}
\newcommand{\non}{\nonumber\\}
\newcommand{\rE}{{\rm E}}
\newcommand{\tr}{{\rm tr}}
\newcommand{\md}{\mbox{d}}
\numberwithin{equation}{section}
\theoremstyle{plain}
\begin{document}

\begin{frontmatter}
\title{Generalized  Four Moment  Theorem and an Application to  CLT  for Spiked Eigenvalues of high-dimensional Covariance Matrices.}
\runtitle{Generalized Four Moment Theorem}

\begin{aug}
\author{\fnms{Dandan} \snm{Jiang}\thanksref{t1}\ead[label=e1]{jddpjy@163.com}}
\and
\author{\fnms{Zhidong} \snm{Bai}\thanksref{t2}\ead[label=e2]{baizd@nenu.edu.cn}}

\thankstext{t1}{Corresponding author; Supported by Project 11471140 from NSFC.}
\thankstext{t2}{Supported by Project 11571067 from NSFC.}

\runauthor{Jiang, Bai \& Wang}

\affiliation{School of Mathematics and Statistics, Xi'an Jiaotong University, Xi'an, China\thanksmark{m1} and KLASMOE and School of Mathematics and Statistics, Northeast Normal University, ChangChun, China.\thanksmark{m2}}

 \address{School of Mathematics and Statistics, \\
Xi'an Jiaotong University \\
No.28, Xianning West Road,\\
Xi'an {\rm 710049}, China. \\ \printead{e1}}

 \address{KLASMOE and School of Mathematics and Statistics, \\
Northeast Normal University, \\
No. 5268 People's Street,\\
Changchun {\rm 130024}, China.\\
 \printead{e2}}
 
\end{aug}

\begin{abstract}

We consider a more generalized spiked covariance matrix $\Sigma$, which is a general  non-definite matrix with the spiked eigenvalues scattered into a few  bulks and the largest ones allowed to tend to infinity.  By relaxing the matching of the 4th moment to a tail probability decay,  a  {\it Generalized Four Moment Theorem} (G4MT) is proposed to show the
universality of the asymptotic law for the local spectral statistics of generalized spiked covariance matrices, which implies the limiting distribution of  the spiked  eigenvalues of the generalized spiked covariance matrix is independent of the actual distributions of the samples satisfying our relaxed assumptions. Moreover, by applying it to  the Central Limit Theorem (CLT) for the spiked  eigenvalues of the generalized spiked covariance matrix,  we remove the strict constraint of diagonal block independence for the population covariance matrix  given in \cite{BaiYao2012}, and 
extend their result to a general case  that the 4th moment and the spiked eigenvalues are not necessarily required  to be bounded and the population covariance matrix is in a general form, thus meeting the actual cases better.
\end{abstract}

\begin{keyword}[class=MSC]
\kwd[Primary ]{60B20}
\kwd{62H25}
\kwd[; secondary ]{60F05}
\kwd{62H10}
\end{keyword}

\begin{keyword}
\kwd{Generalized  Four Moment  Theorem }
\kwd{Spiked Eigenvalues}
\kwd{high-dimensional Covariance Matrices}
\kwd{Central Limit Theorem}
\end{keyword}

\end{frontmatter}

\section{Introduction} \label{Int}

The study on the universality conjecture for the local spectral statistics of random matrices, which is motivated by similar phenomena in physics, has been one of the key topics in random matrix theory. It not only plays an important role in the local field of statistics, but has also been widely used in many other fields, such as mathematical physics, combinatorics and computing science. In this paper, we are going to propose  a  {\it Generalized Four Moment Theorem} (G4MT) to prove  the  universality of the asymptotic law for the local  spiked eigenvalues of generalized spiked covariance matrices, and then apply it to  the Central Limit Theorem (CLT) for the spiked  eigenvalues of the generalized spiked covariance matrix in a general case.
\subsection{Background of universality}
As is well known, universality has been conjectured by many statisticians since the 1960s, including \cite{Wigner1958}, \cite{Dyson1970}, and \cite{Mehta1967}; it states that local statistics are universal, implying that the conclusions hold not only for the Gaussian Unitary Ensemble (GUE) but also the general Wigner random matrix. It provides  new ideas and techniques for the  research of random matrix theory, which implies that to prove one result suitable for Non-Gaussian case, it is sufficient  to show the same result under the Gaussian assumption if the universality is true.

The  similar  universality  phenomena of  the bulk of the spectrum has been also investigated  in many studies. 
A rigorous result has emerged  in \cite{Soshnikov1999},  which proved that  the universality of the joint distribution of the largest $k$ eigenvalues (for any fixed $k$)  hold under  the symmetric assumption of the  atom distribution.
 \cite{Johansson2001}, \cite{BP2005} focused on the Gauss divisible, which is a strong regularity assumption on the atom distribution. Further,  \cite{Erdos2010a} relaxed the above  regularity assumption to a distribution family with a explicit form. 
  \cite{Erdos2010b} improved the work by the analysis of the Dyson Brownian motion but still requires a high degree of regularity on the atom distribution. 
  Most recently,  \cite{TaoVu2015} showed the universality of the asymptotic law for the local spectral statistics of the Wigner matrix by the {\it Four Moment Theorem},  which is based on the Lindeberg strategy in \cite{Lindeberg1922} of replacing non-Gaussian random variables with Gaussian ones. This method  assumes that the moments of the entries match that of the complex  standardized Gaussian ensemble up to the 4th order and requires the $C_0$ condition to hold, which states that  the independent distributed entries have  zero mean  and identity  variance and satisfy the uniform exponential decay,  with the form 
 \[P(|x_{ij}| \geq t^C) \leq e^{-t}\]
 for all $t\geq C'$, $1\le i, j \leq n$ and $C, C'$ being some constants. Although they asserted that the fine spacing statistics of a random Hermitian matrix in the bulk of the spectrum are only sensitive to the first four moments of the entries, they also conjectured that  it may be possible to reduce the number of matching moments in their theorem. 

\subsection{Our contribution to universality}\label{sec12}
Inspired by these previous works, the G4MT is proposed by replacing the condition of
matching the 4th  moment  by a tail probability as detailed  in  Assumption~$\bB$. 
 Then the universality of the asymptotic law for  the bulk of spiked eigenvalues  of generalized  covariance matrices is
 automatically  proved by the proposed G4MT.
 By weakening the constrains, it takes several advantages  as follows: First,  
when proving the universality of the asymptotic law for the bulk of the  spiked eigenvalues, 
It only requires the condition of  matching moments up to the 3th order and the fourth moments to satisfy the tail probability,
which is a regular and necessary condition in the weak convergence of the largest eigenvalue.
For the case of  symmetric distribution, it is only needed to  consider the first and the second moments. 
Second, 
we reduce the study of universality  of a asymptotic law to the eigenvalues of a low-dimensional matrix, 
unlike \cite{TaoVu2015} which involves the partial derivative operation  of the whole  large dimensional random matrices.
As a by-product, 
 the  rigorous $C_0$ condition having  uniform exponential decay  in \cite{TaoVu2015}  is also not necessary. 
Finally, it shows that  the limiting distribution of  the spiked  eigenvalues of a generalized spiked covariance matrix is independent of the actual distributions of the samples satisfying our relaxed assumptions. 

As an application,  we also  apply the proposed G4MT to  the CLT for the spiked  eigenvalues of the generalized spiked covariance matrix. By relaxing the constrains,  we remove some of the strict conditions given in 
 \cite{BaiYao2012}, 
  and then make the result efficient in a wider usage,  where the 4th moment and the spiked eigenvalues are not necessarily required to be bounded and the population covariance matrix is in a general form without diagonal block independent assumption,  thus meeting the actual cases better.

%
%
 
\subsection{Related works of spiked model}
The spiked model in the high-dimensional setting is originated from the common phenomenon of large or even huge dimensionality $p$ compared to the sample size $n$, occurring in  many modern scientific fields, such as wireless communication, gene expression and climate studies.  It  was first proposed by \cite{Johnstone2001}  
 under the assumptions of high dimensionality and  an identity population covariance matrix  with fixed and relatively small spikes. Since the study of  spiked covariance matrices  has a close relationship with  Principal Component Analysis (PCA) or Factor Analysis (FA), which are important and powerful tools in dimension reduction, data  visualization and feature extraction,  it has inspired great interest on the part of researchers in the limiting behaviors of the eigenvalues and eigenvectors of such high-dimensional spiked sample covariance.

 Within this context, many  impressive works are devoted to investigate on the limiting properties of the spiked eigenvalues of the high-dimensional covariance matrix. The initial focus was on 
  the simplest situation that the population covariance matrix is a small perturbation of the identity covariance matrix.
  Under  this simplified  assumption, \cite{Baik2005} investigated the exact scaling rates of the asymptotic distributions of the empirical eigenvalues in both cases of below and above the related threshold.  \cite{BaikSilverstein2006}  provided  the almost sure limits of the sample eigenvalues   in the simplified spiked model for a general class of samples when both population size and sample size tend to infinity with a finite ratio. \cite{Paul2007}
showed the asymptotic structure  of the sample eigenvalues and eigenvectors with bounded spikes  in the setting of $p/n \rightarrow  c_0 \in (0,1)$ as $n \rightarrow \infty$.   \cite{BaiYao2008} derived the phase transition and the CLT of the spiked eigenvalues when the entries of the samples are independent and identically distributed (${\rm i.i.d.}$).

%
To improve the simplified assumptions,  \cite{BaiYao2012}  contributed  to deal with  a more general spiked covariance matrix,  which assumed the conditions of the diagonal  block independence and finite 4th moments. Efforts have  also been devoted to PCA or FA as a different way to improve the work on the spiked population model. For example,  \cite{BaiNg2002} focused on the determination of the number of factors and  first established the convergence rate for the factor estimates with  the constrains of 
 the independence of the components and the existence of the 8th moment.  \cite{HoyleRattray2004} used the replica method to evaluate the expected eigenvalue distribution  as the $p/n \to c$, a fixed constant. The work is considered in the case of  a number of symmetry-breaking directions. \cite{Onatski2009} derived accurate approximations to the finite sample distribution of the principal components  estimator in the large factor model with weakly influential factors. 
 The more general works are  the recent contributions from  \cite{FanWang2015} and  \cite{CaiHanPan2017} , which both investigate the asymptotic distributions of the spiked eigenvalues  and eigenvectors of a general covariance matrix.
 However, the result of  \cite{FanWang2015} only has one threshold, which  is the same as the case of block independence indeed. More importantly, their main theorems are involved with
the difference between the ratio  $\lambda_i/\alpha_i$ and 1,  with $\lambda_i$  being the corresponding sample eigenvalue, which is given as an unspecified "O'' term. Furthermore, both of the works in  \cite{FanWang2015} and  \cite{CaiHanPan2017}  require the bounded 4th moments and  the condition  $p/(n\alpha_i) \to 0$, with $\alpha_i, i=1,\cdots, K$ being the spikes, so that it  
limits the relationship between the dimensionality and the spikes. 

On the basis of these works, we further
consider a general spiked covariance matrix and study the asymptotic law for its spiked eigenvalues under relaxed assumptions. 
Since the main cause of this unspecified  "O''  term is the use of  the population spiked  eigenvalue  in the ratio  $\lambda_i/\alpha_i$, but not its phase transition, so that  we consider  to use the  phase transition of the spiked eigenvalues instead and then give the explicit CLT for the spiked eigenvalues of high-dimensional  generalized covariance matrices.

\subsection{Our contribution to spiked model}

To improve the related works on spiked model, we shall apply the proposed G4MT to
the CLT for the spiked  eigenvalues of the generalized spiked covariance matrix as  mentioned in Sec~\ref{sec12}. We consider a general spiked covariance matrix $\Sigma$, which is a general  non-definite matrix with  the  spectrum formed as 
\begin{equation}
\beta_{p,1}, \cdots,  \beta_{p,j},\cdots,\beta_{p,p}\label{array}
\end{equation}
in descending order
and 
$\beta_{p,j_k+1}, \cdots \beta_{p, j_k+m_k}$
are equal to $\alpha_k, k=1,\cdots, K$, respectively, where $j_k$  is the  rank of the eigenvalue in front of the first $\alpha_k$ in the array 
(\ref{array}) and $j_k$ also may take value at 0.
Then,  $\alpha_1, \cdots, \alpha_K$ with multiplicity $m_k, k=1,\cdots,K$, respectively, satisfying $m_1+\cdots+m_K=M$, a fixed integer, are the spiked eigenvalues of $\Sigma$ lined arbitrarily in groups among all the eigenvalues.
We apply the G4MT to the spiked  eigenvalues of such general covariance matrix $\Sigma$,  and provide a  universal asymptotic distribution of  the spiked  eigenvalues of the generalized spiked covariance matrix.
By our relaxing constraints, the proposed result demonstrates several advantages as below:
First,  we remove the strict condition that the population covariance matrix  has a diagonal block independent structure given in \cite{BaiYao2012}. As known,  their  diagonal block  structure is  equivalent to require  that the spiked and non-spiked eigenvalues are generated from the independent variables, which  is difficult to reach for the huge data today. 
Second, our method  permits  the spiked eigenvalues to  be scattered into a few  bulks, any of which are larger than their related right-threshold or smaller than their related left-threshold. So our focused work is extended to a generalized case with a few pairs of thresholds.  
Furthermore, for the generalized  population covariance matrix  that satisfies the Assumption~{\bf D},
we give  a clear and universal expression for the limit distribution of the spiked  eigenvalues of generalized spiked covariance matrix, which is not involved with an unspecified "O" term and the 4th moments.
For the cases that the Assumption~{\bf D} is not met, such as the  diagonal matrix or  the  diagonal block matrix,   we also provide the corresponding result in Remark~\ref{rmk31}, which  
 performs as well as the approach in \cite{BaiYao2012}, and even better in some cases as illustrated in simulations.
Finally, the spiked eigenvalues and the population 4th moments  are not necessarily required to be bounded in our work. 
Thus the weakening
constraints make the conclusion more applicable to actual cases.


 

 The rest of our paper is arranged as follows: In Section~\ref{Pre}, the problem is described in a generalized setting, and the phase transition for the  spiked eigenvalues of generalized covariance matrix is also presented. Section~\ref{New} gives the main results of the G4MT and applies it to the CLT for the spiked  eigenvalues of the generalized spiked covariance matrix in high-dimensional setting.  
In Section~\ref{Sim}, simulations are conducted to evaluate our work comparing with the work in Bai and Yao(2012).
Then, an applications to determining the number of the spikes and  real data analysis are also discussed in Section~\ref{Apply}. 
Finally, we draw a conclusion in the Section~\ref{Con}. Important proofs are all provided in the Supplement.


\section{Problem Description and  Preliminaries}  \label{Pre}

Consider the  random samples $T_p\bX$, where 
\begin{equation*}
\bX=(\bx_1,\cdots,\bx_{n})=\left(x_{ij}\right), 1\leq i \leq p, ~ 1\leq j \leq n,
\end{equation*}
and  $T_p$ is a $p\times p$ deterministic matrix. 
Then, $T_pT_p^*=\Sigma$ is the population  covariance matrix, which can be seen as a
 general  non-definite matrix with the spectrum  arranged in descending order in  
(\ref{array}).
The population spiked eigenvalues of $\Sigma$, $\alpha_1, \cdots, \alpha_K$ with multiplicities $m_k, k=1,\cdots,K$, are  lined arbitrarily in groups among all the eigenvalues, where  $m_1+\cdots+m_K=M$ is a fixed integer. 
 
Define the  corresponding  sample covariance matrix of the observations $T_p\bX$ as 
\begin{equation}
S=T_p\left(\frac{1}{n}\bX\bX^*\right)T_p^*,
\label{S}
\end{equation}
and then the sample covariance matrix  $S$ is the so-called  generalized spiked sample covariance matrix.

Define the singular value decomposition of $T_p$ as
 \be T_p= V\left(
\begin{array}{cc}
 D_1^{1/2} & \bo    \\
 \bo & D_2^{1/2}   
\end{array}
\right)U^{*},
\label{UDU}
\ee
where $U$ and $V$ are unitary matrices,   $D_1$ is a diagonal matrix of the $M$ spiked eigenvalues
 and $D_2$ is the diagonal matrix of the non-spiked eigenvalues with bounded components. 
 Since the investigation on the limiting distribution of the spiked eigenvalues of the  sample covariance matrix depends on the basic equation $|\lambda\bI- S|=0$, it is obvious that it only involves the right singular vector matrix $U$ but not the left one.
 
 Let $J_k$ be the set of  ranks  of  $\alpha_k$ with multiplicity $m_k$ among all the eigenvalues of $\Sigma$, {\rm i.e.}
\[J_k=\{ j_k+1,\cdots, j_k+m_k\}.\]
Denote by $\{l_{j}(\bA)\}$ the eigenvalues of  a $p \times p$ matrix $\bA$.
Then,  the sample eigenvalues of the generalized spiked sample covariance matrix $S$ are sorted 
in descending order as 
\[l_{1}(S),  \cdots,  l_{j}(S), \cdots,  l_{p}(S).\]

 To consider the limiting distribution of the spiked eigenvalues of a generalized sample covariance matrix $S$, it is necessary to determine the following assumptions:  
\begin{description}
\item[ Assumption [A\!\!]] The double array $\{x_{ij},i,j = 1,2,...\} $ consist of ${\rm i.i.d.}$
random variables with mean 0 and variance 1. Furthermore, $\rE x_{ij}^2=0$ for the complex case (when both $x$'s and $T_p$ are complex). 
\item[ Assumption~[B\!\!]]  Suppose that 
\[\lim\limits_{\tau \rightarrow \infty}\tau^4 {\rm P}\left(|x_{ij}| >\tau \right)=0\] for the {\rm i.i.d.} sample  $(x_{i1}, \cdots, x_{in}),$ $ i=1,\cdots, p$, 
 where the  4th moments may   
 unnecessarily exist.
\item[ Assumption~[C\!\!]] The $p \times p$ matrix  $\Sigma=T_pT_p^*$ forms the sequence $\{\Sigma_p\}$, which  is bounded in the spectral norm. Moreover, denote the  empirical spectral distribution (ESD) of   $\Sigma$ 
as $H_n$, which  tends to a proper probability measure $H$  as $p \rightarrow \infty$.
\item[ Assumption~[D\!\!]]  Suppose that 
 \be
 \max\limits_{t,s} |u_{ts}|^2 \left({\rE|x_{11}|^4-3}\right)I(|x_{11}|<\sqrt{n}) \rightarrow 0,\label{CondU1}
 \ee
  where $U_1=\big(u_{ts}\big)_{t=1,\cdots,p; s=1,\cdots,M }$ is the first $M$ columns  of matrix $U$ defined in (\ref{UDU}). 
\end{description}
The detailed explanation of Assumption~{\bf D} can be found in the Supplementary materials.
\begin{description}
\item[ Assumption~[E\!\!]]  
Assuming that $p/n = c_n \to c >0$  and  both $n$ and $p$ go to infinity simultaneously,
the spiked eigenvalues of the matrix  $\Sigma$, $\alpha_1,\cdots, \alpha_K$  with multiplicities $m_1,\cdots,m_K$ laying out side the support of $H$,  satisfy $\phi'(\alpha_k)>0$ for $1\le k \le K$, where 
\[\phi(x)=x\left(1+c\int\frac{t}{x-t} \md H(t)\right)\]
 is detailed in the following Proposition~\ref{P1}. 
\end{description}

\subsection{Phase transition of the spiked eigenvalues of generalized covariance matrices.}  \label{Sec2.1}


In this part, an improved version of the  phase transition for each spiked eigenvalue of a generalized sample covariance matrix is detailed under our relaxed assumptions.
For each population spiked eigenvalue 
$\alpha_k$ with multiplicity $m_k$ and the associated sample eigenvalues 
$\{l_{j}(S), j \in J_k\}$, $k=1,\cdots,K$, we have following proposition
\begin{proposition}\label{P1}
 For the spiked sample covariance matrix $ S$ given in (\ref{S}), assume that $p/n =c_n \to c >0$  and both the dimensionality $p$ and the sample size $n$ grow to infinity  simultaneously. For any population spiked eigenvalue $\alpha_k, (k=1,\cdots,K)$,
 let
  \[\rho_k =
\left\{
\begin{array}{cc}
  \phi(\alpha_k),& \mbox{ if } \phi'(\alpha_k)>0,   \\
   \phi(\underline\alpha_k), &   \mbox{ if there exists } \underline\alpha_k \mbox{ such that } \phi'(\underline\alpha_k)=0
   \\ &\mbox{ and }\phi_n'(t)<0,  \mbox{ for all } \alpha_k\le t<\underline{\alpha}_k
\\
   \phi(\overline\alpha_k), &   \mbox{ if there exists } \overline\alpha_k \mbox{ such that } \phi'(\overline\alpha_k)=0\\
   & \mbox{ and }
 \phi'(s)<0, \mbox{ for all } \overline{\alpha}_k<s\le\alpha_k
\end{array}
\right.
\]
where
\be
\phi(\alpha_k)= \alpha_k\left(1+c\int\frac{t}{\alpha_k-t} \md H(t)\right). \label{phik}
\ee
Then, 
  it holds that for all $ j \in J_k$, $\{ {l_{j}}/{\rho_k}-1\}$  almost surely converges to 0.
  \label{prop1}
 \end{proposition}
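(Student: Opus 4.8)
The plan is to reduce the eigenvalue equation $|l\bI_p-S|=0$ to a determinantal equation of fixed size $M\times M$ and then to locate its roots near the $\rho_k$'s. By the singular value decomposition (\ref{UDU}) and unitary invariance of the spectrum under conjugation by $V$, the nonzero eigenvalues of $S$ coincide with those of $D^{1/2}U^*\big(\tfrac1n\bX\bX^*\big)UD^{1/2}$, where $D=\mathrm{diag}(D_1,D_2)$. Writing $\bY_i=U_i^*\bX$ (so that $\bY_1$ is $M\times n$ and $\bY_2$ is $(p-M)\times n$) and $\bB_n=\tfrac1n D_2^{1/2}\bY_2\bY_2^*D_2^{1/2}$, a Schur-complement manipulation shows that, for every $l$ outside the spectrum of $\bB_n$,
\[
|l\bI_p-S|=0\ \Longleftrightarrow\ \det\!\big(lD_1^{-1}-{\bf K}_n(l)\big)=0,\quad\text{where}\quad {\bf K}_n(l)=\tfrac1n\bY_1\bY_1^*+\tfrac1{n^2}\bY_1\bY_2^*D_2^{1/2}\big(l\bI-\bB_n\big)^{-1}D_2^{1/2}\bY_2\bY_1^*.
\]

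Next I would dispose of the bulk. The matrix $\bB_n$ differs from the full non-spiked sample covariance only by the deletion of $M$ eigenvalues, so under Assumptions A--C with $p/n\to c$ the classical theory of Bai and Silverstein --- convergence of the ESD of $\bB_n$ to the generalized Marchenko--Pastur law with parameters $(c,H)$, and the almost sure absence, for all large $n$, of eigenvalues of $\bB_n$ outside any fixed neighbourhood of its support --- applies. Since $\bB_n$ is the $(p-M)\times(p-M)$ principal submatrix of $D^{1/2}U^*(\tfrac1n\bX\bX^*)UD^{1/2}$, Cauchy interlacing forces all but at most $2M$ eigenvalues of $S$ into that neighbourhood; the remaining ones are the ``outliers'', and it suffices to locate them through the displayed equation for $l$ ranging over a fixed compact set bounded away from (or, in the sub-critical cases, reaching the edge of) the support.

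The heart of the matter is to show that, on such a compact set, ${\bf K}_n(l)\to g(l)\bI_M$ almost surely, where $g$ is the deterministic function assembled from the companion Stieltjes transform of the limiting bulk distribution and satisfying $g(\phi(\alpha))=\phi(\alpha)/\alpha$ whenever $\phi'(\alpha)>0$. The first term $\tfrac1n\bY_1\bY_1^*=\tfrac1n U_1^*\bX\bX^*U_1$ converges a.s.\ to $U_1^*U_1=\bI_M$ by the strong law applied to the $n$ i.i.d.\ columns of $\bX$, after truncating the entries at level $\sqrt n$ (which Assumption B permits, the discarded part being negligible). For the second term, a leave-one-column-out expansion of $(l\bI-\bB_n)^{-1}$ via Sherman--Morrison reduces the diagonal of the inner $n\times n$ matrix to a quadratic form concentrating on its normalized trace, while its off-diagonal part vanishes in the limit because the columns of $U_1$ are orthogonal to those of $U_2$; this produces the scalar-times-identity limit and, a posteriori, identifies $g$ through the standard self-consistent equation. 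The delicate point --- and exactly what lets us drop the block-independence hypothesis of \cite{BaiYao2012} --- is that $U_1^*\bX$ and $U_2^*\bX$ fail to be independent once $\Sigma$ is not block-diagonal, so the deterministic-equivalent computation must control the cross-dependence as well as the possibly nonexistent fourth moment; both are absorbed by the martingale/leave-one-column-out estimates of Bai and Silverstein, using that after truncation the fourth-moment and interaction biases carry a factor $1/n$ and are negligible. I expect this to be the main obstacle.

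Finally, the limiting equation $\det\big(lD_1^{-1}-g(l)\bI_M\big)=\prod_{k=1}^K\big(l/\alpha_k-g(l)\big)^{m_k}=0$ has $l=\phi(\alpha_k)$ as a root by the defining identity of $\phi$ (equivalently, by the Baik--Silverstein characterization; cf.\ \cite{BaikSilverstein2006,BaiYao2008}), valid on the branch where $\phi'(\alpha_k)>0$, which is the first case in the definition of $\rho_k$. Since ${\bf K}_n(l)$ converges uniformly, Hurwitz's theorem then gives that exactly $m_k$ eigenvalues of $S$ converge to $\rho_k=\phi(\alpha_k)$ in that case. In the sub-critical cases, where $\phi$ decreases from $\alpha_k$ up to a critical point $\underline\alpha_k$ (respectively from $\overline\alpha_k$ up to $\alpha_k$), the limiting equation has no root just beyond the corresponding support edge, so the associated eigenvalue of $S$ cannot escape it while interlacing from below pins it to that edge; by the characterization of the support edges as the critical values of $\phi$, that edge equals $\phi(\underline\alpha_k)$ (resp.\ $\phi(\overline\alpha_k)$), which is $\rho_k$. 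All the preceding estimates are uniform in the spikes --- $\phi(\alpha_k)\sim\alpha_k$ and $\phi'(\alpha_k)\to1$ as $\alpha_k\to\infty$, with the relevant quantities controlled after rescaling by $\alpha_k$ --- so no boundedness of the spikes is required. Matching indices, multiplicities, and the ordering (\ref{array}) then yields $l_j(S)/\rho_k\to1$ almost surely for every $j\in J_k$, which is the assertion of Proposition~\ref{prop1}.
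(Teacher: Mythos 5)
Your route is genuinely different from the paper's, and it has an acknowledged but unfilled gap at the crucial step.

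The paper's own proof is a two-line deduction from the G4MT (Theorem~\ref{thm2}): by the G4MT the limiting behaviour of the spiked eigenvalues does not depend on the distribution of the entries, so one may replace the entries of $\bX$ by standard Gaussians. Once the entries are Gaussian, the rotational invariance of the standard Gaussian law makes $U_1^*\bX$ and $U_2^*\bX$ \emph{automatically independent} (since $U$ is unitary), so the apparent cross-dependence caused by a non-block-diagonal $\Sigma$ disappears and the problem collapses to the block-independent setting; the a.s.\ convergence and exact separation of eigenvalues in \cite{BaiSilverstein1999} then give the result directly. You, in contrast, keep the general distribution and attempt a direct argument: Schur-complement reduction to an $M\times M$ determinantal equation, bulk disposal via ESD convergence and Cauchy interlacing, a.s.\ concentration $\bK_n(l)\to g(l)\bI_M$, and Hurwitz/interlacing for root location. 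If that concentration step were proven, your argument would be a self-contained alternative that makes no appeal to the G4MT, at the cost of essentially reproving the G4MT's content along the way.

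The gap is exactly where you flag it. You assert that the cross-dependence between $U_1^*\bX$ and $U_2^*\bX$ and the possibly nonexistent fourth moment are ``absorbed by the martingale/leave-one-column-out estimates of Bai and Silverstein.'' They are not: the classical Bai--Silverstein machinery is built either under column/row independence induced by a diagonal (block-diagonal) $\Sigma$, or under finite fourth moment, and here you have neither. The paper devotes dedicated lemmas to precisely this point. Lemma~\ref{lembb1} is a nontrivial uniform a.s.\ bound $\max_i\, n^{-1}\lambda\,|\bu_{1i}^*\bX(\lambda\bI_n - n^{-1}\bX^*\Gamma\bX)^{-1}\bX^*\bu_{2i}|\le 2n^{-\delta}$ proved via a five-term leave-two-out decomposition plus a Burkholder martingale inequality, and Lemma~\ref{lembetak} provides the conditional-moment estimates $\rE_k\varepsilon_k^2 = o(n^{-1}\log n)$, $\rE_k\varepsilon_k^4 = o(n^{-1})$ that feed into it after truncation. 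Those estimates \emph{are} your ``main obstacle,'' and without proving them (or an equivalent) your plan does not close. To be fair, you identify the obstacle honestly; but the proposal as written leaves the hardest lemma as an assertion, whereas the paper's route sidesteps the entire issue by first reducing to Gaussian where $U_1^*\bX \perp U_2^*\bX$ holds for free.
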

 \begin{remark}
 Since the convergence of $c_n \to c$ and $H_n \to H$ may be very slow, the difference $\sqrt{n}(l_j -\phi_k)$ may not have a limiting distribution. Furthermore, from a view of statistical inference, $H_n$ can be treated as the subject population, and $c_n$ can be regarded as the ratio of dimension to sample size for the subject sample. 
 So, we usually use 
 \be \phi_n(\alpha_k)= \alpha_k\left(1+c_n\int\frac{t}{\alpha_k-t} \md H_n(t)\right),\label{phink}
 \ee
 instead of $\phi_k$ in $\rho_k$, in particular during the process of CLT. Then, 
we only require $c_n=p/n$, and both the dimensionality $p$ and the sample size $n$ grow to infinity  simultaneously, but not necessarily  in proportion. Moreover, the approximation  that
$\{ {l_{j}}/{\rho_k}-1\}$ almost surely converges to 0 still holds  for all $ j \in J_k$.
 \label{RP1}
 \end{remark}

 Note that 
 the Proposition~{\ref{prop1}} theoretically shows that the diagonal block independent assumption of  \cite{BaiYao2012} 
 can be removed,  
  and both of the spiked eigenvalues and the population 4th moment  are not necessarily required to be bounded. 
 The proof of Proposition~{\ref{prop1}}  can be easily obtained  based on the 
G4MT, which is presented  in the next section and  shows that  two samples, $\bX$ and $\bY$, from different populations satisfying Assumptions ${\bf A} \sim {\bf E}$  will lead to the same limiting distribution of  the spiked  eigenvalues of a generalized spiked covariance matrix.  By  the G4MT,  it is reasonable to  assume the Gaussian entries from $\bX$; then, Proposition~{\ref{P1}} is proved by  the almost sure convergence and the exact separation of eigenvalues 
 in \cite{BaiSilverstein1999}.
 
 In addition, by applying  the 
G4MT to  the CLT for the spiked  eigenvalues of the generalized spiked covariance matrix,
we can obtain a universal asymptotic distribution of the spiked  eigenvalues of a generalized spiked covariance matrix, which is free of the population distribution and different from the result involved with the 4th moments in   \cite{BaiYao2012}. 
By the G4MT,  the universal CLT can be also equivalently obtained by 
 \[\bY=( {y}_1,\cdots, {y}_{n})=\left(y_{il}\right), 1\leq i \leq p, ~ 1\leq j \leq n,\]
being an independent $p$-dimensional arrays from $\mathcal{N}(\bo,\bI_p)$.

 Actually, many readers have asked the same question after reading  \cite{BaiYao2012}, that is, whether the diagonalizing assumption 
\be\Sigma=
\left(
\begin{array}{cc}
\Sigma_M  &    0  \\
0  &   V_{p-M} 
\end{array}
\right)\label{B12Sig}
\ee
is  necessary. Does the result of  \cite{BaiYao2012} hold for a more general form of an arbitrary nonnegative definite matrix? Through our work, one can find that they are clever to make such an assumption, for otherwise, the limiting distribution of the normalized spiked eigenvalues  would be independent of the 4th moment of the atom variables if the condition (\ref{CondU1}) is satisfied.


\section{Main Results }  \label{New}

Our main results are two key points: First, it is the G4MT, which shows that the samples satisfying the Assumptions ${\bf A} \sim {\bf E}$  lead to
 the same asymptotic distributions of the spiked eigenvalues of a generalized spiked covariance matrix. Second, it is the CLT 
for the spiked eigenvalues of a high-dimensional generalized covariance matrix  under our  relaxed assumptions.   
For ease of reading and understanding, the G4MT is introduced during its application to the CLT for the spiked eigenvalues of a generalized covariance matrix.
The proof of G4MT will be postponed to Section~{\ref{SG4MT}} in the Supplement  for the consistency of reading. 
Before that, we also give some explanations of the truncation procedure as below.

\subsection{Truncation}
  
  Let $\hat x_{ij}= x_{ij} {\rm I}(|x_{ij}|< \eta_n \sqrt{n})$ and  $ \tilde x_{ij}=(\hat x_{ij} - \rE \hat x_{ij})/ \sigma_{n}$  with $\sigma_{n}^2=\rE \left|\hat x_{ij}- \rE  \hat x_{ij} \right|^2$.
  We can illustrate that it is  equivalent to replacing the entries of $\bX$ with the truncated and centralized ones by Assumption~$\bB$. 
  Details of the proof are presented in Supplement~\ref{app1} and the convergence rates of arbitrary moments of  $\tilde x_{ij}$ are depicted.
  
Therefore, we only need to consider the limiting distribution of the spiked eigenvalues of $\tilde S$, which is generated from the entries truncated at $\eta_n \sqrt{n}$, centralized and renormalized. For simplicity, it is equivalent to  assume that $|x_{ij}| <\eta_n \sqrt{n}$, $\rE x_{ij}=0, \rE |x_{ij}^2|=1$, and Assumption~$\bB$ is satisfied for the real case. But it cannot meet the requirement of
$\rE x_{ij}^2 =0$ for the complex case;  instead, only $\rE x_{ij}^2 =o(n^{-1})$ can be guaranteed.

\subsection{CLT for the spiked  eigenvalues of generalized covariance matrix}  \label{Sec3.1}


 As seen from the Proposition~\ref{P1}, there is a packet of $m_k$ consecutive sample eigenvalues $\{l_j(S), j \in J_k\}$ converging to a limit $\rho_k$ laying outside the support of the limiting spectral distribution (LSD), $F^{c,H}$,
of $S$. Recall the CLT for the $m_k$-dimensional vector 
\[\Big(\sqrt{n} \big(l_j(S)- \phi(\alpha_k) \big), j \in J_k\Big)\]
given in 
\cite{BaiYao2012}. 
Since the spiked eigenvalues may be allowed to tend to infinity  in our work,  and  the difference between $l_j(S)$ and $\phi(\alpha_k)$ make convergence very slow as mentioned in Remark~\ref{RP1}, we consider the renormalized random vector   
\be
\left(\sqrt{n} \Big(\frac{l_j(S)}{\phi_n(\alpha_k)}-1\Big) , j \in J_k \right).\label{mrv}
\ee
 The CLT for (\ref{mrv}) is proposed for a general case in the following theorem, 

\begin{theorem}
 Suppose that the Assumptions $\bA \sim {\bf E}$ hold.
 For each distant generalized spiked eigenvalue, the $m_k$-dimensional real vector 
\[\gamma_k=(\gamma_{kj})= \left(\sqrt{n}\Big(\frac{l_j(S)}{\phi_{n,k}}-1\Big) , j \in J_k \right)\]
converges  weakly to the joint  distribution  of the $m_k$ eigenvalues of Gaussian random matrix 
\[-\frac1{\kappa_s}\left[\Omega_{\phi_{k}}\right]_{kk}\]
where $\phi_{k}:= \phi(\alpha_k)$,  $\phi_{n,k}:= \phi_n(\alpha_k)$ in (\ref{phink}),
\be
\kappa_s =1+\phi_{k}\alpha_k\um_2(\phi_{k})+\alpha_k\um(\phi_{k}), \label{ks}
\ee
$\um,  \um_2$ are defined in (\ref{um2}).
Note that  
$\Omega_{\phi_{k}}$ is defined in Corollary~{\ref{coro1}}  and 
$\left[\Omega_{\phi_{k}}\right]_{kk}$ is  the $k$th diagonal block of  $\Omega_{\phi_{k}}$   corresponding to the indices $\{i,j \in J_k\}$. 
\label{CLT}
\end{theorem}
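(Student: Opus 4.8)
The plan is to reduce the eigenvalue problem for $S$ to a determinantal equation of bounded size, apply the G4MT to pass to Gaussian entries, and then linearise. After the truncation described above we may assume $|x_{ij}|<\eta_n\sqrt n$. Using the SVD (\ref{UDU}) — the left factor $V$ being irrelevant since it is unitary — set $\bY=U^{*}\bX$ and partition its rows as $\bY=(\bY_1^{*},\bY_2^{*})^{*}$, where $\bY_1=U_1^{*}\bX$ is the $M\times n$ block attached to the spiked directions $U_1$ and $\bY_2$ the remaining $(p-M)\times n$ block. For $\lambda$ in a neighbourhood of $\rho_k=\phi(\alpha_k)$, which lies outside the support of $F^{c,H}$, a Schur--complement reduction of $\lambda\bI_p-S$ shows that the spiked sample eigenvalues solve $\det\big(\lambda D_1^{-1}-\bR_n(\lambda)\big)=0$, where
\[
\bR_n(\lambda)=\tfrac1n\bY_1\bY_1^{*}+\tfrac1{n^{2}}\bY_1\bY_2^{*}D_2^{1/2}\big(\lambda\bI-\tfrac1n D_2^{1/2}\bY_2\bY_2^{*}D_2^{1/2}\big)^{-1}D_2^{1/2}\bY_2\bY_1^{*}
\]
is an $M\times M$ random matrix; and since the thresholds attached to the distinct $\alpha_{k'}$ are separated, near $\rho_k$ this localises to $\det\big(\lambda\alpha_k^{-1}\bI_{m_k}-[\bR_n(\lambda)]_{kk}\big)=0$. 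This bounded-dimensional reduction is exactly what makes the G4MT applicable.

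By the G4MT of Section~\ref{SG4MT}, under Assumptions~$\bA\sim\bE$ the $\sqrt n$-scaled, centred law of the $m_k\times m_k$ block $[\bR_n(\cdot)]_{kk}$ — and hence of $\gamma_k$ — does not depend on the distribution of the $x_{ij}$, so it suffices to prove the theorem when $\bX$ has i.i.d.\ standard Gaussian entries; then $\bY_1$ is an $M\times n$ matrix of i.i.d.\ $\mathcal N(0,1)$'s independent of $\bY_2$, because $U_1^{*}U_1=\bI_M$. Under the Gaussian assumption I would establish two facts. First, by the a.s.\ convergence and exact separation of eigenvalues of \cite{BaiSilverstein1999} and the concentration of the resolvent form $\tfrac1n\bY_2^{*}D_2^{1/2}\big(\lambda\bI-\tfrac1n D_2^{1/2}\bY_2\bY_2^{*}D_2^{1/2}\big)^{-1}D_2^{1/2}\bY_2$, the matrix $\bR_n(\lambda)$ converges a.s.\ to a scalar multiple $r(\lambda)\bI_M$ of the identity with $r(\rho_k)=\rho_k/\alpha_k$; this is the phase transition of Proposition~\ref{P1} and pins the location of the packet $\{l_j(S):j\in J_k\}$. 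Second, I would prove a joint CLT for $\sqrt n\big(\bR_n(\phi_{n,k})-\rE\bR_n(\phi_{n,k})\big)$, identifying the limiting symmetric Gaussian matrix with $\Omega_{\phi_k}$ of Corollary~\ref{coro1}: the contribution of $\tfrac1n\bY_1\bY_1^{*}$ is a fixed-size Wishart fluctuation, the resolvent cross term adds a deterministic bias (which, jointly with $\tfrac1n\bY_1\bY_1^{*}$, produces the centring $\rE\bR_n(\phi_{n,k})$) and a further Gaussian term, and for non-Gaussian entries the only extra term is proportional to $(\rE|x_{11}|^{4}-3)\max_{t,s}|u_{ts}|^{2}$, which vanishes by Assumption~$\bD$ (\ref{CondU1}) — this is precisely why the limit is free of the fourth moment.

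To pass from the matrix CLT to the eigenvalues, write $l_j=\phi_{n,k}\big(1+\gamma_{kj}/\sqrt n\big)$ for $j\in J_k$, substitute into $\det\big(\lambda\alpha_k^{-1}\bI_{m_k}-[\bR_n(\lambda)]_{kk}\big)=0$, and Taylor-expand $\lambda\mapsto\lambda\alpha_k^{-1}\bI_{m_k}-[\bR_n(\lambda)]_{kk}$ about $\lambda=\phi_{n,k}$, using the versions of $\phi$ and $r$ built from $c_n,H_n$ (Remark~\ref{RP1}) so that the deterministic part cancels to $O(n^{-1})$ at $\phi_{n,k}$. Multiplying by $\sqrt n$, the equation becomes, up to $o_P(1)$,
\[
\det\!\Big(\kappa_s\,\gamma_{kj}\,\bI_{m_k}+\sqrt n\big([\bR_n(\phi_{n,k})]_{kk}-\rE[\bR_n(\phi_{n,k})]_{kk}\big)\Big)=0,
\]
where the scalar $\kappa_s$ equals $\phi_k$ times $\partial_\lambda\big(\lambda\alpha_k^{-1}-r(\lambda)\big)\big|_{\lambda=\rho_k}$ and simplifies, via (\ref{phik}) and the identities for $\um,\um_2$ in (\ref{um2}), to $\kappa_s=1+\phi_k\alpha_k\um_2(\phi_k)+\alpha_k\um(\phi_k)$. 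Hence $\gamma_k$ is asymptotically the vector of eigenvalues of $-\kappa_s^{-1}[\Omega_{\phi_k}]_{kk}$, and the asserted weak convergence follows from the continuity of the map sending a symmetric matrix to its ordered spectrum.

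The main obstacle is the joint CLT for the entries of the $M\times M$ matrix $\bR_n$, uniformly for $\lambda$ in a shrinking neighbourhood of $\rho_k$. Controlling the fluctuation and, above all, the deterministic bias of the resolvent cross term $\tfrac1{n^{2}}\bY_1\bY_2^{*}D_2^{1/2}\big(\lambda\bI-\tfrac1n D_2^{1/2}\bY_2\bY_2^{*}D_2^{1/2}\big)^{-1}D_2^{1/2}\bY_2\bY_1^{*}$ is delicate because $D_2$ need not be block-diagonal, so the spiked and non-spiked parts are genuinely coupled — in contrast with \cite{BaiYao2012} — and because $\alpha_k$, hence $\phi_k$, may diverge, so the correct normalisation has to be tracked carefully in that regime; verifying that the non-universal fourth-moment term is annihilated by Assumption~$\bD$ is the other delicate point.
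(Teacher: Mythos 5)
Your proposal follows essentially the same route as the paper's proof: Schur-complement reduction of $|\lambda\bI_p-S|=0$ to an $M\times M$ determinantal equation involving $D_1^{1/2}U_1^*\bX(\lambda\bI_n-\tfrac1n\bX^*\Gamma\bX)^{-1}\bX^*U_1D_1^{1/2}$ (equivalently your $\bR_n(\lambda)$, via the identity~(\ref{inout})), application of the G4MT to reduce to Gaussian entries, a.s.\ convergence and phase transition to locate the packet, and Taylor linearisation at $\phi_{n,k}$ together with separation of the $k$th diagonal block (the paper does this last step by multiplying the $k$th block row and column by $n^{1/4}$ and invoking Lemma~4.1 of \cite{BaiMiaoRao1991}). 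One small bookkeeping slip: $\phi_k\,\partial_\lambda\big(\lambda\alpha_k^{-1}-r(\lambda)\big)\big|_{\lambda=\rho_k}$ actually equals $\tfrac{\phi_k}{\alpha_k}\kappa_s$ rather than $\kappa_s$, but since $\sqrt n\big([\bR_n(\phi_{n,k})]_{kk}-\rE[\bR_n(\phi_{n,k})]_{kk}\big)$ likewise limits to $-\tfrac{\phi_k}{\alpha_k}[\Omega_{\phi_k}]_{kk}$ in the paper's normalisation of $\Omega_{\phi_k}$, the two factors cancel and your final conclusion $-\kappa_s^{-1}[\Omega_{\phi_k}]_{kk}$ is correct.
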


 \begin{proof}
 First, for the generalized spiked sample covariance matrix ${S}$, let $S_x=\displaystyle\frac{1}{n}\bX\bX^*$ be the standard sample covariance with sample size $n$,  and for the  $p \times p$ covariance matrix $\Sigma=T_pT_p^*$,  the corresponding  sample covariance matrix is 
 $S=T_pS_xT_p^*.$
 By singular value decomposition of $T_p$, we have 
\[ T_p= V\left(
\begin{array}{cc}
 D_1^{1/2} & \bo    \\
 \bo & D_2^{1/2}     
\end{array}
\right)U^{*},
\]
where $U$ and $V$ are unitary matrices,   $D_1$ is a diagonal matrix of the $M$ spiked eigenvalues
 and $D_2$ is the diagonal matrix of the non-spiked eigenvalues.
By the eigenequation
\[0=|\lambda\bI- S|=\left|\lambda\bI-V\left(
\begin{array}{cc}
 D^{1/2}_1 & \bo    \\
 \bo & D^{1/2}_2     
\end{array}
\right)
U^* S_xU
\left(
\begin{array}{cc}
 D^{1/2}_1 & \bo    \\
 \bo & D^{1/2}_2     
\end{array}
\right)V^*\right|,\]
set $Q=U^* S_xU$, and partition it in the same way as the form 
\[\left(
\begin{array}{cc}
 Q_{11} &  Q_{12}     \\
 Q_{21}  &  Q_{22}      
\end{array}
\right) :=
\left(
\begin{array}{cc}
 U_{1}^*S_x U_1&  U_{1}^*S_x U_2   \\
 U_{2}^*S_x U_1   & U_{2}^*S_x U_2    
\end{array}
\right),\]
 then we have 
$$
\begin{array}{l}
0=\left| \lambda \bI_p -\left(
\begin{array}{cc}
  D^{1/2}_1Q_{11} D^{1/2}_1 &   D^{1/2}_1Q_{12}   D^{1/2}_2   \\
  D^{1/2}_2Q_{21} D^{1/2}_1  &   D^{1/2}_2Q_{22}     D^{1/2}_2  
\end{array}
\right)\right|\\[4mm]
~=\left |  \lambda \bI_{p-M}- D^{1/2}_2Q_{22}     D^{1/2}_2 \right | \\[3mm]
\quad\cdot\! \left| \lambda \bI_M \!-\!  D^{1/2}_1\!Q_{11} D^{1/2}_1\!-\!  D^{1/2}_1\!Q_{12} D^{1/2}_2\!
(\lambda \bI_{p-M}
\!- \!D^{1/2}_2\!Q_{22}     D^{1/2}_2 )^{-1}  \!D^{1/2}_2\!Q_{21} \!D^{1/2}_1\right|.
\end{array}
$$
If  we only consider the sample spiked eigenvalues of $S$, $l_j, j \in J_k, k=1,\cdots, K$, then we have  $\left |  l_j \bI_{p-M}- D^{1/2}_2Q_{22}     D^{1/2}_2 \right | \neq 0$, but
\begin{eqnarray}
0&=&\bigg| l_j \bI_M \!-\!\frac1nD^{1/2}_1 U_1^*\bX\Big(  \bI_n + \frac1n \bX^*U_2 D^{1/2}_2
\nonumber\\
&~& (l_j \bI_{p-M}\!-\! \frac1nD^{1/2}_2U_{2}^*\bX\bX^* U_2     D^{1/2}_2 )^{-1} D^{1/2}_2U^*_2
\bX\Big) \bX^*U_1 D^{1/2}_1
\bigg|\non
&=&\bigg|
l_j\bI_M- \frac{l_j}{n} D^{1/2}_1 U^*_1\bX (l_j\bI_{n}- \frac1n \bX^*U_2 D_2U^*_2 \bX )^{-1} \bX^*U_1 D^{1/2}_1
\bigg|\label{eigeneqnplus}
\end{eqnarray}
by the identity 
\be
Z(Z'Z-\lambda \bI)^{-1}Z'=\bI+\lambda (ZZ'-\lambda \bI)^{-1}.\label{inout}
\ee

Set
\be
\Omega_M(\lambda, \bX)\!=\! \frac{1}{\sqrt{n}}\Big({\rm tr}\!\big( (\lambda \bI_{n}- \frac1n \bX^*\Gamma \bX)^{-1} \!\big) D_1\!-\!
 D_1^{\frac1{2}}U^*_1\bX (\lambda \bI_{n}\!-\! \frac1n \bX^*\Gamma \bX)^{-1} \bX^*U_1D_1^{\frac1{2}}\Big),\label{OmegaM}
\ee
where $\Gamma=U_2 D_2U^*_2$. Then, for any  sample spiked eigenvalues $l_j$, it follows from (\ref{eigeneqnplus}) that
\begin{align}
\quad 0
 &=\bigg|
l_j\bI_M- \frac{l_j}{n}{\rm tr}\Big((l_j \bI_{n}- \frac1n \bX^*\Gamma \bX )^{-1} \Big) D_1 +\frac{l_j}{\sqrt{n}}\Omega_M(l_j,\bX)
\bigg|, \non
&=\bigg|
\phi_{n,k}\bI_M- \frac{\phi_{n,k}}{n}{\rm tr}\Big((\phi_{n,k} \bI_{n}- \frac1n \bX^*\Gamma \bX )^{-1} \Big) D_1 \non
&\quad+B_1(l_j)+B_2(l_j)+\frac{\phi_{n,k}}{\sqrt{n}}\Omega_M(\phi_{n,k},\bX)
\bigg|
\label{eigeneq1}
\end{align}
where
 the involved $B_i(l_{j}), i=1,2$ are specified as below:
\begin{align}
B_1(l_j)&=(l_j-\phi_{n,k})\bI_M=\frac{1}{\sqrt{n}}\phi_{n,k}\gamma_{kj}\bI_M\label{Bcha1}\\[0.5mm]
B_{2}(l_{j})&=\frac{\phi_{n,k}}{n}D_1^{1 \over 2}U_1^*\bX\big(\phi_{n,k}\bI_{n}- \frac1n \bX^*\Gamma \bX\big)\!^{-1}\!\bX^*U_1D_1^{1 \over 2}\non
&\quad-\!
\frac{l_j}{n}D_1^{1 \over 2}U_1^*\bX\big(l_j\bI_{n}- \frac1n \bX^*\Gamma \bX\big)\!^{-1}\!\bX^*U_1D_1^{1 \over 2}\non
&=\frac{\phi_{n,k}}{n}D_1^{1 \over 2}U_1^*\bX\Big((\phi_{n,k} \bI_{n}-\frac1n \bX^*\Gamma \bX)^{-1} 
 -(l_{j} \bI_{n}-\frac1n \bX^*\Gamma \bX)^{-1}\Big)
\bX^*U_1D_1^{1 \over 2}\non
&\quad-\frac{l_{j} -\phi_{n,k}}{n}D_1^{1 \over 2}U_1^*\bX(l_{j} \bI_{n}-\frac1n \bX^*\Gamma \bX)^{-1}
\bX^*U_1D_1^{1 \over 2}\non
&=\!\frac 1{\sqrt{n}}\!\gamma_{kj}\phi^2_{n,k}\frac1{n}   \tr\big((\phi_{n,k} \bI_{n}-\frac1n \bX^*\Gamma \bX)^{-2} \big)D_1\non
&\quad
-\frac 1{\sqrt{n}}\!\gamma_{kj}\phi_{n,k}\frac1{n}   \tr\big((\phi_{n,k} \bI_{n}-\frac1n \bX^*\Gamma \bX)^{-1} \big)D_1
+o(\frac 1{\sqrt{n}})\non
&=\!\frac 1{\sqrt{n}}\!\gamma_{kj}\Big(\phi^2_{n,k}\um_2(\phi_{n,k})+\phi_{n,k} \um(\phi_{n,k})
\Big)D_1+
o(\frac 1{\sqrt{n}}) 
  \label{Bcha2}
\end{align}
and
  \be
   \underline{m}(\lambda)=\displaystyle\int \frac{1}{x-\lambda} \md \underline{F}(x),\quad \underline{m}_2(\lambda)=\displaystyle\int \frac{1}{(\lambda-x)^2} \md \underline{F}(x)\label{um2}
  \ee
with $\underline{F}(x)$  being LSD of  the matrix $\displaystyle\frac1n \bX^*\Gamma \bX$.

Furthermore, if  consider the $k$th diagonal block of   
 the item \[\phi_{n,k}\bI_M - \displaystyle\frac{\phi_{n,k}}{n}{\rm tr}\Big((\phi_{n,k} \bI_{n}- \frac1n \bX^*\Gamma \bX )^{-1} \Big) D_1\]
 in (\ref{eigeneq1}),
   $\um(\lambda)$ is  the Stieltjes transform of the matrix $\displaystyle\frac1n \bX^*\Gamma \bX$, which is 
the solution to 
 \[\lambda=-\frac{1}{\um}+c\int \frac{t}{1+t\um} \md H(t).\]
 Define the analogue $\um_n$ with $H$ substituted by the ESD $H_n$ \textcolor{blue}{and $c$ by $c_n$}, satisfying the equation
 \[\phi_{n,k}=-\frac{1}{\um_n}+c_n\int \frac{t}{1+t\um_n} \md H_n(t).\]
By the proof of Theorem~1.1 in 
\cite{BaiSilverstein2004},  it is found that 
\be
\frac1{n}\phi_{n,k}{\rm tr}(\phi_{n,k} \bI_{n}- \frac1{n} \bX^*\Gamma \bX )^{-1}  +\phi_{n,k}\um_n(\phi_{n,k})=o(\frac{1}{ \sqrt n}).\label{fact}
\ee
Then,
  by the similar derivation of (5.1) in Bai and Yao (2008), we obtain that the phase transition of $l_j$,  $\phi_{n,k}$, asymptoticly satisfies the equation 
\be \phi_{n,k}+\phi_{n,k}\um_n(\phi_{n,k})\alpha_k=0.\label{eqn0}
\ee
Therefore,
to complete the proof of Theorem~\ref{CLT}, it is needed to derive the limiting distributions of $\Omega_M(\phi_{n,k},\bX)$. So 
the theoretical tool named G4MT is proposed in the following part, which is used to prove the  limiting distributions of $\Omega_M(\phi_{n,k},\bX)$. For  the consistence of reading, we only introduce the theorem here, but postpone the proof to the Supplement~\ref{SG4MT}.

\subsection{Generalized Fourth Moment Theorem}  \label{Sec3.2}

The G4MT is established in  the following theorem,  which shows that the limiting distributions of  the spiked  eigenvalues of a generalized spiked covariance matrix is independent of the actual population distributions  provided the samples to satisfy the Assumptions $\bA \sim {\bf E}$. 
\begin{theorem}[{\bf G4MT}]\label{thm2}
Assuming that $\bX$ and $\bY$ are two sets of double arrays satisfying Assumptions $\bA \sim {\bf E}$,  then it holds that
 $ \Omega_M(\phi_{n,k},\bX) $ and $\Omega_M(\phi_{n,k}, \bY)$ have the same limiting distribution, provided one of them has.
\end{theorem}


By Theorem \ref{thm2}, we may assume that $\bX$ consists of entries of {\rm i.i.d.} standard random variables in deriving the limiting distribution of $\Omega_M(\phi_{n,k}, \bX)$. Namely, we have the following Corollary.
\begin{corollary}\label{coro1}
If $\bX$ satisfies the Assumptions $\bA \sim {\bf E}$, let
\be\theta_k=\alpha_k^2\um_2(\phi_k),\label{thetak}
\ee
 then $\Omega_M(\phi_{n,k}, \bX)$ tends to a limiting distribution of an $M\times M$ Hermitian matrix {$\Omega_{\phi_{k}}$,  where $\frac{1}{\sqrt{\theta_k}}\Omega_{\phi_{k}}$ is  Gaussian Orthogonal Ensemble (GOE) for the real case, with 
the entries above the diagonal being ${\rm i.i.d.} \mathcal{N}(0,1)$ and the entries on the diagonal being ${\rm i.i.d.} \mathcal{N}(0,2)$. 
For the complex case, the $\frac{1}{\sqrt{\theta_k}}\Omega_{\phi_{k}}$ is  GUE, whose  entries  are all ${\rm i.i.d.} \mathcal{N}(0,1)$.}
\end{corollary}
The proof  of Corollary~{\ref{coro1}} is detailed in Supplement~{\ref{SOmega}}.

\begin{remark}\label{rmk31}
Actually, for the cases where the Assumption~{\bf D} is not met, for example, the population covariance matrix is a diagonal matrix or has a diagonal block independent structure, the conclusion of Corollary~\ref{coro1}  is not valid. In this case, we need the condition that the 4th moment is finite, so that the conclusion still holds, but the variance of the element $\omega_{ij}$ of $\Omega_{\phi_{k}}$ becomes
\be
{\rm Var}(\omega_{ij})=\left\{
\begin{array}{cc}
2\theta_k+\beta_x\nu_k,  &    i=j \\
\theta_k,  &      i\neq j,
\end{array}
\right.
\label{varw}
\ee
where $\theta_k$ is define in (\ref{thetak}),
 $\beta_x=(\sum\limits_{t=1}^pu_{ti}^4\rE|x_{11}|^4-3)$ with  ${\bf u}_i=(u_{1i}, \cdots, u_{pi})'$ being the $i$th column of the matrix $U_1$ (If the covariance matrix is a diagonal matrix, then $\beta_x=(\sum\limits_{t=1}^pu_{ti}^4\rE|x_{11}|^4-3)=\rE|x_{11}|^4-3$.)
and
 $\nu_k=\alpha_k^2/\big(\phi_k(1+c{\tilde m}(\phi_k))\big)^2$ is derived in the Supplement~\ref{SOmega}. 
 In fact, $\tilde m(\phi_{k})$ is the limit of $\tilde m_p(\phi_{k})= \frac{1}{p-M} \sum\limits_{q=M+1}^p \frac{d_q}{l_q-\phi_{k}}$ with 
  $d_q$ being the $q$th diagonal element of the matrix $D_2$, and $l_q$'s are the eigenvalues of  the matrix $\displaystyle\frac1n D_2^{1 \over 2 }U_2^* \bX\bX^* U_2D_2^{1 \over 2 }$.  If $D_2=\bI_{p-M}$, then $\tilde m(\phi_{k})$ is actually the Stieltjes transform of the LSD of  the matrix  $\displaystyle\frac1n U_2^*  \bX \bX^* U_2$. 
\end{remark}

\subsection{Completing the proof of Theorem~3.1}  \label{Sec3.3}

Now, we continue to the previous proof of Theorem~\ref{CLT}.
For every sample spiked eigenvalue, $l_j, j \in J_k, k=1,\cdots, K$, it follows from equation (\ref{eigeneq1}) that
\begin{align}
\quad 0
&=\bigg|
\phi_{n,k}\bI_M- \frac{\phi_{n,k}}{n}{\rm tr}\Big((\phi_{n,k} \bI_{n}- \frac1n \bX^*\Gamma \bX )^{-1} \Big) D_1 \non
&\quad+\frac{\phi_{n,k}}{\sqrt{n}}\Omega_M(\phi_{n,k},\bX)+B_1(l_j)+B_2(l_j)
\bigg|\non
\!&=\!\bigg|
\phi_{n,k}\bI_M\!+\!\phi_{n,k}\um_n(\phi_{n,k})D_1\!+\!\frac1{\sqrt{n}}\phi_{n,k}\Omega_M(\phi_{n,k},\bX)\non
&\quad +  \frac{1}{\sqrt{n}}\phi_{n,k}\gamma_{kj}\Big(\bI_M+\!\big(\phi_{n,k}\um_2(\phi_{n,k})+\um(\phi_{n,k})
\big)D_1\Big)\!+\!o(\frac{1}{\sqrt{n}}) \bigg| 
\label{eigeneq2}
\end{align}
by the equations  (\ref{Bcha1}), (\ref{Bcha2}) and (\ref{fact}).  

By the  G4MT, we can derive  
the limiting distribution of $\Omega_M(\phi_{n,k}, \bX)$ under the assumption of  Gaussian entries.
  Details of the proof for the  limiting  distribution of $\Omega_M(\phi_{n,k}, \bX)$ is provided in Supplement~{\ref{SOmega}}. 
Therefore, 
applying  Skorokhod strong representation theorem (see \cite{Skorokhod1956}, \cite{HuBai2014}), 
we  may assume that the  convergence 
of $\Omega_M(\phi_{n,k}, \bX)$  and (\ref{eigeneq2}) are in this sense almost surely by choosing an appropriate probability space. 

To be specific, by (\ref{eigeneq2}),  it yields
\begin{seqnarray}
&&0= \left|
{\tiny \left(
\begin{array}{cccc}
\phi_{n,k}\!\big(\!1\!+\!\alpha_1\um_n(\phi_{n,k})\!\big)\!\bI_{m_1}&0& \cdots  &0  \\
 0&\ddots  && \\
 && \!\phi_{n\!,k}\!\big(\!1\!+\!\alpha_k\um_n(\phi_{n,k})\!\big)\!\bI_{m_k}&   \vdots \\
   \vdots &  && \\
   &  &\ddots& 0\\
      &  && \\
  0&\cdots &0  &\!\phi_{n,k}\!\big(\! 1\!+\!\alpha_K\um_n(\phi_{n,k})\!\big)\!\bI_{m_K}
\end{array}
\right)}\right.\\
&&+\frac{\gamma_{kj}\phi_{n,k}}{\sqrt{n}}
{\tiny\left(
\begin{array}{cccc}
\big(\!1\!+\!\alpha_1\!(\!\phi_{k}\um_2\!+\!\um\!)\!\big)
\bI_{m_1} &0& \cdots  &0  \\
 0&\ddots  && \\
 &&
\big(\!1\!+\!\alpha_k\!(\!\phi_{k}\um_2\!+\!\um\!)\!\big)\bI_{m_k} &   \vdots \\
  \vdots &  && \\
    &  &\ddots& 0\\
      &  && \\
  0&\cdots &0  & 
\big(\!1\!+\!\alpha_K\!(\!\phi_{k}\um_2\!+\!\um\!)\!\big)\bI_{m_K} 
\end{array}
\right)}\\
&& \left.+\!\frac1{\sqrt{n}}\phi_{n,k}\Omega_M(\phi_{n,k},\bX)
\!+\!o(\frac{1}{\sqrt{n}}) \right|. 
\end{seqnarray}
where $\um, \um_2$ are the simplified notations of $\um(\phi_{n,k})$ and $\um_2(\phi_{n,k})$, respectively.

For the population eigenvalues $\alpha_u$ in the $u$th diagonal block of $D_1$, 
if $u\neq k$,  $\phi_{n,k}\big(1+\alpha_u\um_n(\phi_{n,k})\big)$ keeps away from 0, 
 which means  $\phi_k(1+\alpha_u\um(\phi_k))\ne 0$ if $\alpha_u$ is fixed; or $\lim(\phi_{n,k}\um(\phi_{n,k}))=-1$ if $\alpha_u\to\infty$, then we also have $\phi_{n,k}\big(1+\alpha_u\um_n(\phi_{n,k})\big) \to \phi_k-\alpha_u \neq 0$, when $u \neq k$.  
Moreover, $\phi_{n,k}\big(1+\alpha_k\um_n(\phi_{n,k})\big) = 0$ by definition. Then, multiplying $n^{\frac{1}{4}}$ to the  $k$th  block row and $k$th block column of the above equation, by Lemma~4.1 in \cite{BaiMiaoRao1991},
 it follows that 
\[
0=\Bigg|\phi_{n,k}\left[\Omega_M(\phi_{n,k},\bX)\right]_{kk}+
{\gamma_{kj}\phi_{n,k}\big(1+\phi_{n,k}\alpha_k\um_2(\phi_{n,k})+\alpha_k\um(\phi_{n,k})\big)\bI_{m_k}}+\!o(1) \Bigg|, 
\]
where $\left[~\cdot ~\right]_{kk}$ is  the $k$th diagonal block of  a matrix  corresponding to the indices $\{i,j \in J_k\}$

Obviously, $\gamma_{kj}$ asymptotically satisfies the following equation  
\[
\Bigg|\phi_{n,k}\left[\Omega_M(\phi_{n,k},\bX)\right]_{kk}+
{\gamma_{kj}\phi_{n,k}\big(1+\phi_{n,k}\alpha_k\um_2(\phi_{n,k})+\alpha_k\um(\phi_{n,k})\big)\bI_{m_k}} \Bigg|=0, 
\]

This shows that $\gamma_{jk}$ tends to the limit of the eigenvalue of 
$-\displaystyle\frac1{\kappa_s}\left[\Omega_{\phi_{k}}\right]_{kk}$,  where 
\[\kappa_s=1+\phi_{n,k}\alpha_k\um_2(\phi_{n,k})+\alpha_k\um(\phi_{n,k})\]
and 
$\Omega_{\phi_{k}}$ is the limit of $\Omega_M(\phi_{n,k},\bX)$ defined in Corollary~{\ref{coro1}}.  Because limiting behavior keeps orders of the variables, we claim that the $m_k$ ordered variables tend to the $m_k$ ordered eigenvalues of the matrix $-\displaystyle\frac1{\kappa_s}\left[\Omega_{\phi_{k}}\right]_{kk}$.

By the strong representation theorem, we conclude that the $m_k$-dimensional real vector 
$(\gamma_{kj}, j\in J_k)$ converges  weakly to the joint distribution  of  the $m_k$ eigenvalues of the Gaussian random matrix 
\[-\frac1{\kappa_s}\left[\Omega_{\phi_{k}}\right]_{kk}\]
for each distant generalized spiked eigenvalue.
Then, the CLT for each distant  spiked eigenvalue of a
generalized covariance matrix  is obtained.
\end{proof}

In fact, some  exceptional  cases are   not included in the Theorem~{\ref{CLT}}, for example,   $T_p^*T_p$ is a diagonal matrix or a diagonal block matrix that doesn't satisfy the Assumption~{\bf D}. 
For  such special cases,  the asymptotic distribution of  the bulk of spiked eigenvalues is involved with the 4th moment of the random variable corresponding to  $\bX$. So the constraints of the  bounded 4th moments and finite  spiked eigenvalues are necessary conditions for the  assumption of the diagonal or diagonal block population. The following remark provide the asymptotic distribution of the sample spiked eigenvalues of a diagonal block covariance matrix. 
By this remark,  it also shows that the condition of diagonal  block independence 
is necessary  for  the result of 
Bai and Yao (2012).
\begin{remark}
Suppose that  $\bX$ satisfies the Assumptions $\bA, {\bf B}, {\bf C}$ and  ${\bf E}$, excluding the assumption ${\bf D}$, but the 4th moment of $\bX$  and all the spiked eigenvalues are bounded. Then
 all the conclusions of Theorem~\ref{CLT} still holds, but the limiting distribution of  $\Omega_M(\phi_{n,k}, \bX)$ turns to 
 an $M\times M$ Hermitian matrix $\Omega_{\phi_{k}}=(\omega_{st})$
defined in Remark~\ref{rmk31}.
 \label{rmk2}
\end{remark}
 This remark is used for the case of  non-Gaussian assumptions when the population covariance matrix has a diagonal or diagonal block  structure.

%
%
%
%


\section{ Simulation Study}\label{Sim}

Simulations are conducted in this section to  evaluate the performance of  our proposed method.  Two cases of the population covariance matrix  structure are considered: On one hand, the {\rm {\bf Case I}} that  $\Sigma$ is a diagonal matrix shows that our method can provide the similar result to  the one in Bai and Yao (2012) (even better for some cases under the non-Gaussian assumption), when the Assumption~{\bf D} is not satisfied; On the other hand, the {\rm {\bf Case~II}} is  provided to illustrate the priority of the proposed method to the one in Bai and Yao (2012) for the general  form of $\Sigma$. They are detailed as below:
\begin{description}
\item[ Case I:] The matrix $\Sigma={\rm diag}(4,3,3,0.2,0.2,0.1,1, \cdots,1)$ is a finite-rank perturbation of a identity matrix $\bI_p$ with  the spikes $(4,3,0.2,0.1)$ of the multiplicity $(1,2,2,1)$,  thus $K=4$ and $M=6$ as proposed in Bai and Yao (2008). 
\item[Case II:]  The matrix  $\Sigma=U_0 \Lambda U_0^*$ is a general positive definite matrix,  where   $\Lambda$ is a diagonal matrix  with
the spikes    $(4,3,0.2,0.1)$ of the multiplicity $(1,2,2,1)$ as defined in {\rm {\bf Case I}} and  $U_0$ is the matrix composed of eigenvectors
 of the following matrix
\begin{align}
\left(
\begin{array}{ccccc}
  1 & \rho & \rho^2 & \cdots & \rho^{p-1} \\
  \rho & 1 & \rho & \cdots & \rho^{p-2} \\
  \ldots & \ldots &  &  &  \\
  \rho^{p-1} & \rho^{p-2} & \cdots & \rho & 1 \\
\end{array}
\right).
\end{align}
where $\rho=0.5$.
\end{description}

 For each case,  the following  population assumptions are studied:

\begin{description}
\item[ Gaussian Assumption:]  $x_{ij}$ are ${\rm i.i.d.}$ sample from  standard Gaussian population; 
\item[ Binomial  Assumption:]  $x_{ij}$ and $y_{ij}$ are  ${\rm i.i.d.}$ samples from the binary variables valued at $\{-1, 1\}$ with equal probability $1/2$, and $\beta_x=-2$. 
\end{description} 

The simulated results are depicted as follows  with 1000 replications  at  the
 values of  $p=500, n=1000$.

 \subsection{{\rm {\bf Case~I}} under Gaussian Assumption}
 As described in {\rm \bf Case~I}, we have the spikes $\alpha_1=4$, $\alpha_2=3$, $\alpha_3=0.2$  and $\alpha_4=0.1$. 
 Assume that  the Gaussian Assumption hold and  let $l_{1},\cdots, l_{p} $ be the sorted sample eigenvalues of the matrix $S$ defined in (\ref{S}). Then by the Theorem~\ref{CLT}, we obtain the limiting results as below. 
\begin{itemize}
\item First, take the single  population  spikes $\alpha_1=4$ and $\alpha_4=0.1$ into account, and consider the largest sample eigenvalue $l_{1}$ , we have :
\[\gamma_1= \sqrt{n}\Big(\frac{l_{1}(S)}{\phi_{n,1}}-1\Big) \rightarrow N(0, \sigma_1^2)\]
where 
\[\phi_{n,1}=4.667;\quad  \sigma_1^2=1.390 .\]

Similarly, for
the least eigenvalues   $l_{p}$, we have 
\[\gamma_4= \sqrt{n}\Big(\frac{l_{p}(S)}{\phi_{n,4}}-1\Big) \rightarrow N(0, \sigma_4^2)\]
where 
\[\phi_{n,4}= 0.044;\quad  \sigma_4^2=3.950.\] 

\item Second, for  the spikes $\alpha_2=3$ with multiplicity 2,  consider the sample eigenvalue $l_{2}$ and
 $l_{3}$, we obtain that the two-dimensional random vector 
\[\gamma_2=\left(\gamma_{21}, \gamma_{22}\right)'=\left( \sqrt{n}\Big(\frac{l_{2}(S)}{\phi_{n,2}}-1\Big),
\sqrt{n}\Big(\frac{l_{3}(S)}{\phi_{n,2}}-1\Big)\right)' \]
converges to the eigenvalues of random matrix $-\frac1{\kappa_s}\left[\Omega_{\phi_{2}}\right]_{22}$,
where  $\phi_{n,2}=3.750$, $\kappa_s=1.419$ for the spike $\alpha_2=3$.  Furthermore, the
matrix 
$\left[\Omega_{\phi_{2}}\right]_{22}$
 is a $2\times 2$ symmetric matrix with the independent Gaussian entries, of which the $(i,j)$ element has mean zero and the variance given by 
 \[var(w_{ij})=
\left\{
\begin{array}{cc}
 2.266, &~ \text{if}~ i=j  \\
 1.133,  &  ~ \text{if}~ i \neq j 
\end{array}
\right.\]

Similarly, for  the spikes $\alpha_3=0.2$ with multiplicity 2, we consider the sample eigenvalue $l_{p-2}$ and
 $l_{p-1}$, we obtain that the two-dimensional random vector 
\[\gamma_3=\left(\gamma_{31}, \gamma_{32}\right)'=\left( \sqrt{n}\Big(\frac{l_{p-2}(S)}{\phi_{n,3}}-1\Big),
\sqrt{n}\Big(\frac{l_{p-1}(S)}{\phi_{n,3}}-1\Big)\right)' \]
converges to the eigenvalues of random matrix $-\frac1{\kappa_s}\left[\Omega_{\phi_{3}}\right]_{33}$,
where  $\phi_{n,3}=0.075$, $\kappa_s=1.659$ for the spike $\alpha_3=0.2$.  Furthermore, the
matrix 
$\left[\Omega_{\phi_{3}}\right]_{33}$
 is a $2\times 2$ symmetric matrix with the independent Gaussian entries, of which the $(i,j)$ element has mean zero and the variance given by 
 \[var(w_{ij})=
\left\{
\begin{array}{cc}
9.004, &~ \text{if}~ i =j  \\
 4.502,  &  ~ \text{if}~ i \neq j  
\end{array}
\right.
\]
 \end{itemize}
The simulated empirical distributions of the spiked eigenvalues from Gaussian assumption under {\rm \bf Case~I}  are drawn in Figure~\ref{fig:1} in contrast to their corresponding limiting distributions.

 \begin{figure}[htbp]
\begin{center}
\includegraphics[width = .37\textwidth]{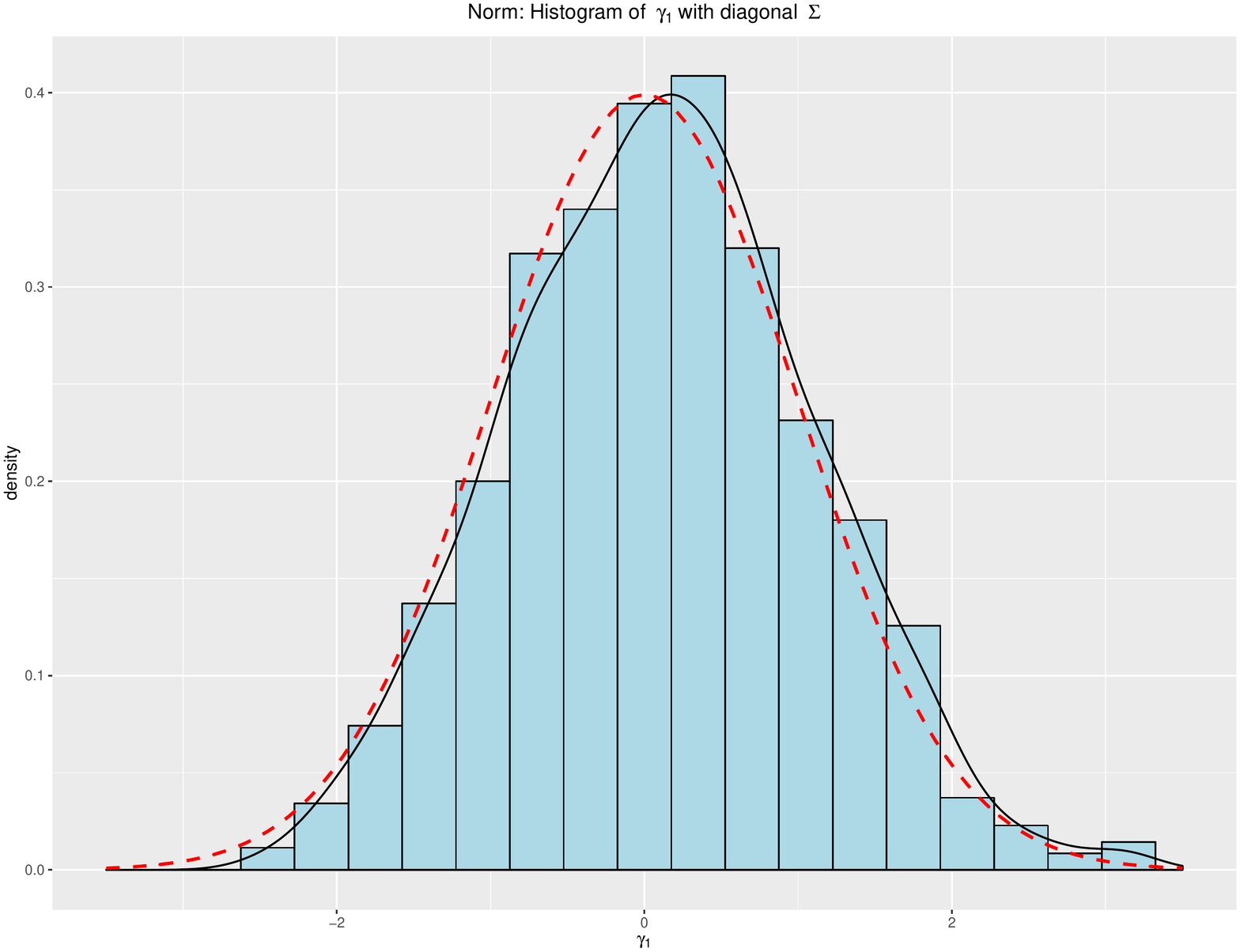}\quad\quad \includegraphics[width = .37\textwidth] {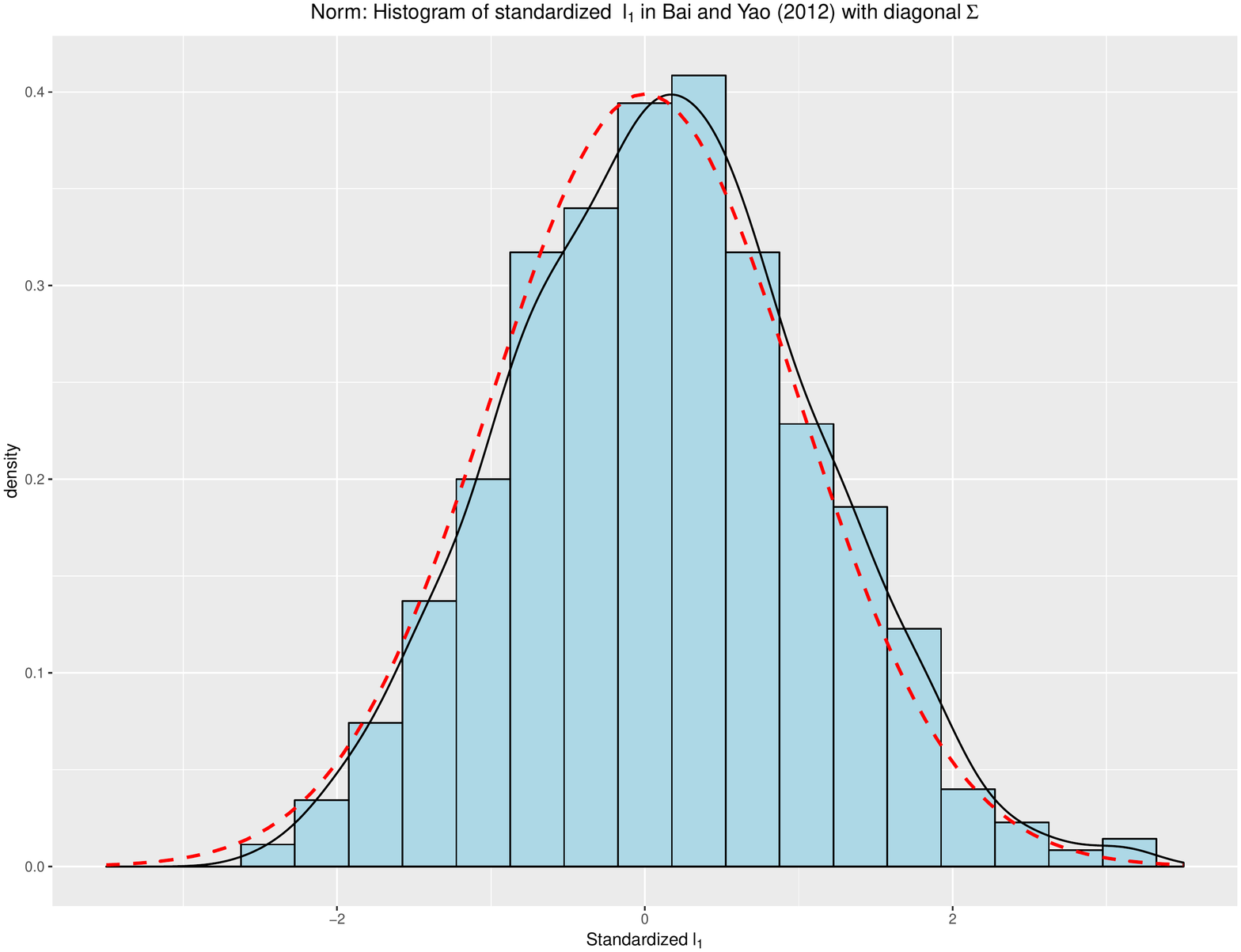}\\
\includegraphics[width = .37\textwidth]{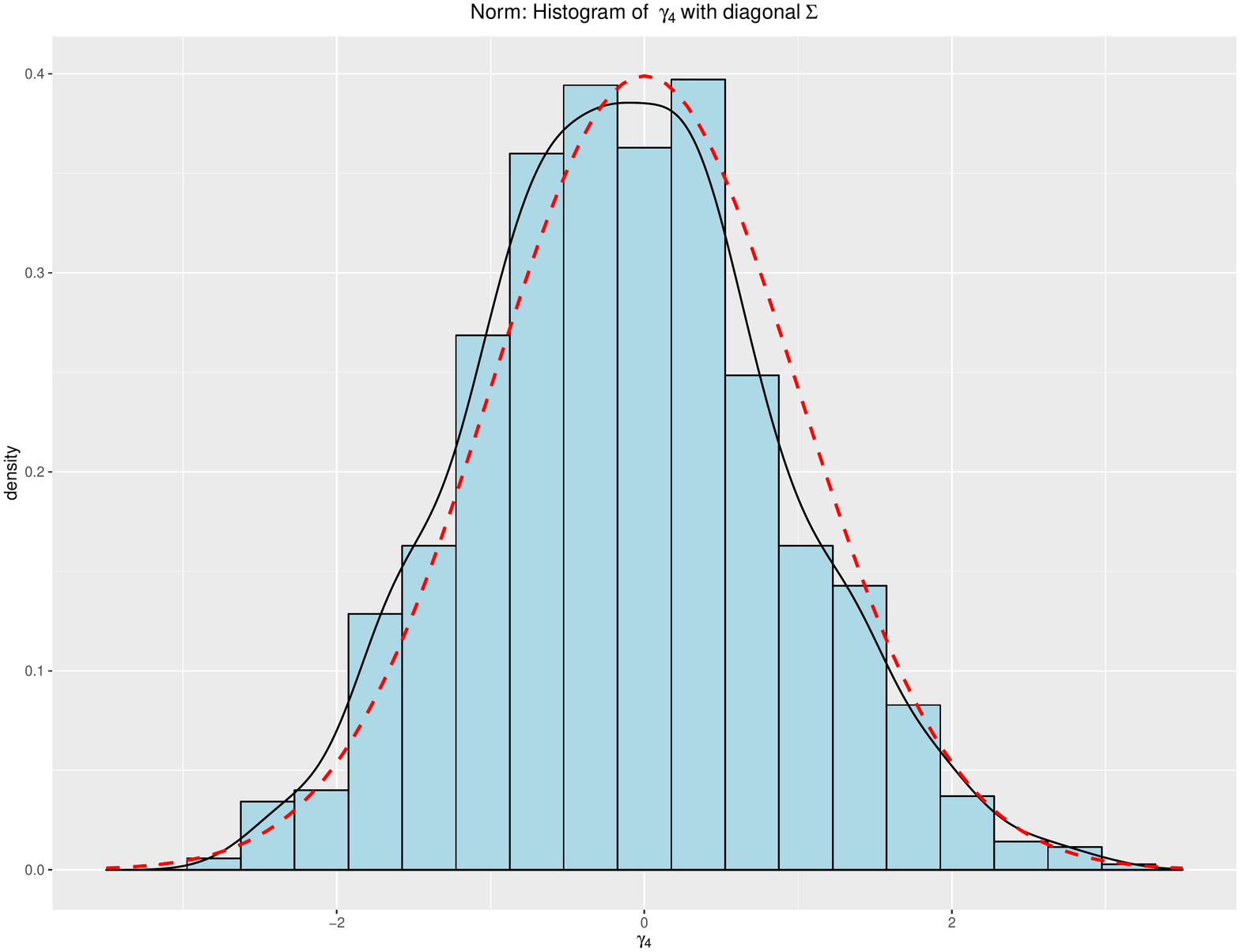}\quad\quad \includegraphics[width = .37\textwidth] {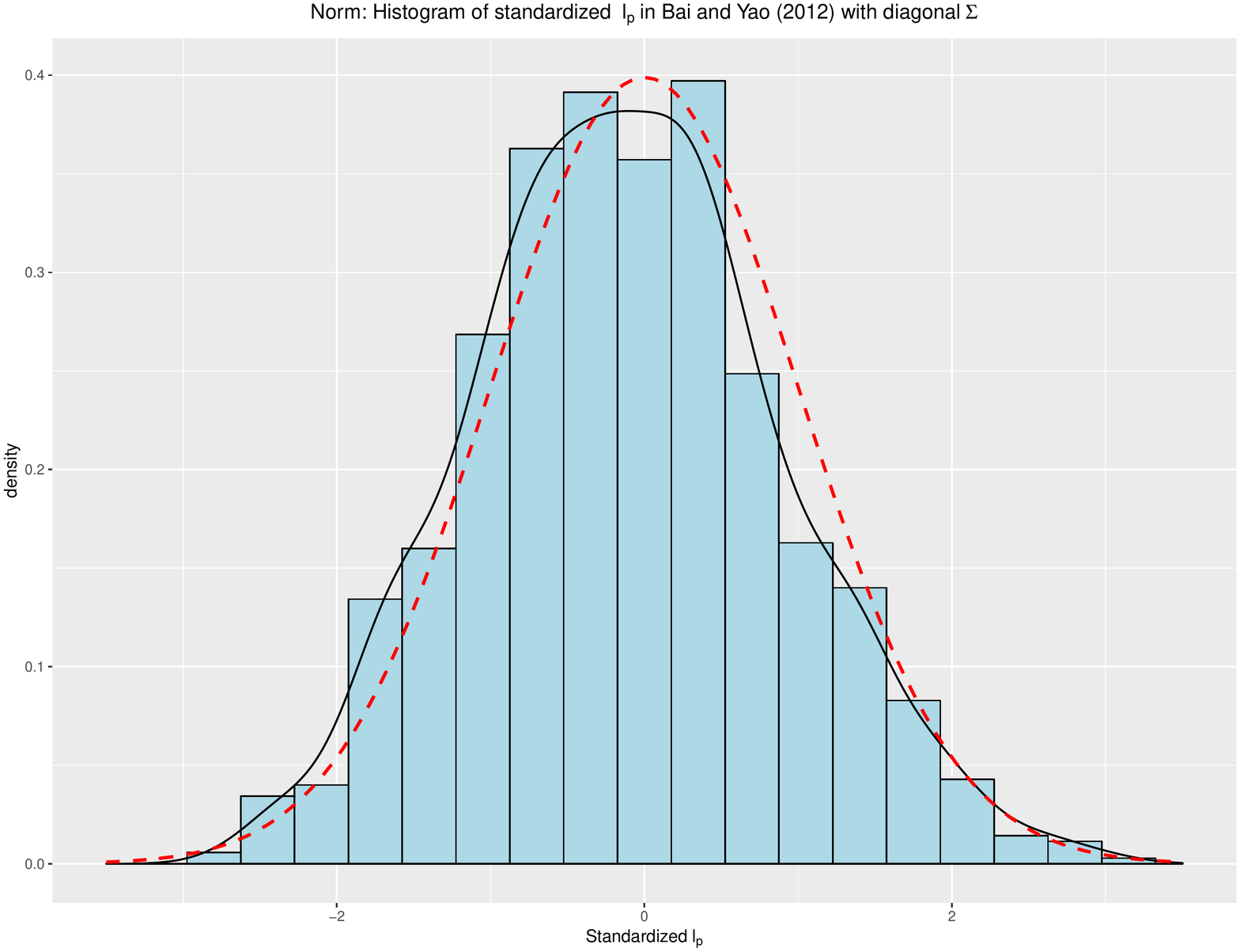}\\
\includegraphics[width = .3\textwidth]{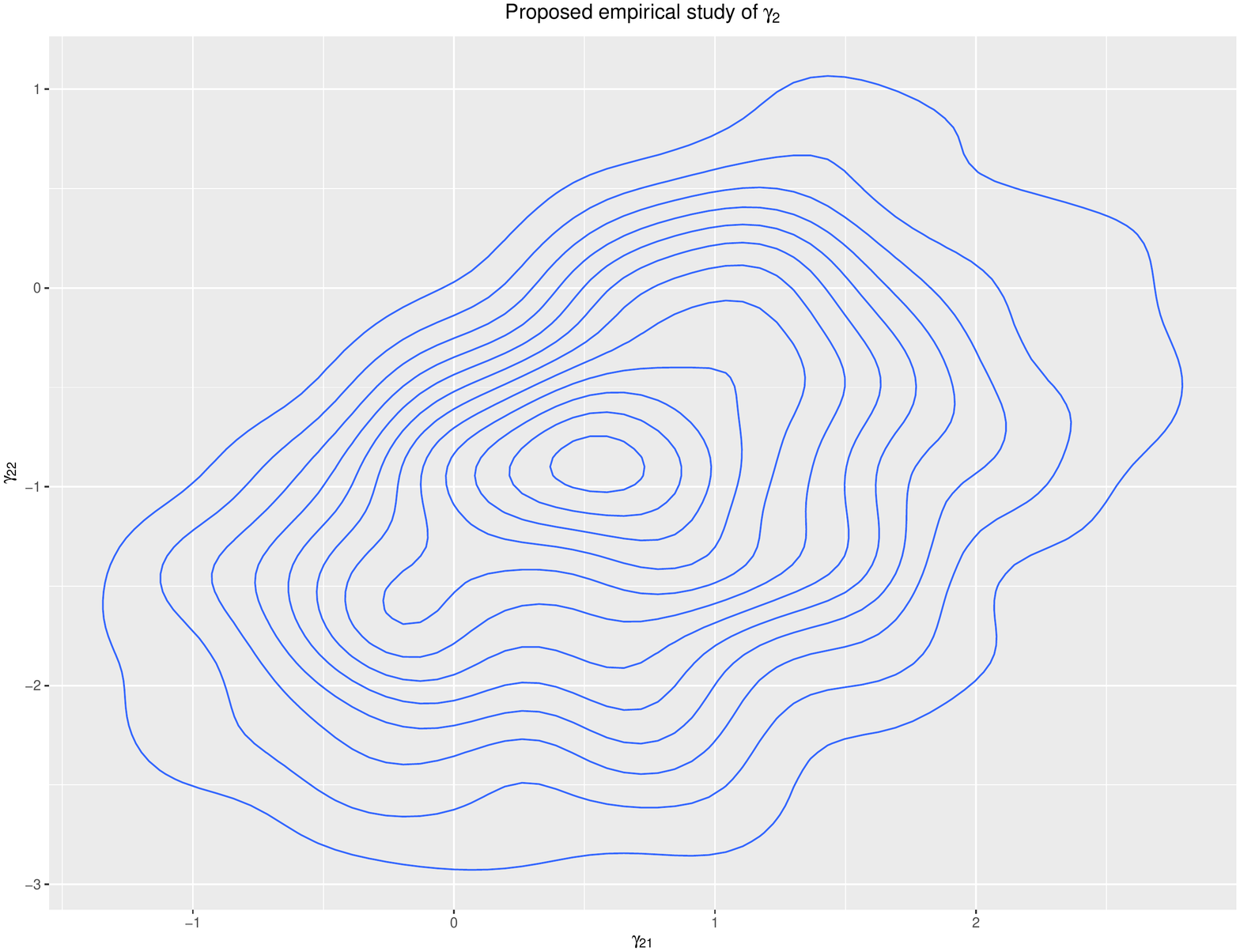}\quad \includegraphics[width = .3\textwidth] {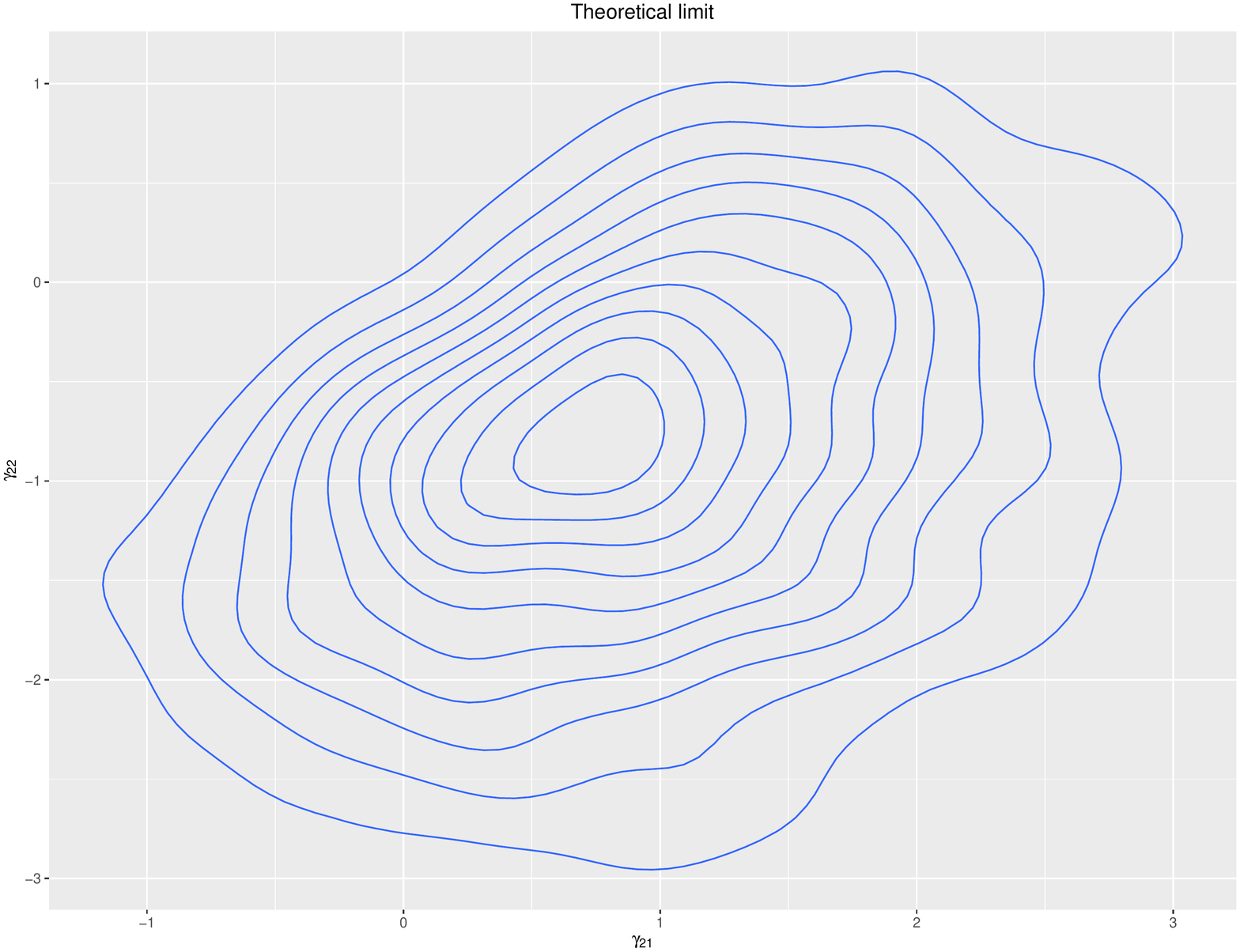}\quad \includegraphics[width = .3\textwidth] {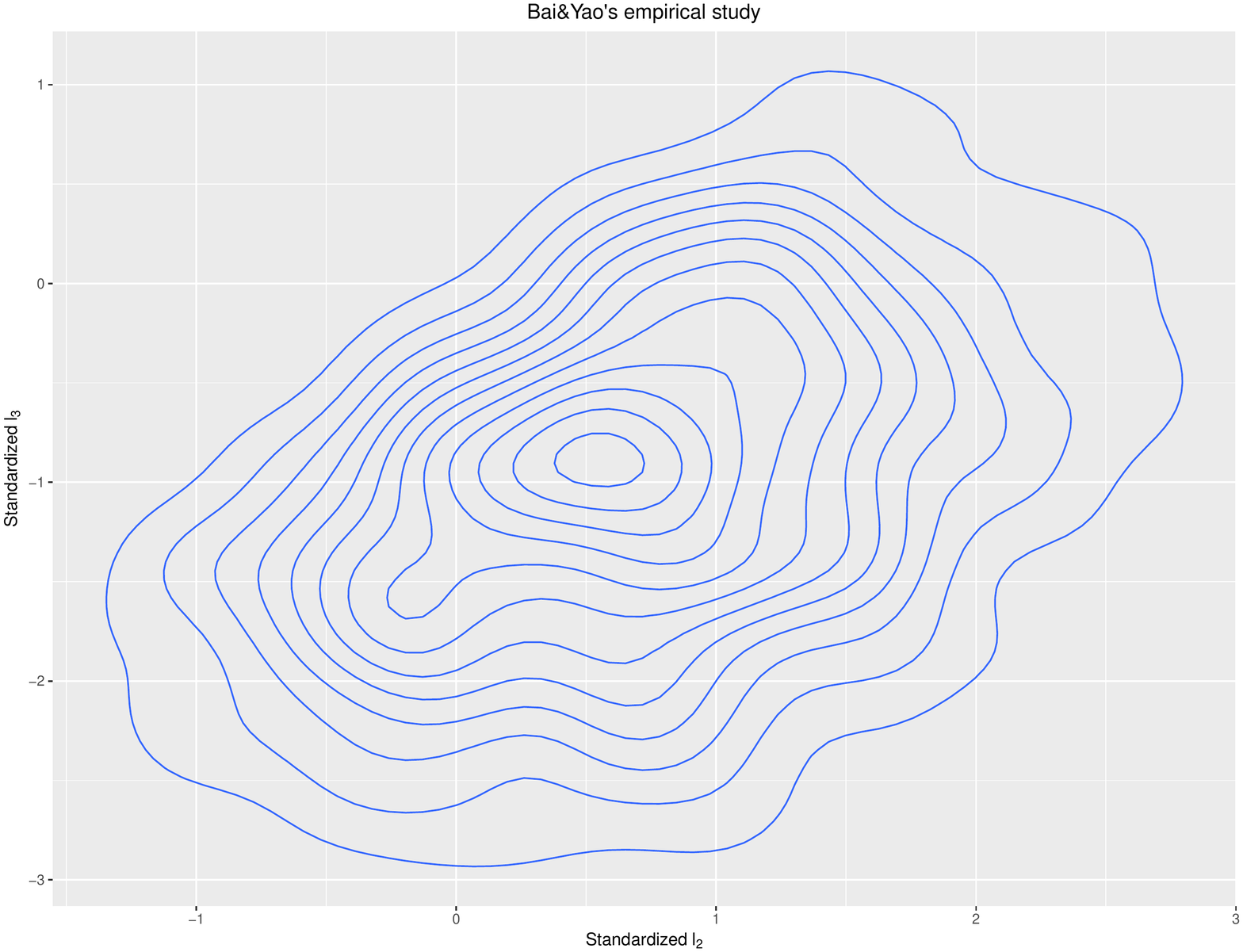}\\
\includegraphics[width = .3\textwidth]{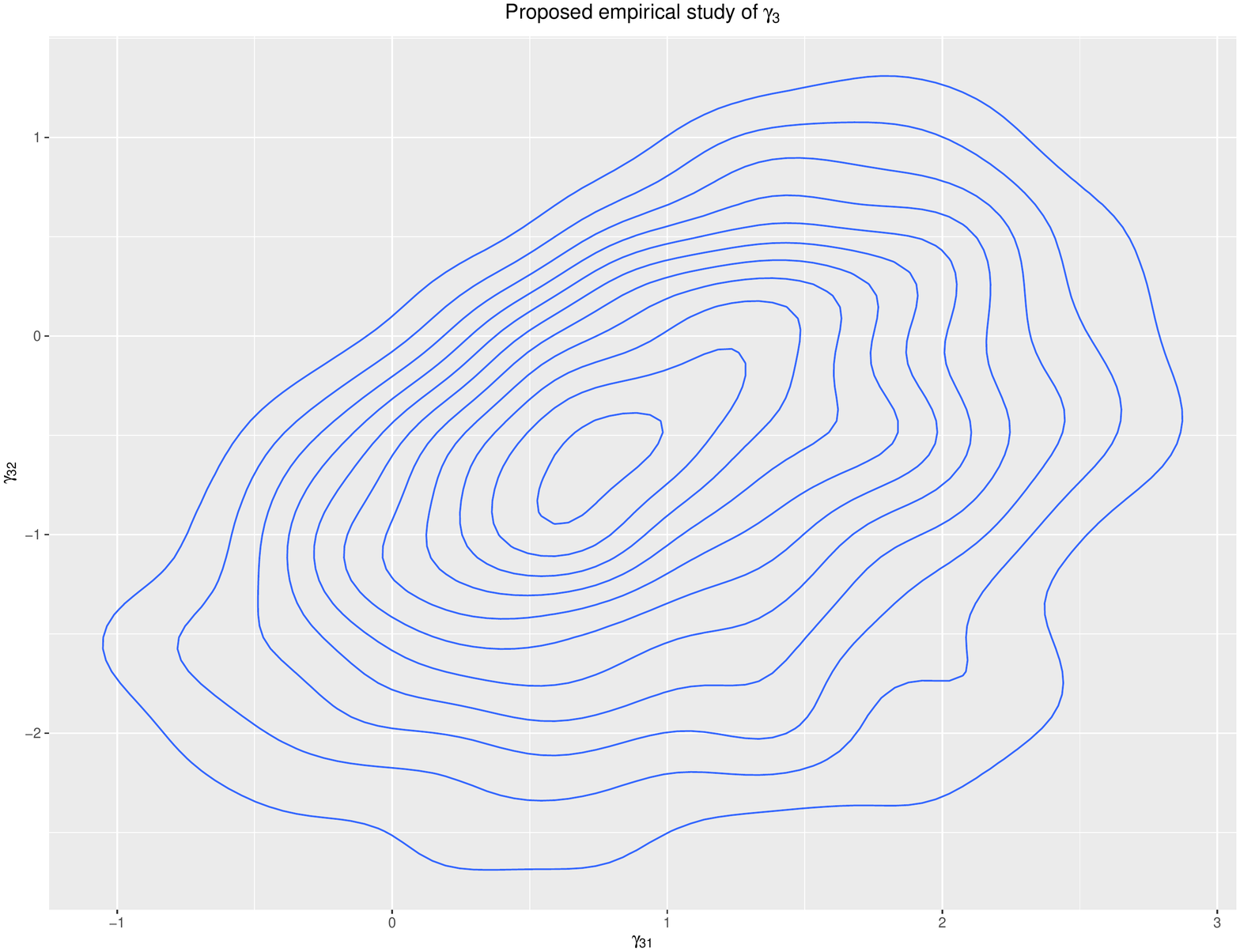}\quad \includegraphics[width = .3\textwidth] {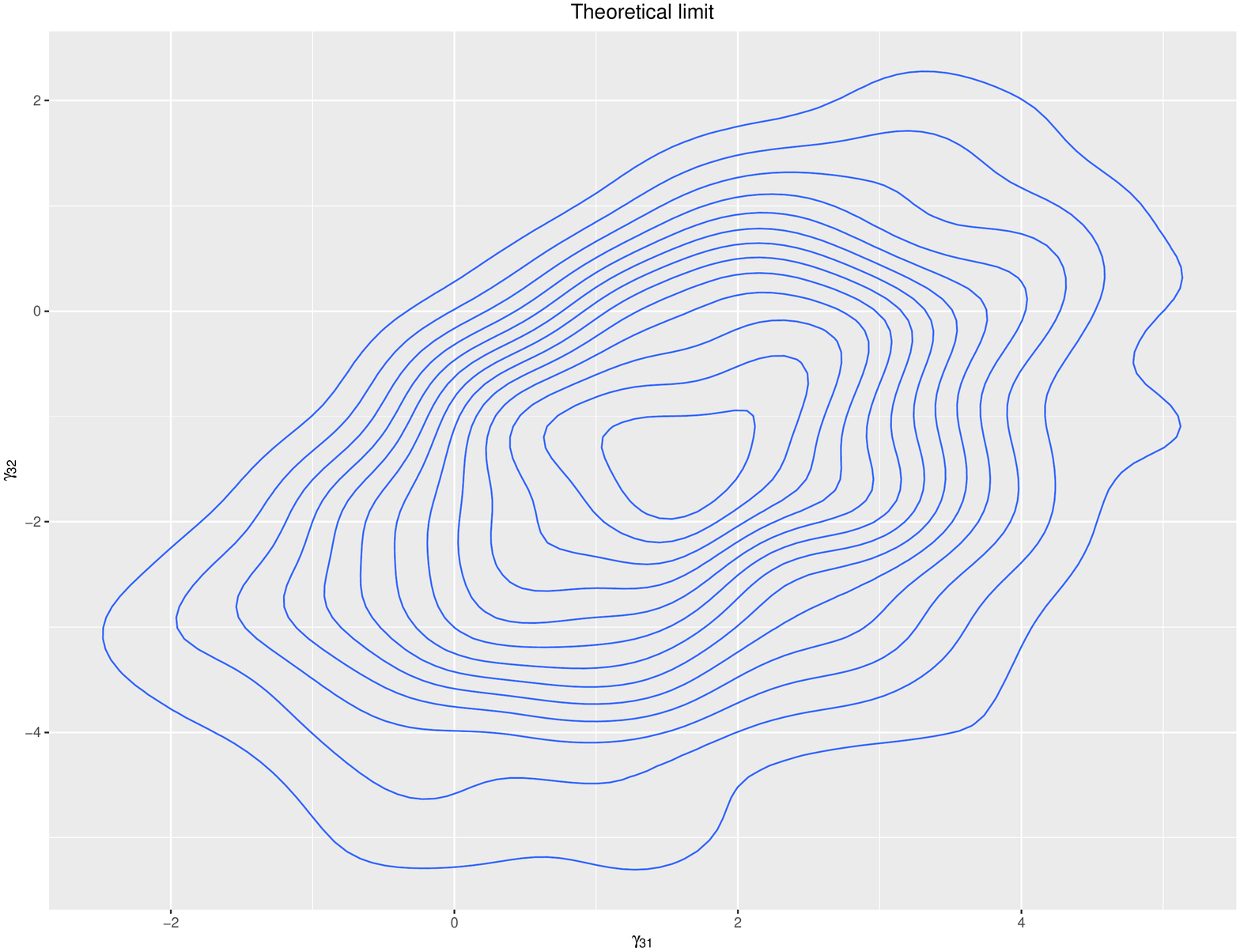}\quad \includegraphics[width = .3\textwidth] {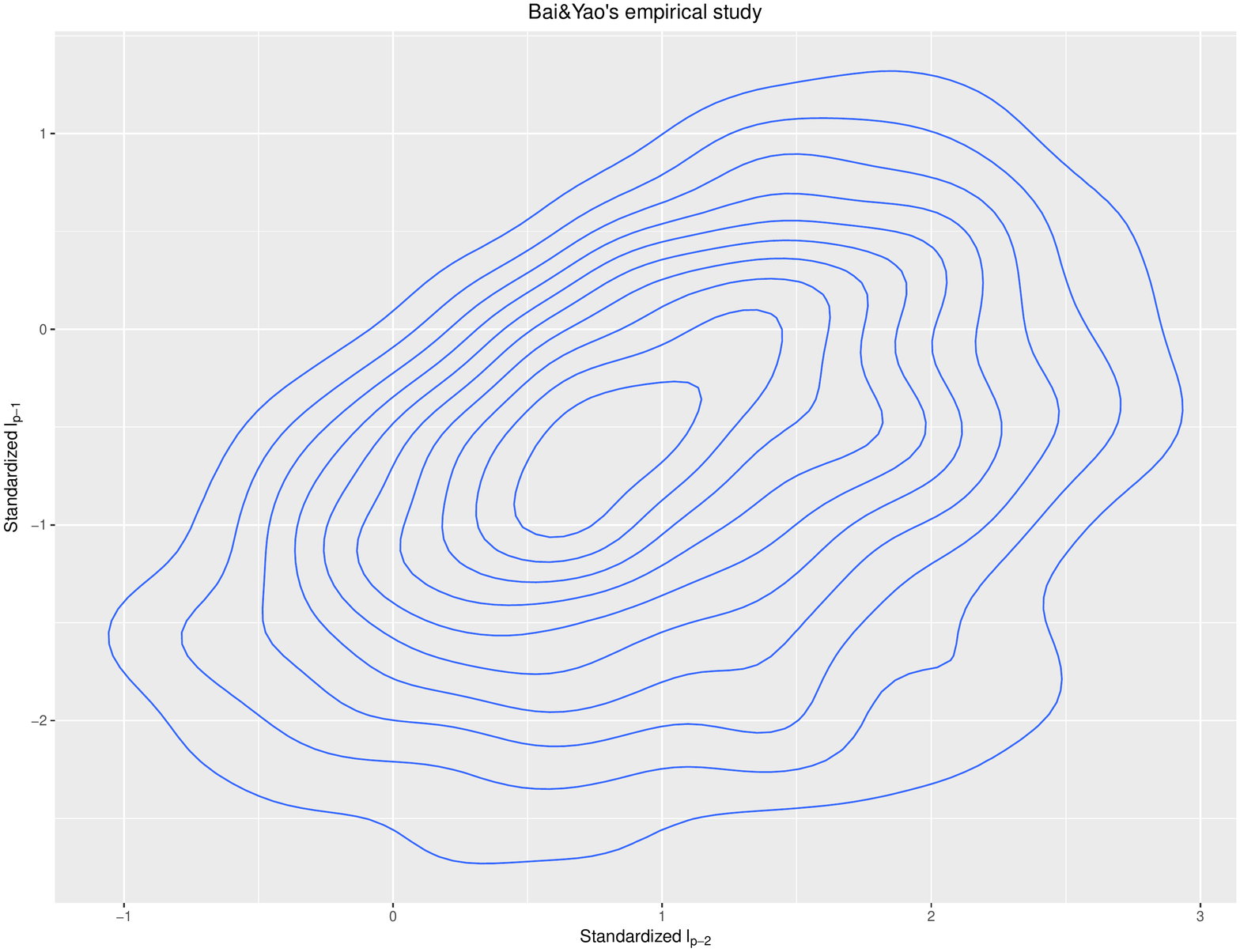}
\caption{ {\rm {\bf Case~I}} under Gaussian Assumption. Upper two panels show that the  histograms of the proposed $\gamma_1$ and $\gamma_4$ comparing to the ones of standardized $l_1$ and $l_p$ in Bai and Yao (2012) , as well as  the empirical densities (solid lines) and their Gaussian limits (dashed lines). Lower two panels show the  contour plots: the left ones are the proposed empirical joint density function of $(\gamma_{i1}, \gamma_{i2}), i=2,3.$; the middle ones are their corresponding limits; the right ones are the  empirical joint density function of
standardized $(l_2, l_3)$ and $(l_{p-2}, l_{p-1})$.
 }\label{fig:1}
\end{center} 
\end{figure}

 \subsection{{\rm {\bf Case~I}} under Binomial Assumption}
If $\{x_{ij}\}$  are from Binomial Assumption in the {\rm \bf Case~I}, then it is obtained by the Remark~\ref{rmk2}  as below: 
\begin{itemize}
\item First, for the single  population  spikes $\alpha_1=4$ and $\alpha_4=0.1$ , we have :
\[\gamma_1= \sqrt{n}\Big(\frac{l_{1}(S)}{\phi_{n,1}}-1\Big) \rightarrow N(0, \sigma_1^2)\]
where 
$\phi_{n,1}=4.667$, $\sigma_1^2=0.074$, 
and
\[\gamma_4= \sqrt{n}\Big(\frac{l_{p}(S)}{\phi_{n,4}}-1\Big) \rightarrow N(0, \sigma_4^2)\]
with
$\phi_{n,4}= 0.044,  \sigma_4^2=2.414.$ 

\item Second, for  the spikes $\alpha_2=3$ with multiplicity 2, we obtain
\[\gamma_2=\left(\gamma_{21}, \gamma_{22}\right)'=\left( \sqrt{n}\Big(\frac{l_{2}(S)}{\phi_{n,2}}-1\Big),
\sqrt{n}\Big(\frac{l_{3}(S)}{\phi_{n,2}}-1\Big)\right)' \]
converges to the eigenvalues of random matrix $-\frac1{\kappa_s}\left[\Omega_{\phi_{2}}\right]_{22}$,
where  $\phi_{n,2}=3.750$, $\kappa_s=1.417$ for the spike $\alpha_2=3$.  Furthermore, the
matrix 
$\left[\Omega_{\phi_{2}}\right]_{22}$
 is a $2\times 2$ symmetric matrix with the independent Gaussian entries, of which the $(i,j)$ element has mean zero and the variance given by 
 \[var(w_{ij})=
\left\{
\begin{array}{cc}
0.263, &~ \text{if}~ i = j  \\
 1.131,  &  ~ \text{if}~ i \neq j  
\end{array}
\right.
\]

Similarly, for  the spikes $\alpha_3=0.2$ with multiplicity 2,
\[\gamma_3=\left(\gamma_{31}, \gamma_{32}\right)'=\left( \sqrt{n}\Big(\frac{l_{p-2}(S)}{\phi_{n,3}}-1\Big),
\sqrt{n}\Big(\frac{l_{p-1}(S)}{\phi_{n,3}}-1\Big)\right)' \]
converges to the eigenvalues of random matrix $-\frac1{\kappa_s}\left[\Omega_{\phi_{3}}\right]_{33}$,
where  $\phi_{n,3}=0.075$, $\kappa_s=1.649$ for the spike $\alpha_3=0.2$.  Furthermore, the
matrix 
$\left[\Omega_{\phi_{3}}\right]_{33}$
 is a $2\times 2$ symmetric matrix with the independent Gaussian entries, of which the $(i,j)$ element has mean zero and the variance given by 
 \[var(w_{ij})=
\left\{
\begin{array}{cc}
4.481, &~ \text{if}~ i = j  \\
 6.873,  &  ~ \text{if}~ i \neq j
\end{array}
\right.
\]
 \end{itemize}
The simulated empirical distributions of the spiked eigenvalues from Binomial Assumption under {\rm \bf Case~I}  are drawn in Figure~\ref{fig:2} in contrast to their corresponding limiting distributions. 
 \begin{figure}[htbp]
\begin{center}
\includegraphics[width = .37\textwidth]{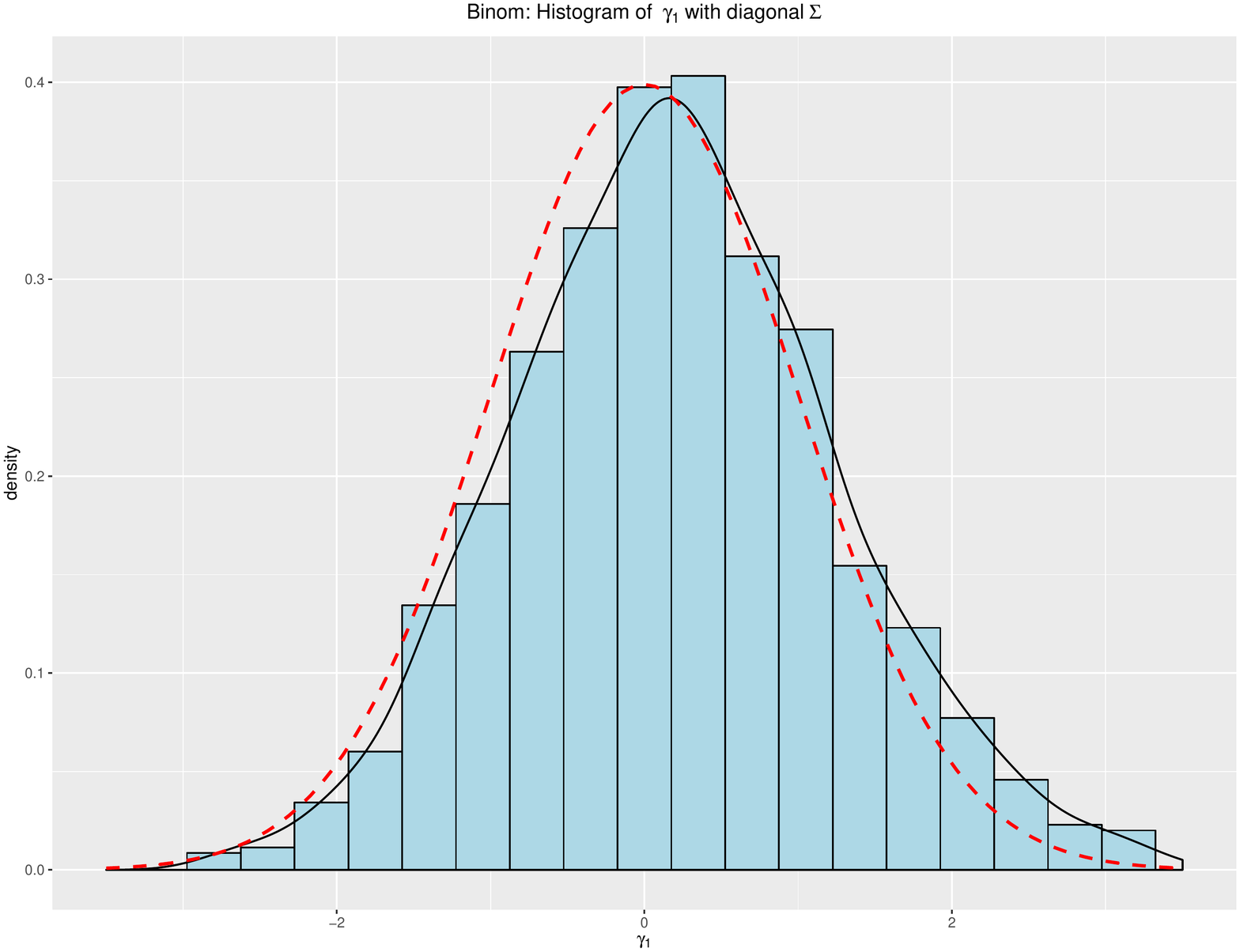}\quad\quad \includegraphics[width = .37\textwidth] {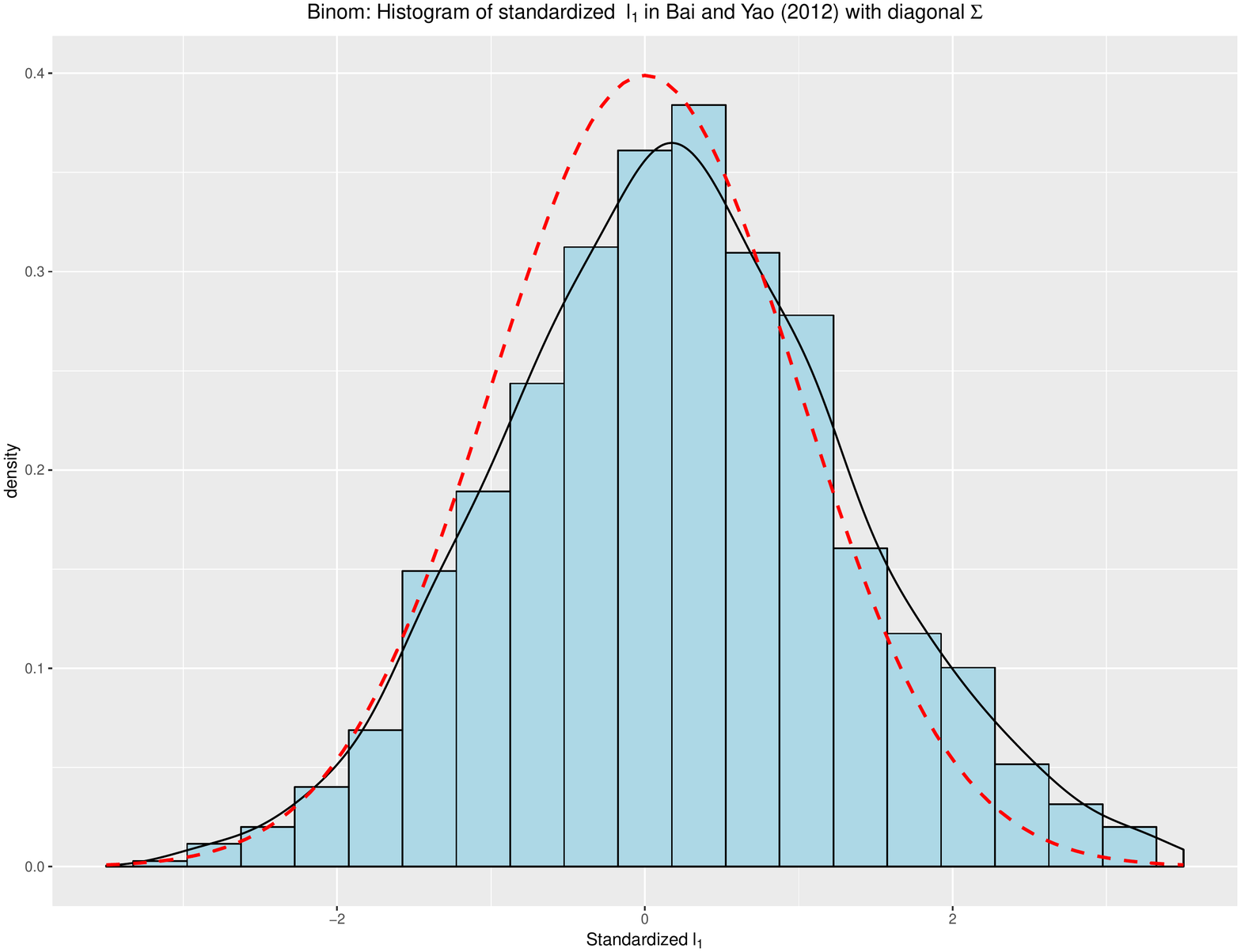}\\
\includegraphics[width = .37\textwidth]{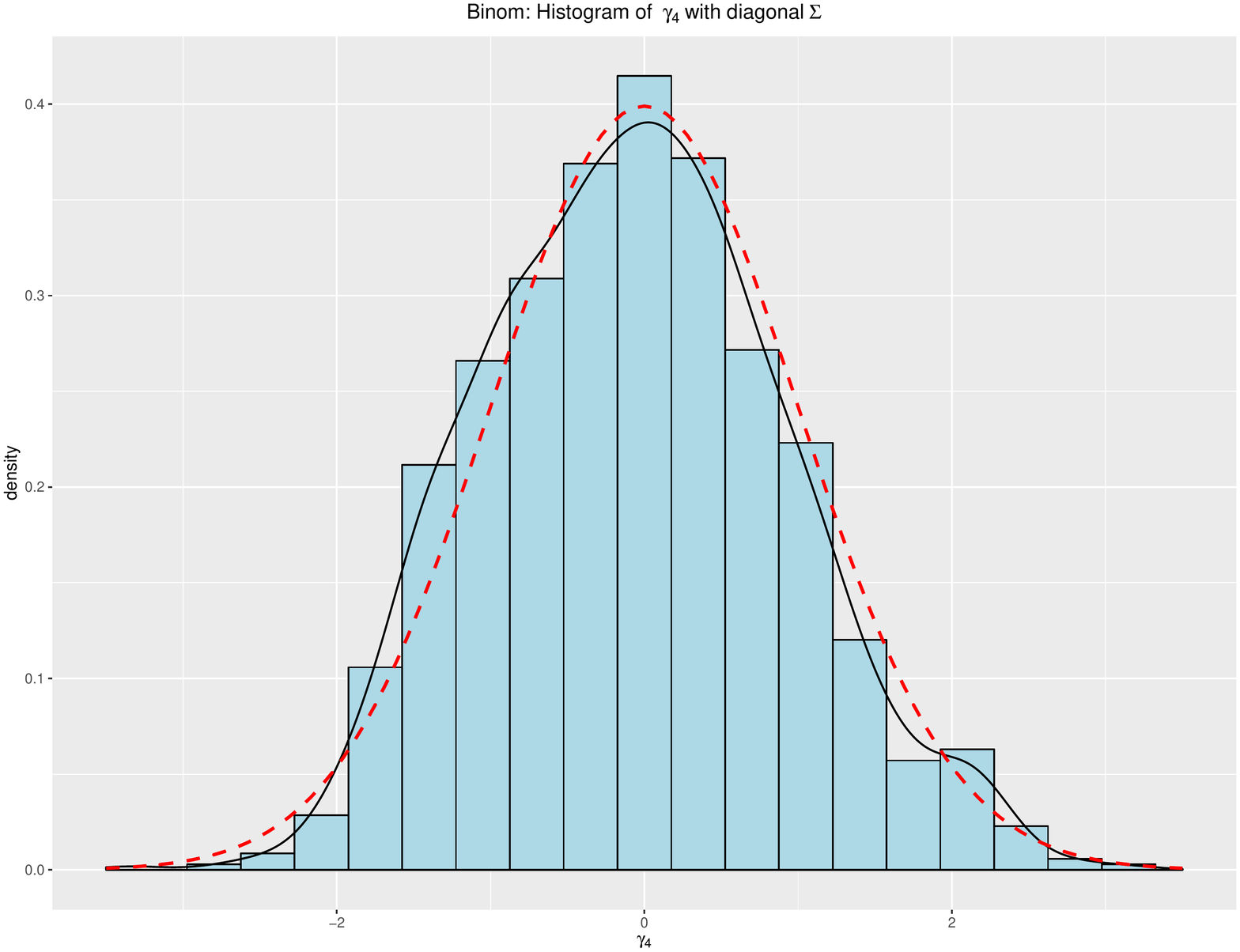}\quad\quad \includegraphics[width = .37\textwidth] {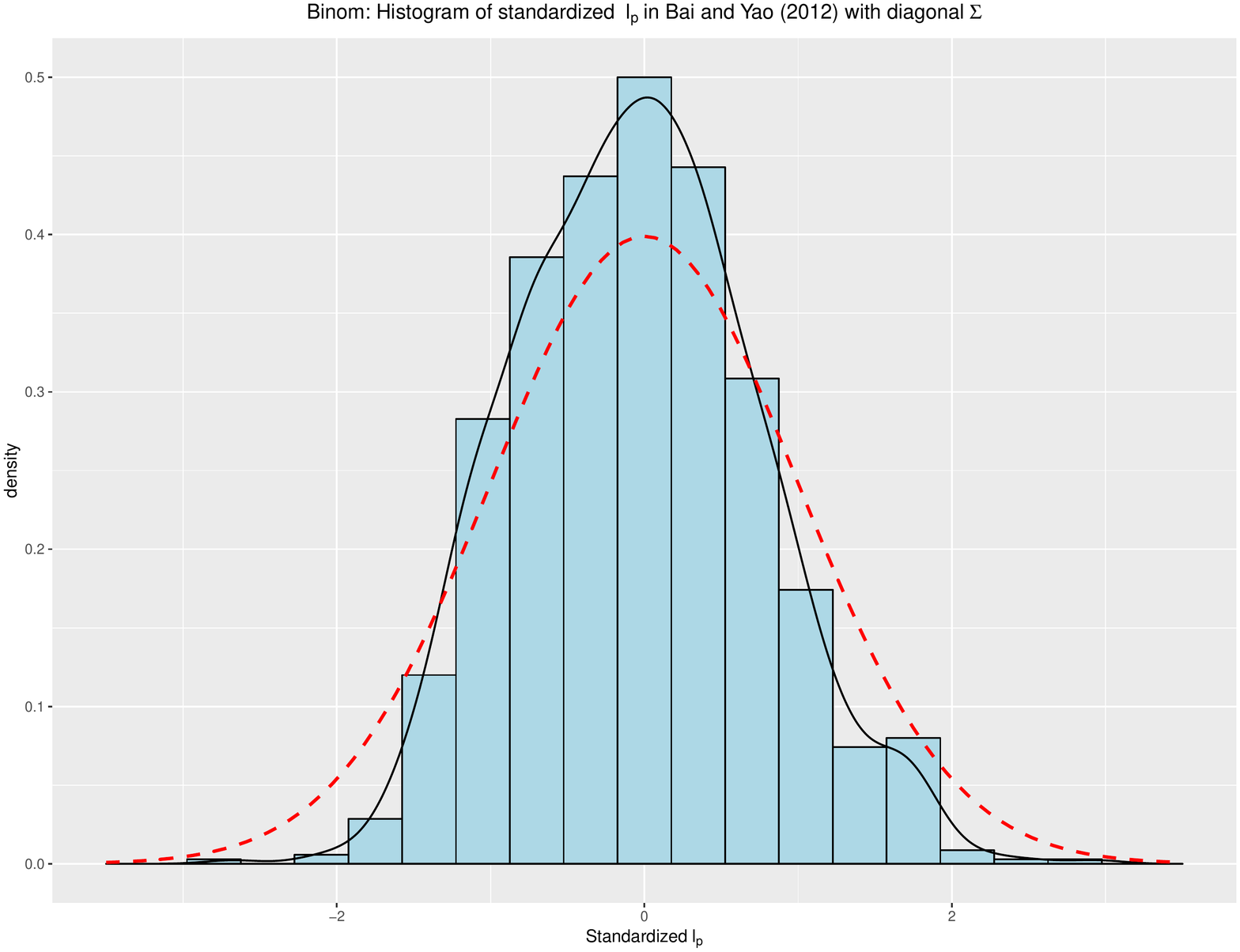}\\
\includegraphics[width = .3\textwidth]{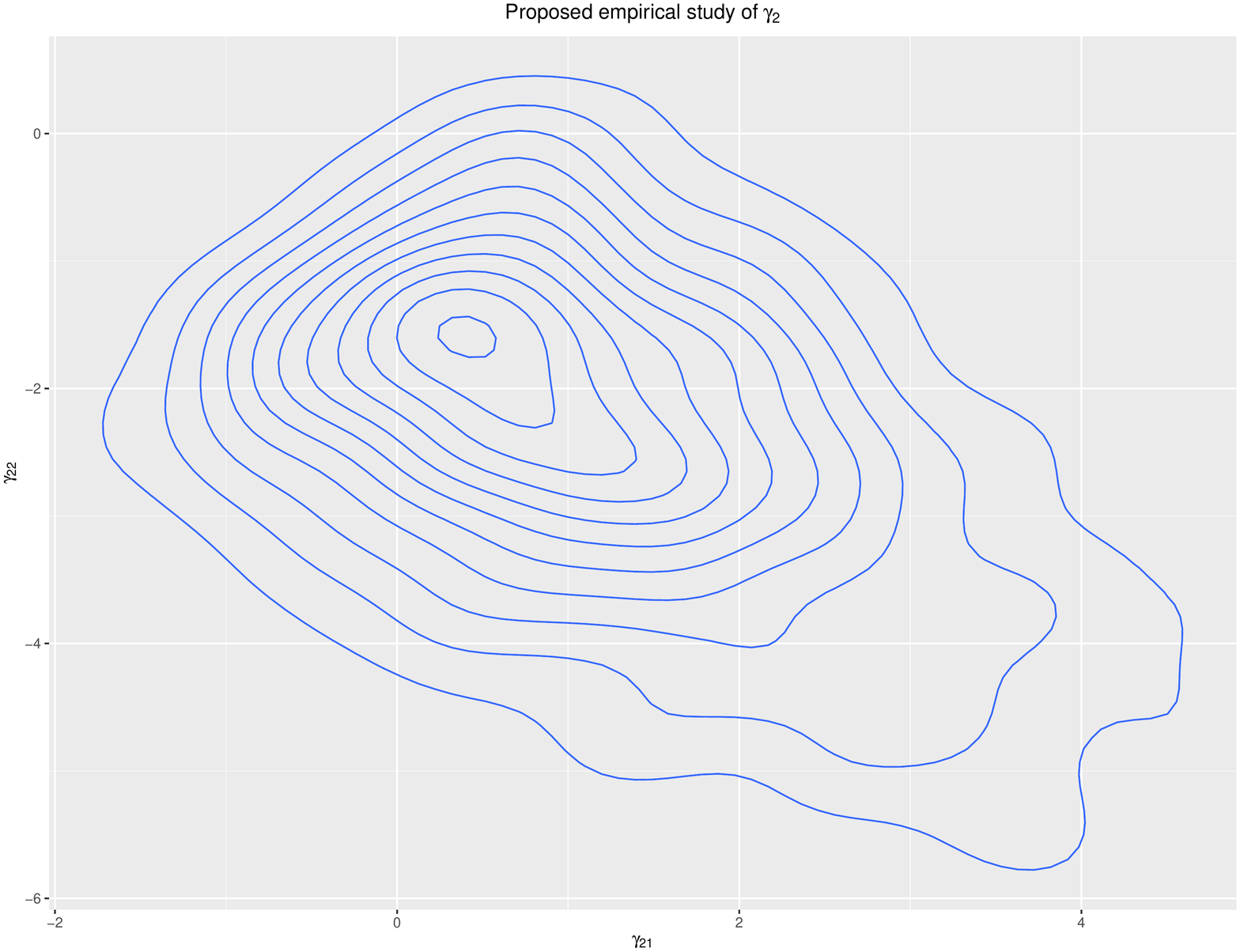}\quad \includegraphics[width = .3\textwidth] {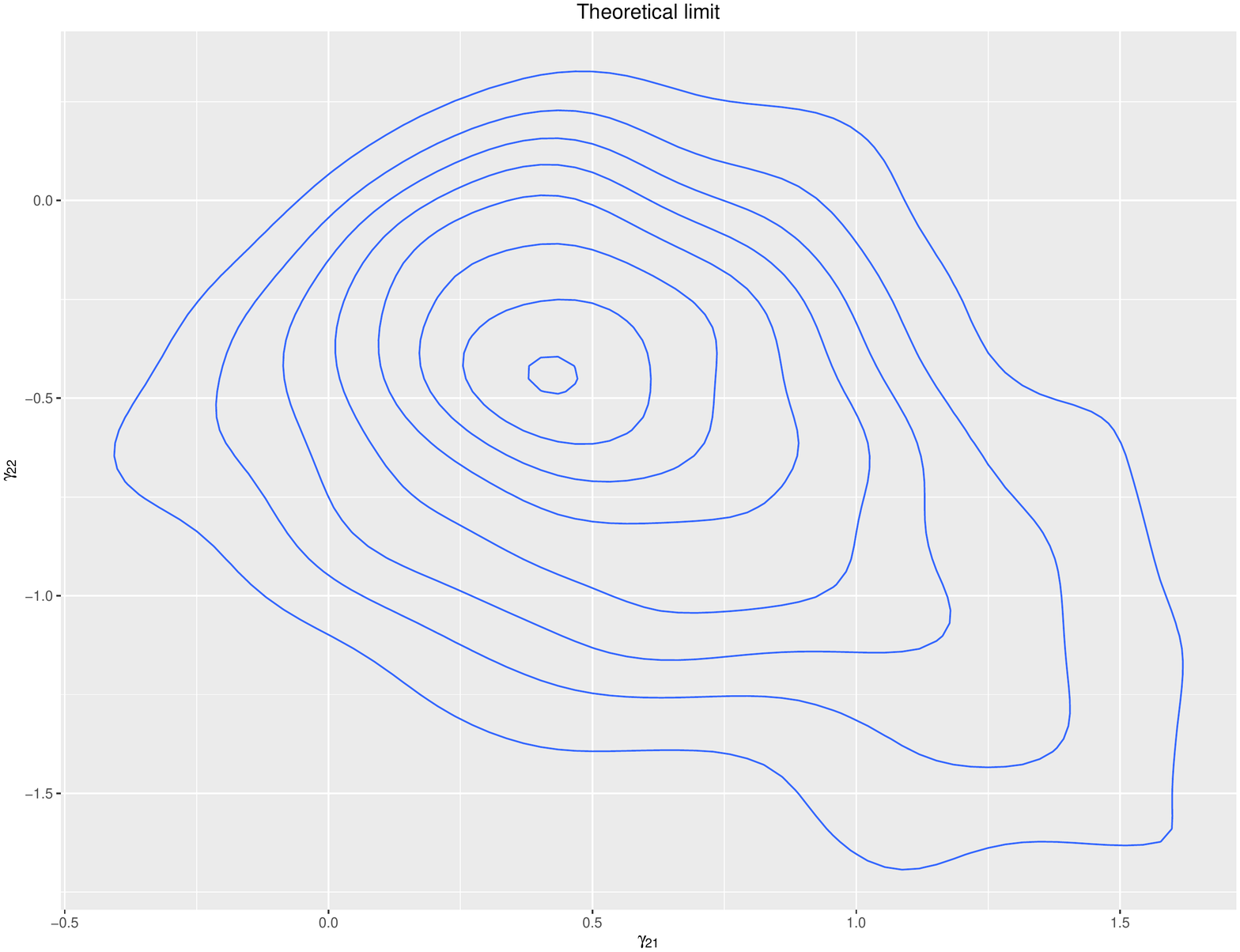}\quad \includegraphics[width = .3\textwidth] {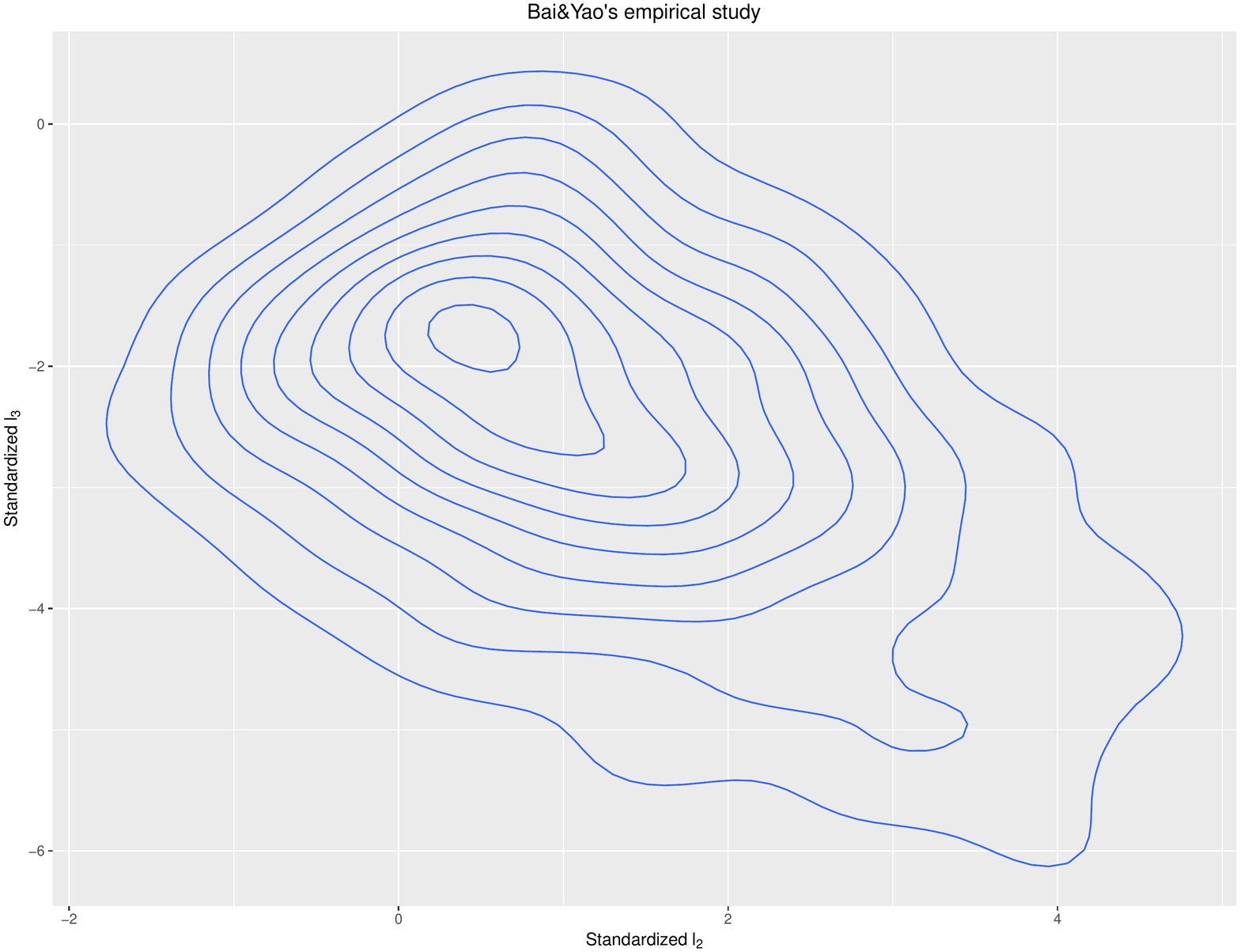}\\
\includegraphics[width = .3\textwidth]{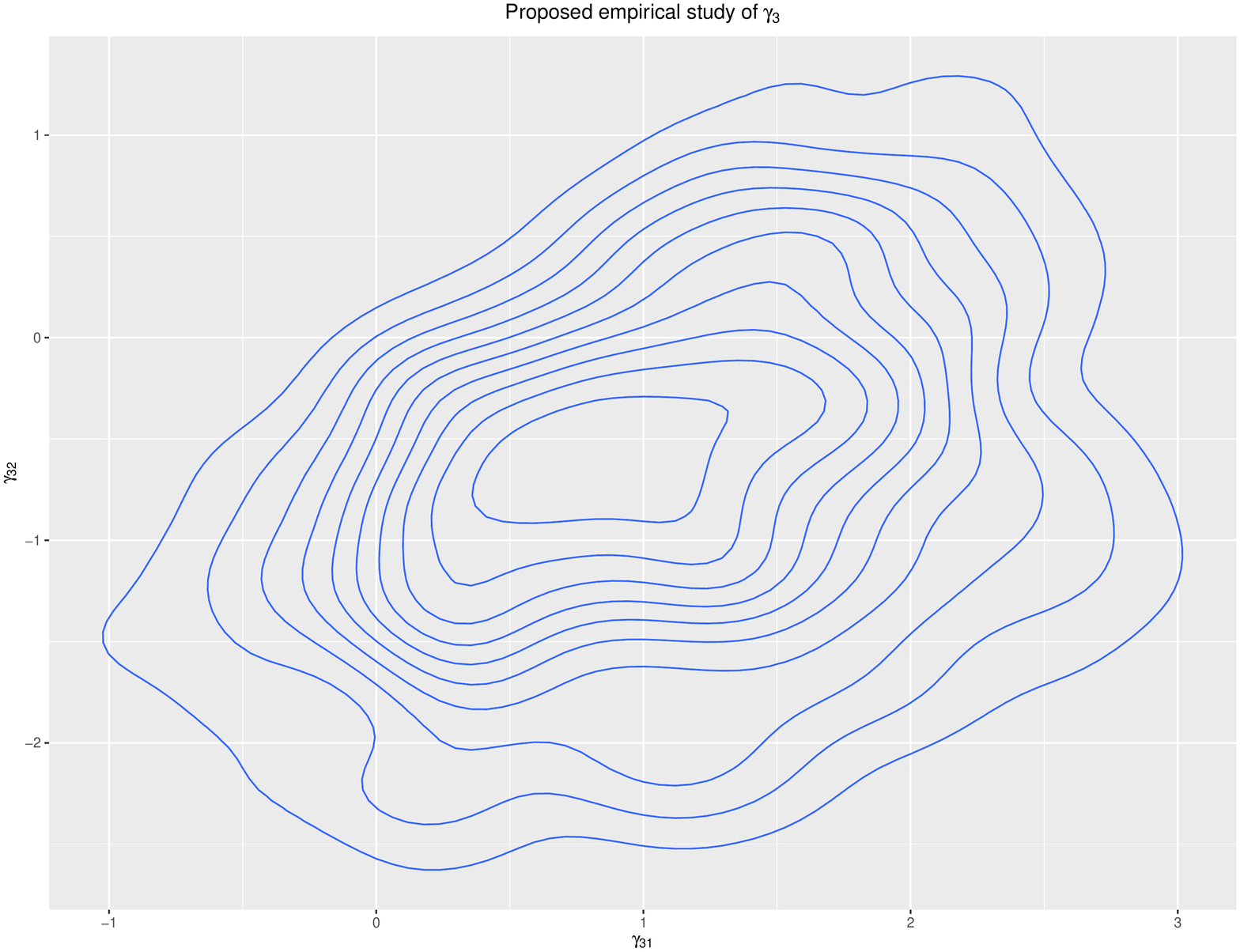}\quad \includegraphics[width = .3\textwidth] {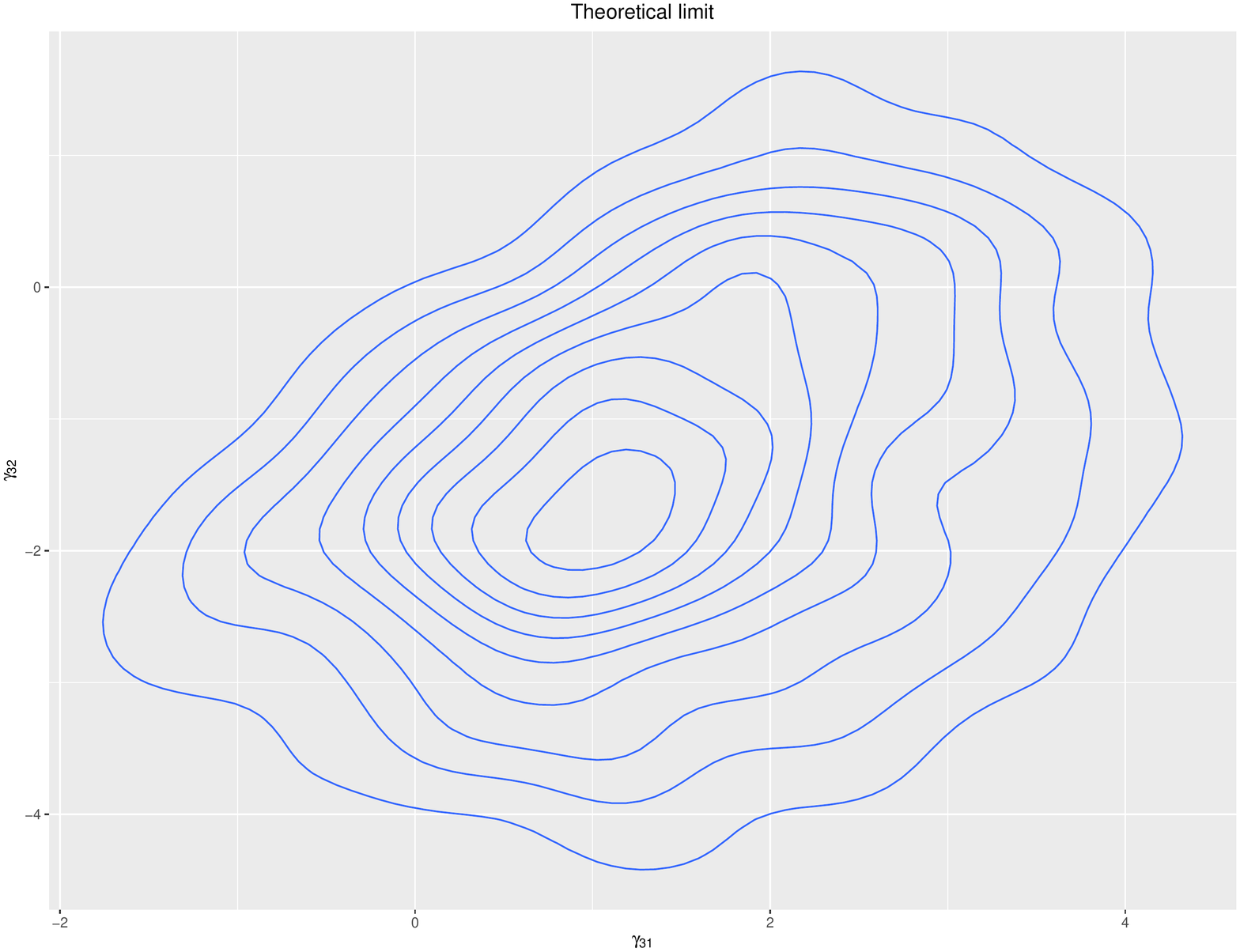}\quad \includegraphics[width = .3\textwidth] {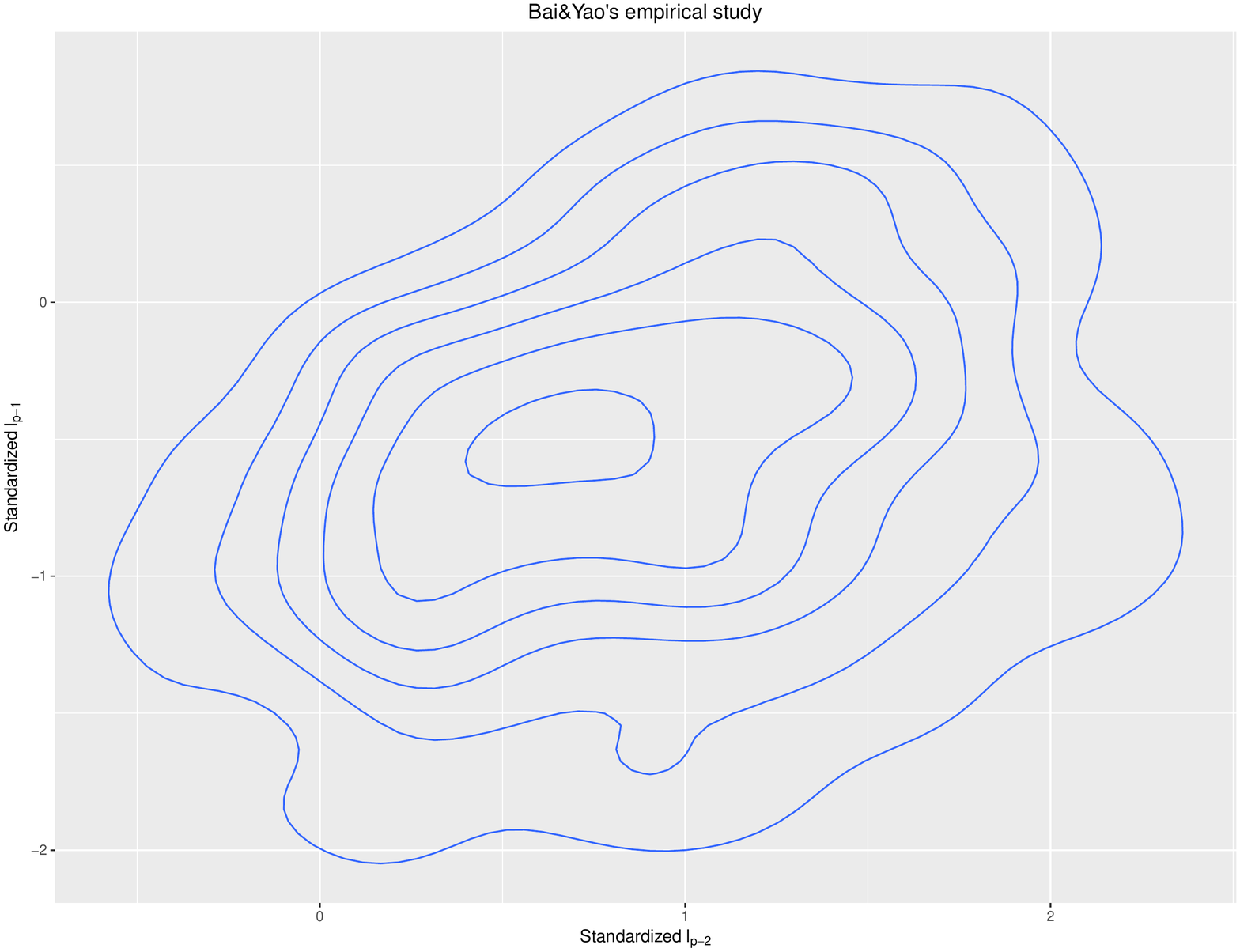}
\caption{ {\rm {\bf Case~I}} under Binomial Assumption. 
 }\label{fig:2}
\end{center} 
\end{figure}

As shown in the simulations of the {\rm \bf Case~I},
our approach provides the similar results to the ones 
 in \cite{BaiYao2012}, when the population covariance matrix has a diagonal structure. 
  Moreover,
our method performs slightly better for the non-Gaussian distribution even if  the  diagonal  independent assumption holds in the 
 {\rm \bf Case~I}.
 
 \subsection{{\rm \bf Case~II} under the both Assumptions}

For the {\rm \bf Case~II}, it is easily obtained by Theorem~{\ref{CLT} that our proposed results of the both population assumptions are the same to the one of Gaussian Assumption in {\rm \bf Case~I}, which can well fit their corresponding  limiting behaviors. However, as  shown in  the simulated results, the asymptotic distribution in \cite{BaiYao2012}, which is  involved with the 4th moment, performs not well for the non-Gaussian population assumption in the  {\rm \bf Case~II}.
Therefore, it is reasonable to theoretically remove the diagonal  independent restrictions in results of Bai and Yao (2008,2012) as illustrated in the simulations.
The simulated results of the two population assumptions in {\rm \bf Case~II} are respectively depicted in Figures~\ref{fig:4}  and \ref{fig:5}. 

 \begin{figure}[htbp]
\begin{center}
\includegraphics[width = .37\textwidth]{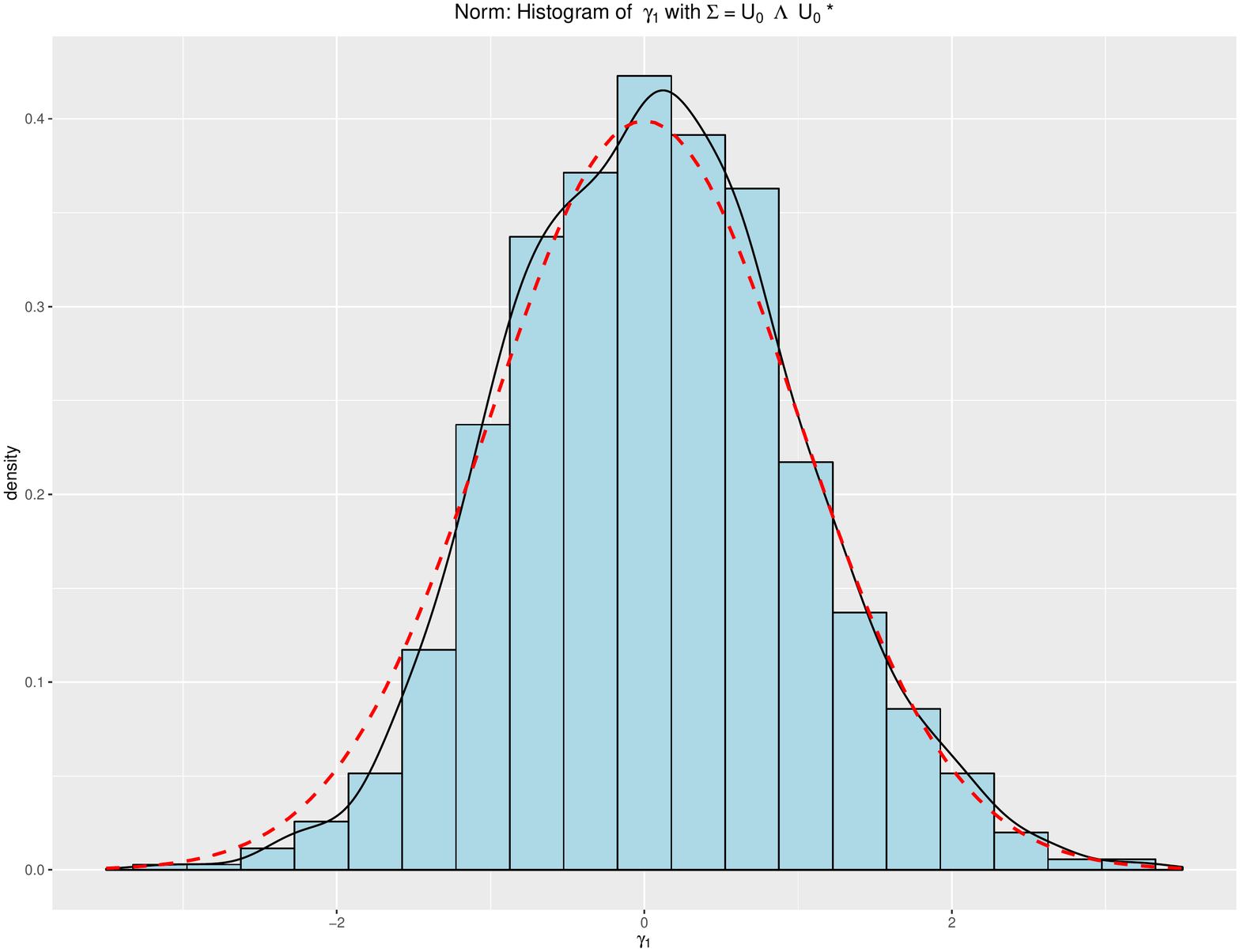}\quad\quad \includegraphics[width = .37\textwidth] {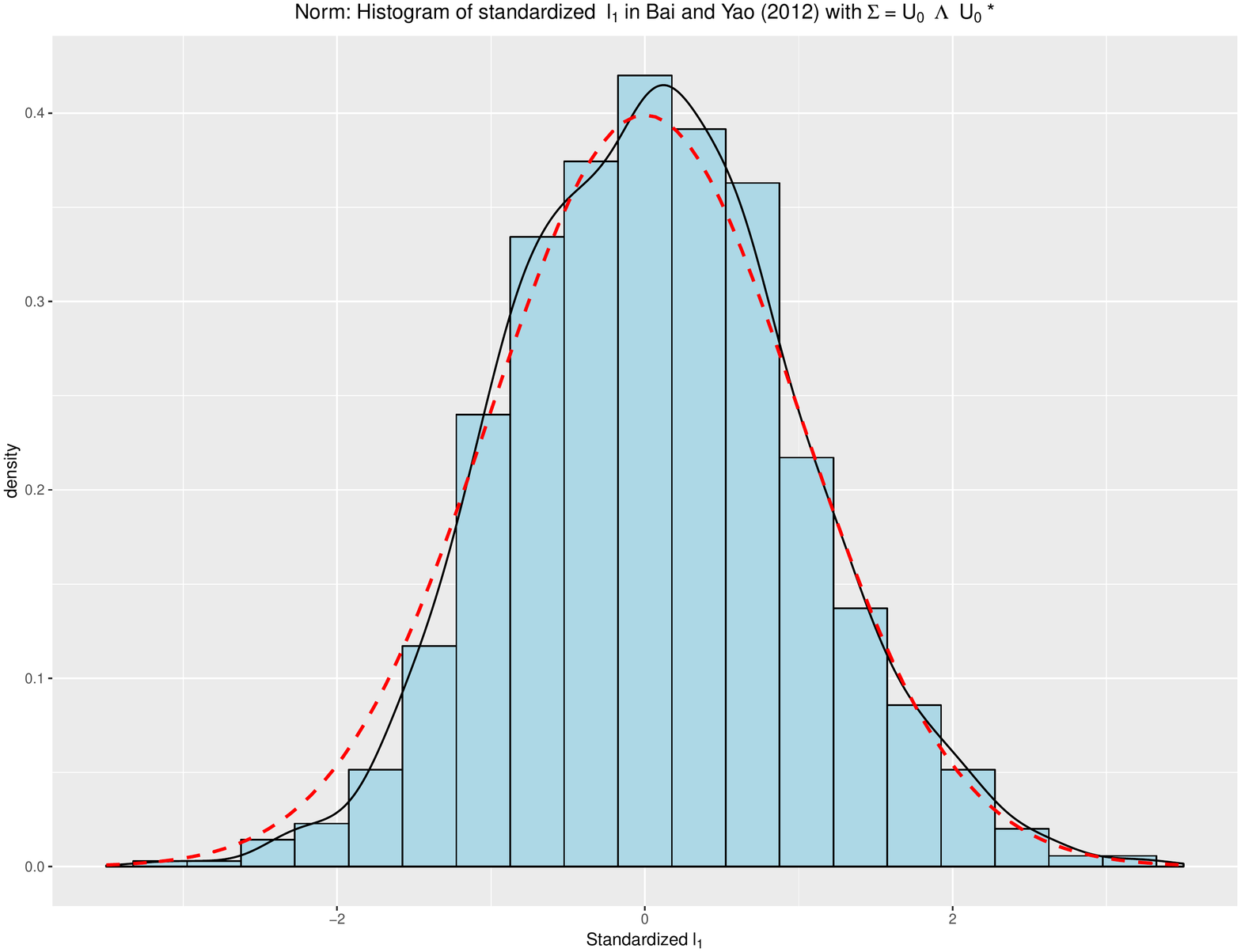}\\
\includegraphics[width = .37\textwidth]{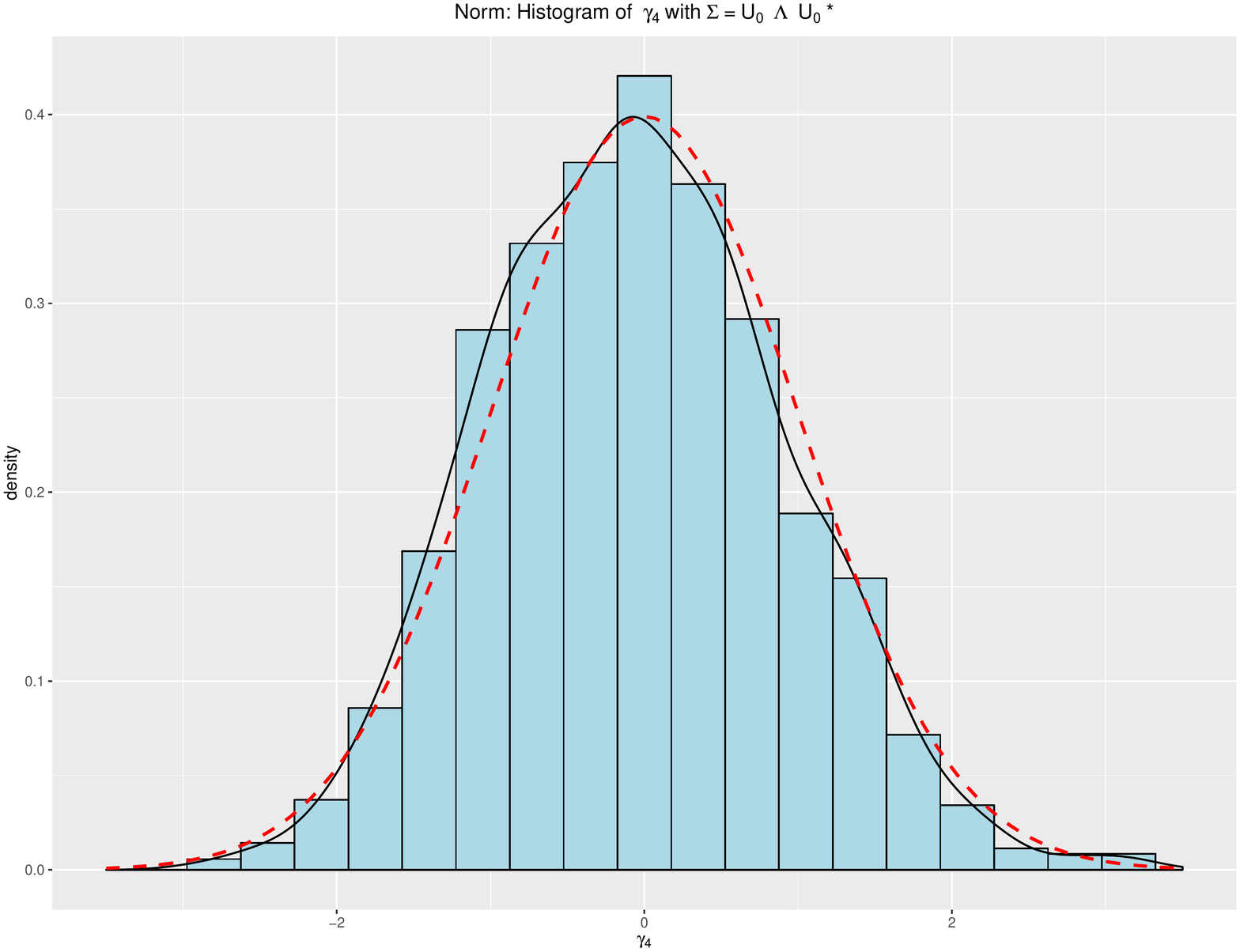}\quad\quad \includegraphics[width = .37\textwidth] {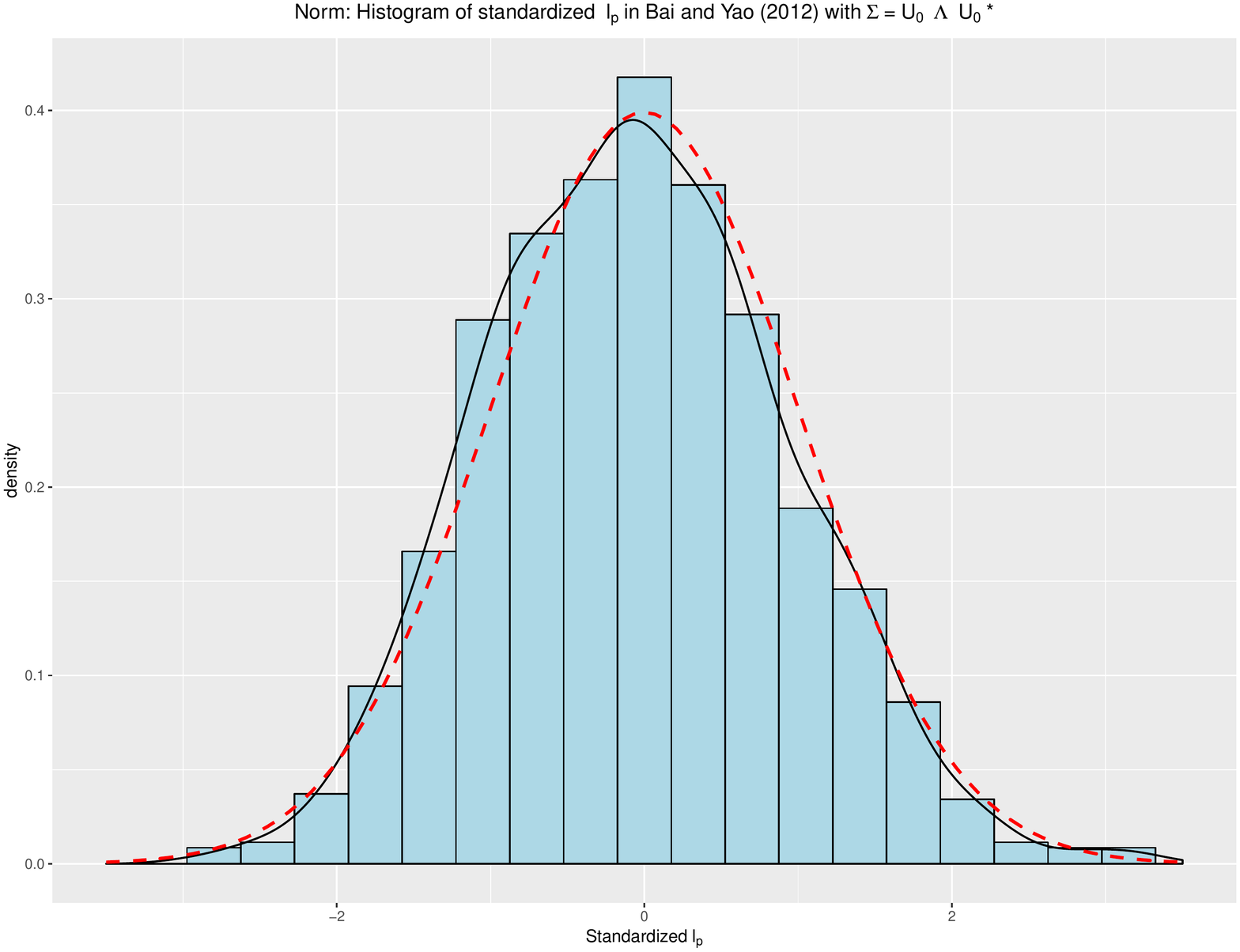}\\
\includegraphics[width = .3\textwidth]{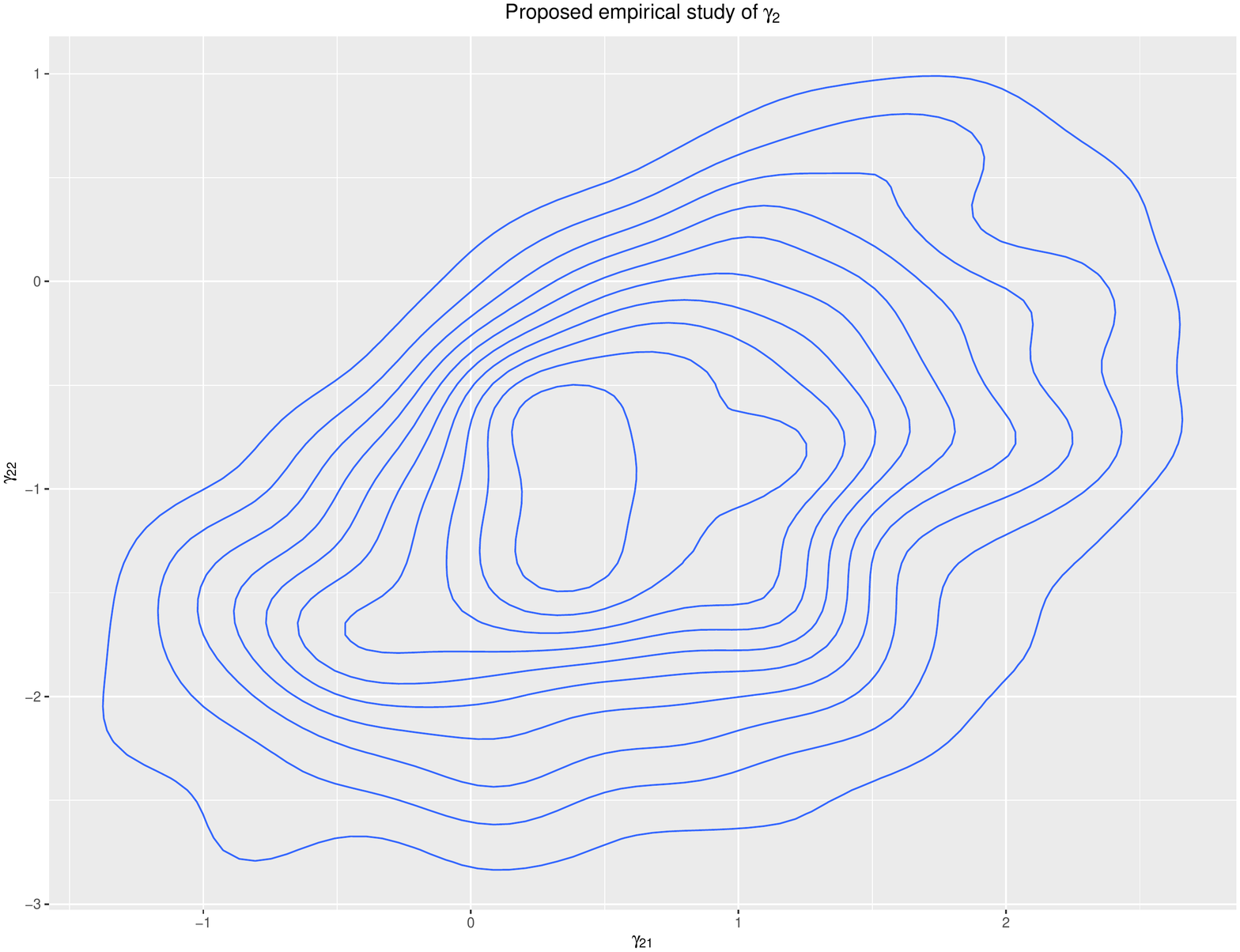}\quad \includegraphics[width = .3\textwidth] {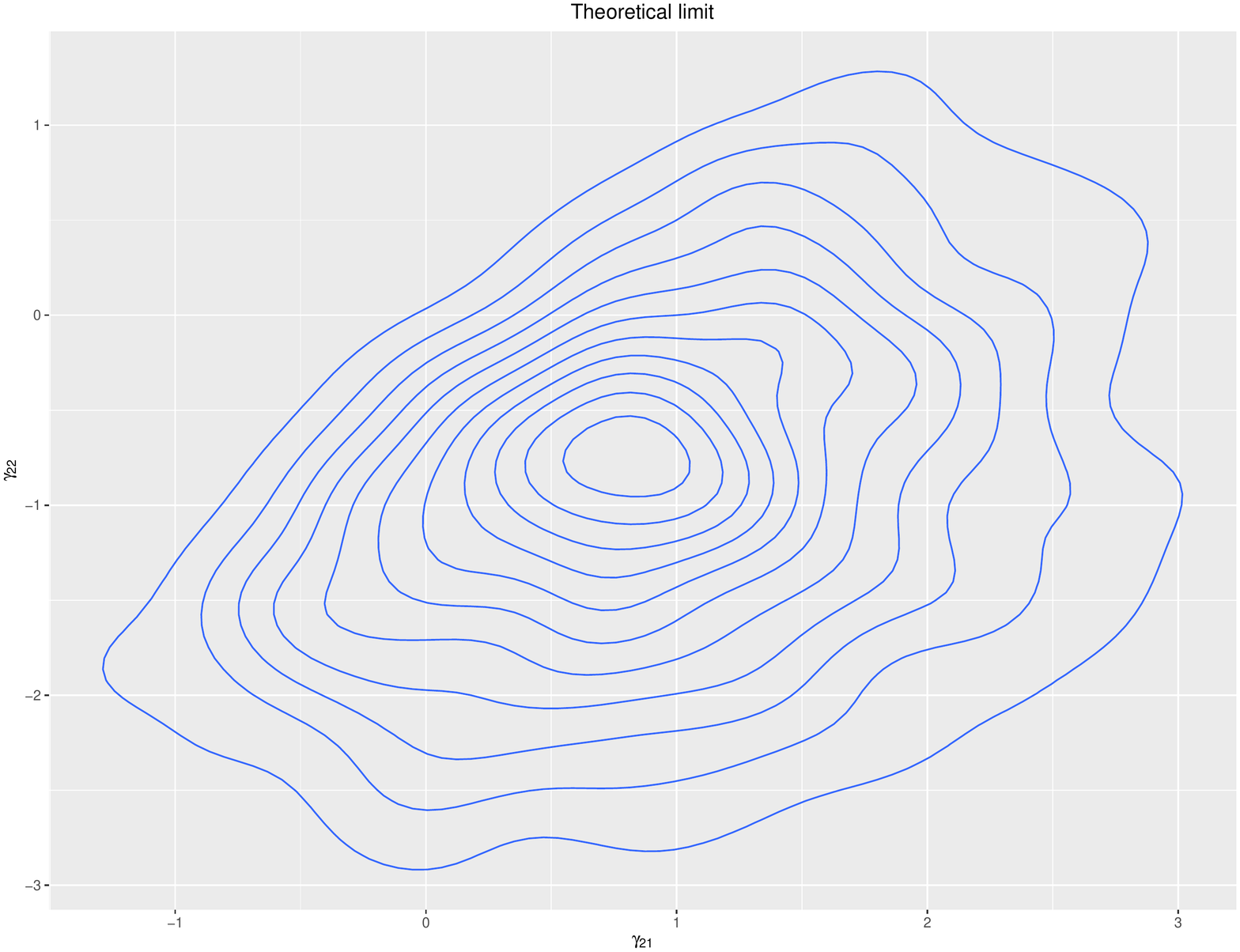}\quad \includegraphics[width = .3\textwidth] {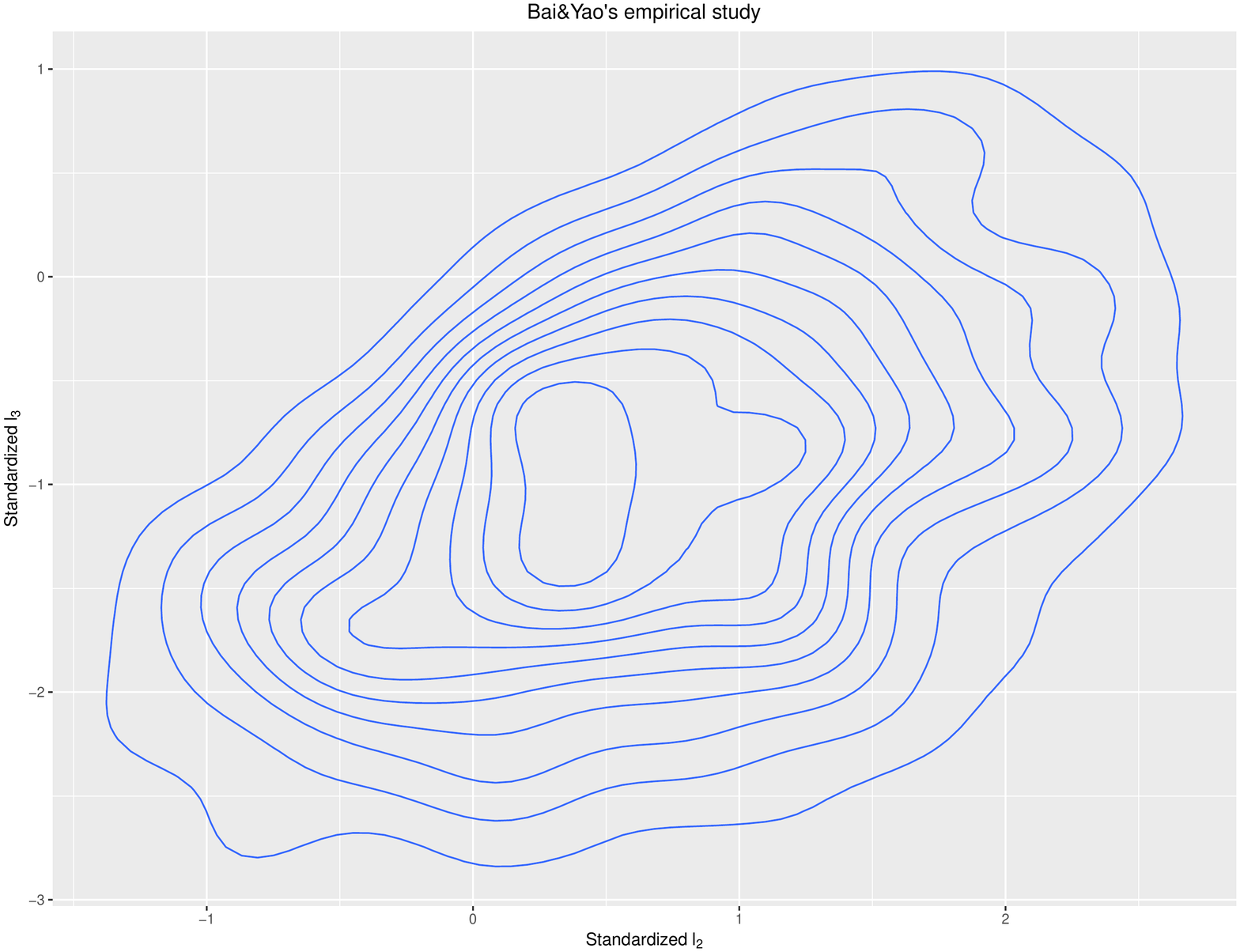}\\
\includegraphics[width = .3\textwidth]{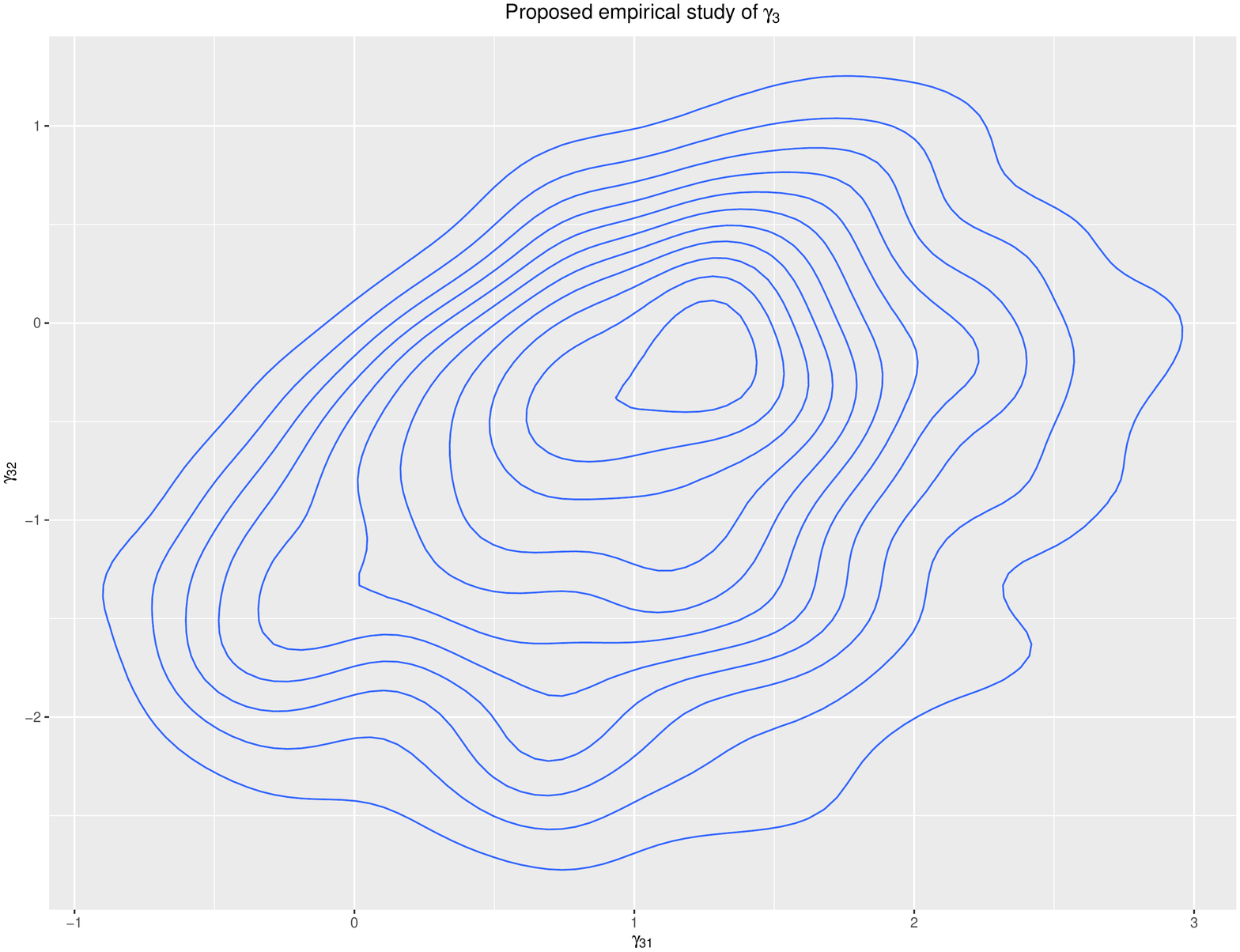}\quad \includegraphics[width = .3\textwidth] {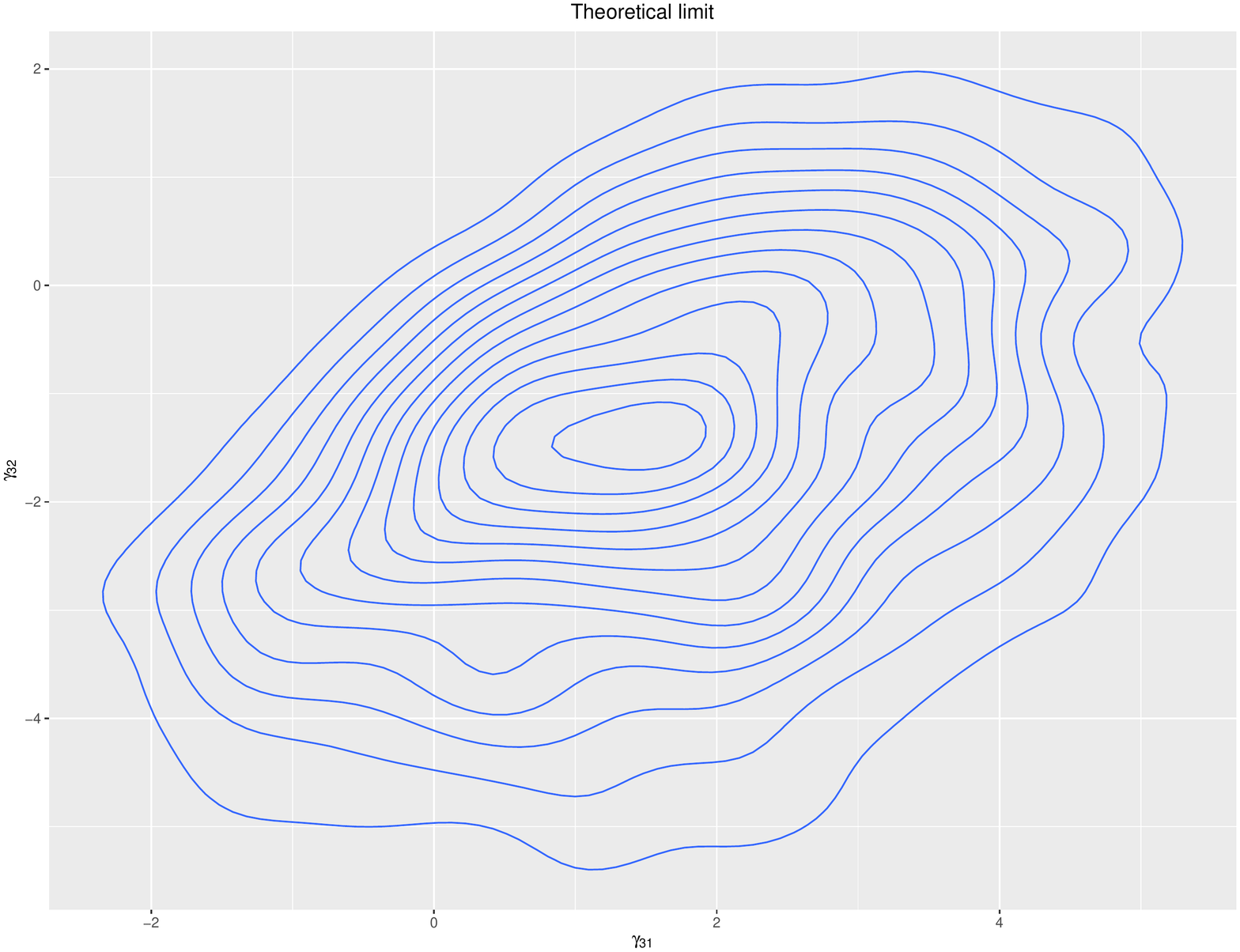}\quad \includegraphics[width = .3\textwidth] {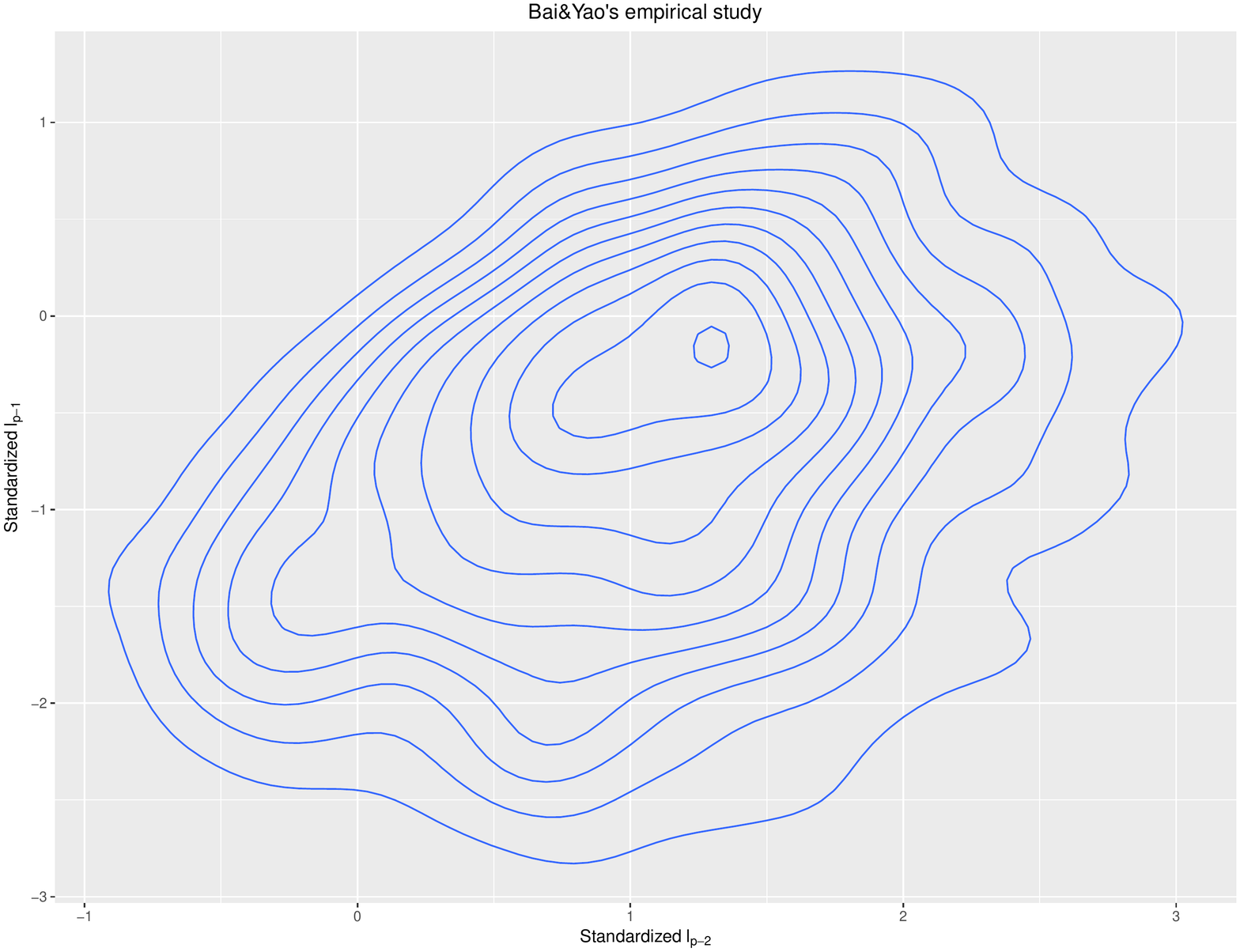}
\caption{ {\rm {\bf Case~II}} under Gaussian assumption. 
 }\label{fig:4}
\end{center} 
\end{figure}

 \begin{figure}[htbp]
\begin{center}
\includegraphics[width = .37\textwidth]{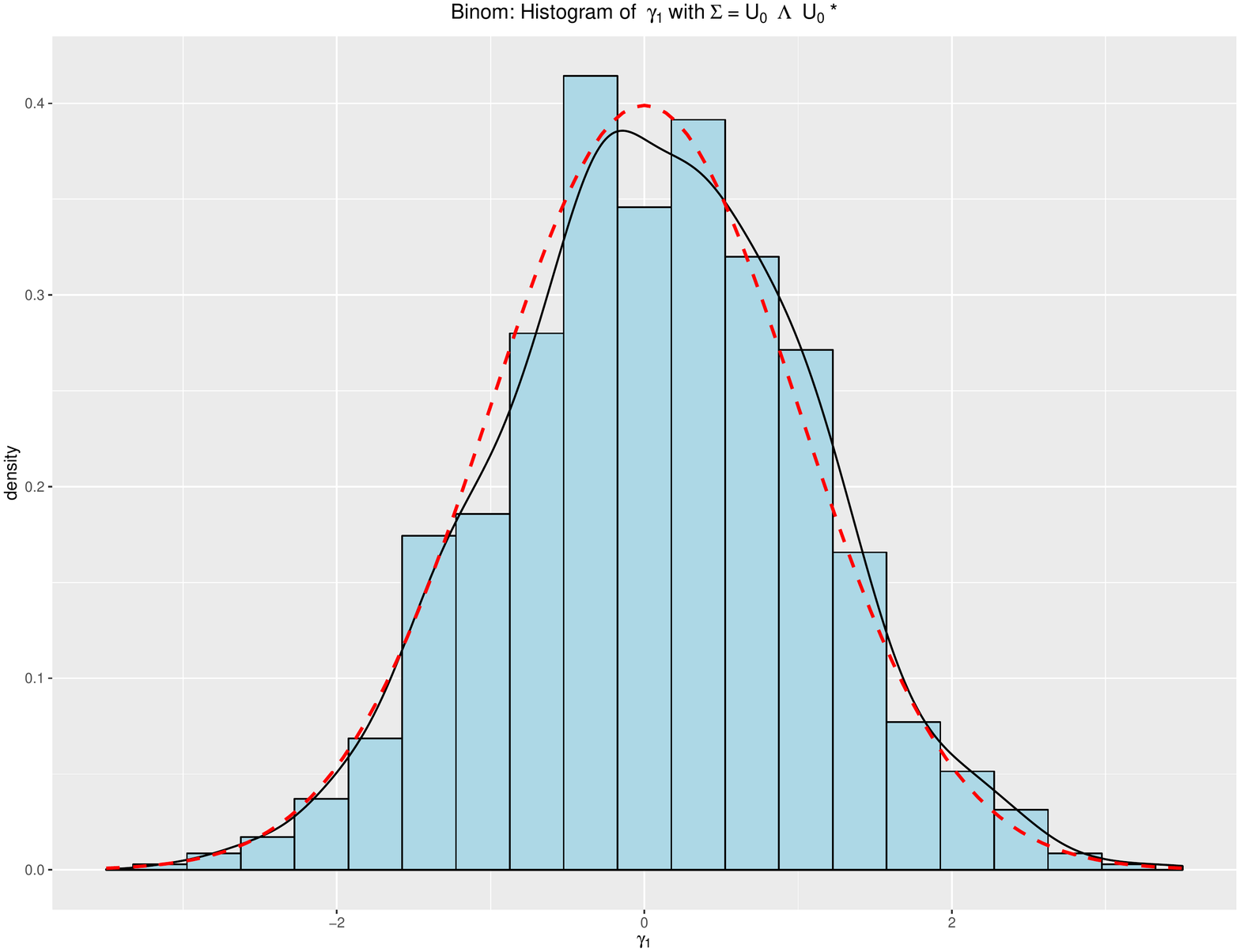}\quad\quad \includegraphics[width = .37\textwidth] {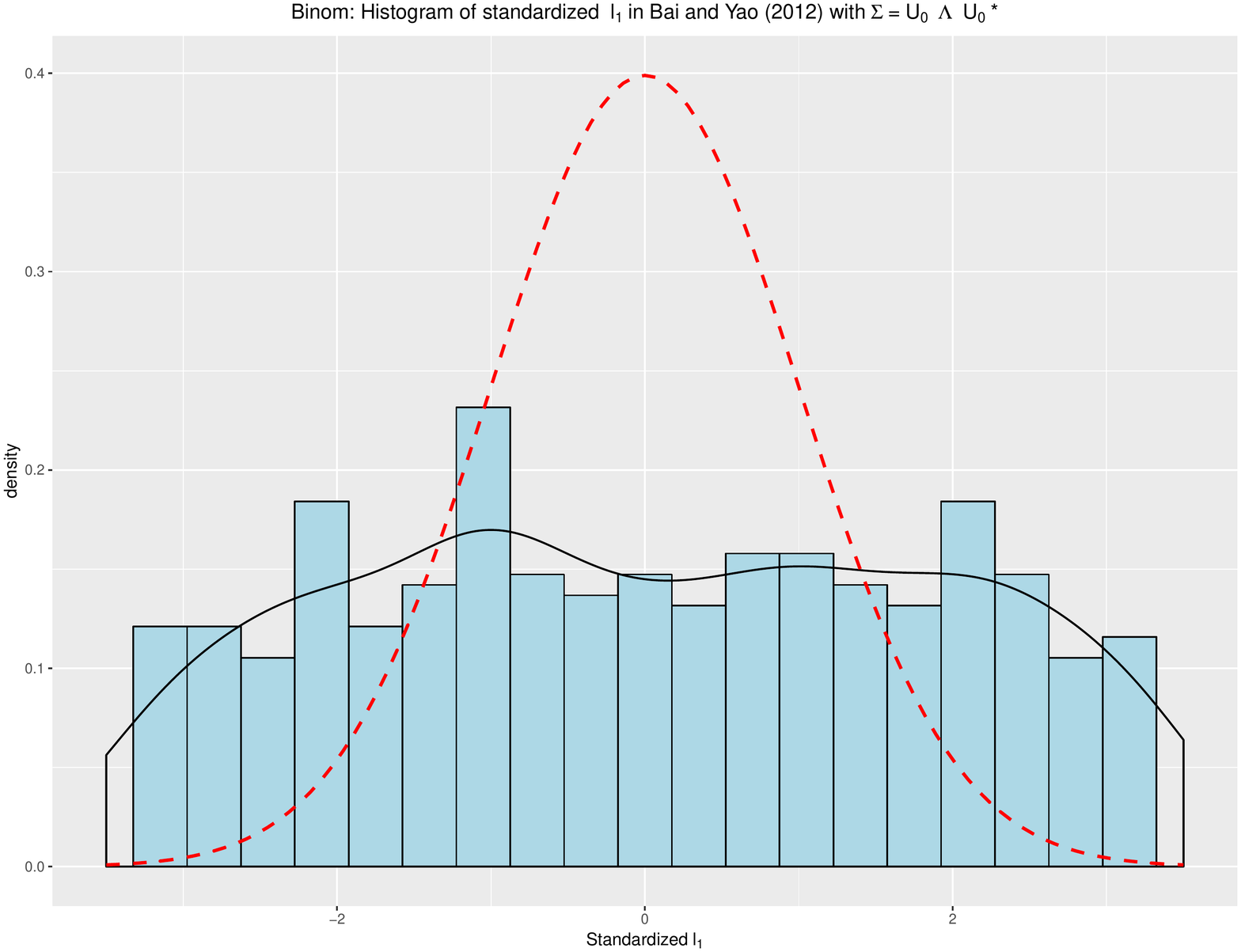}\\
\includegraphics[width = .37\textwidth]{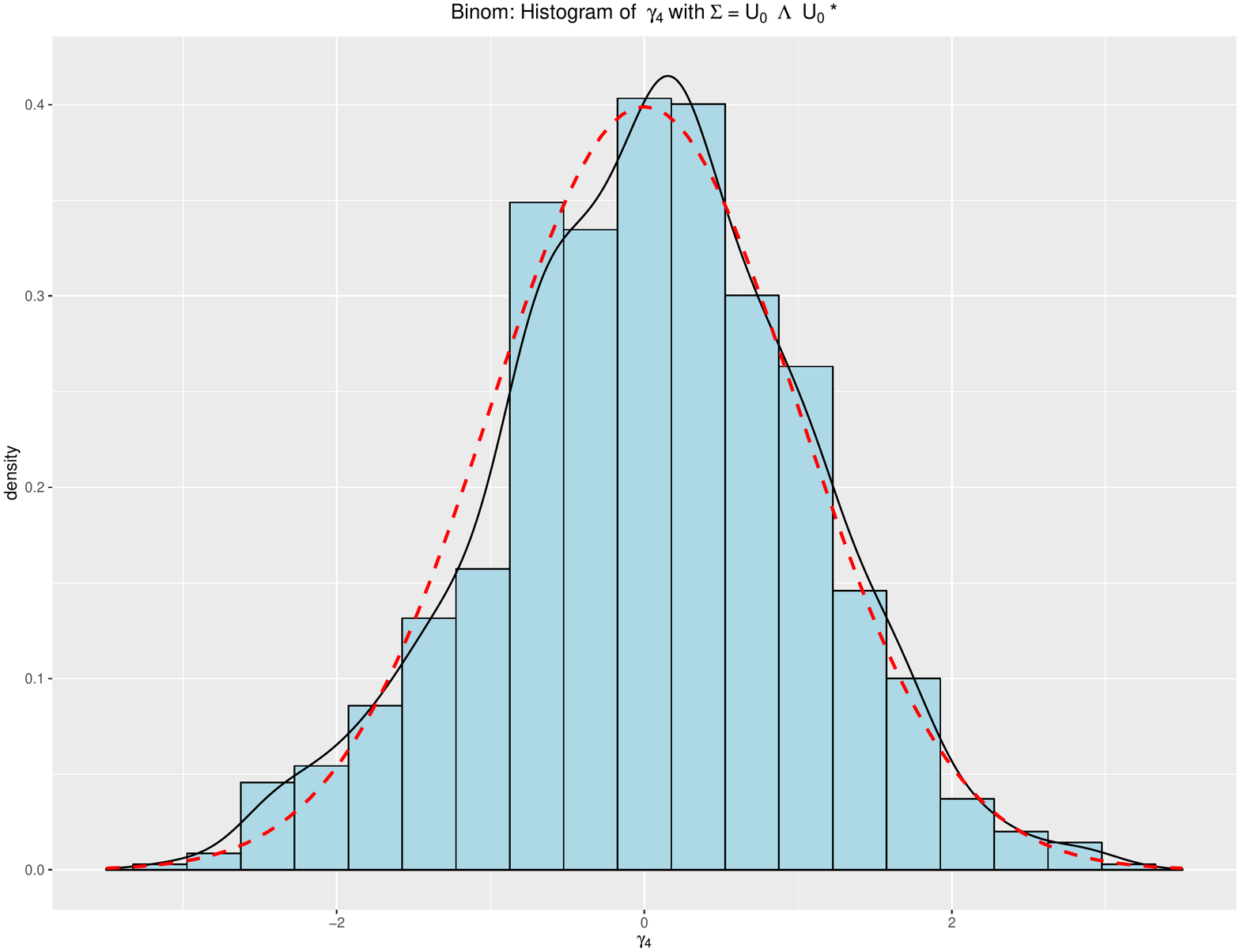}\quad\quad \includegraphics[width = .37\textwidth] {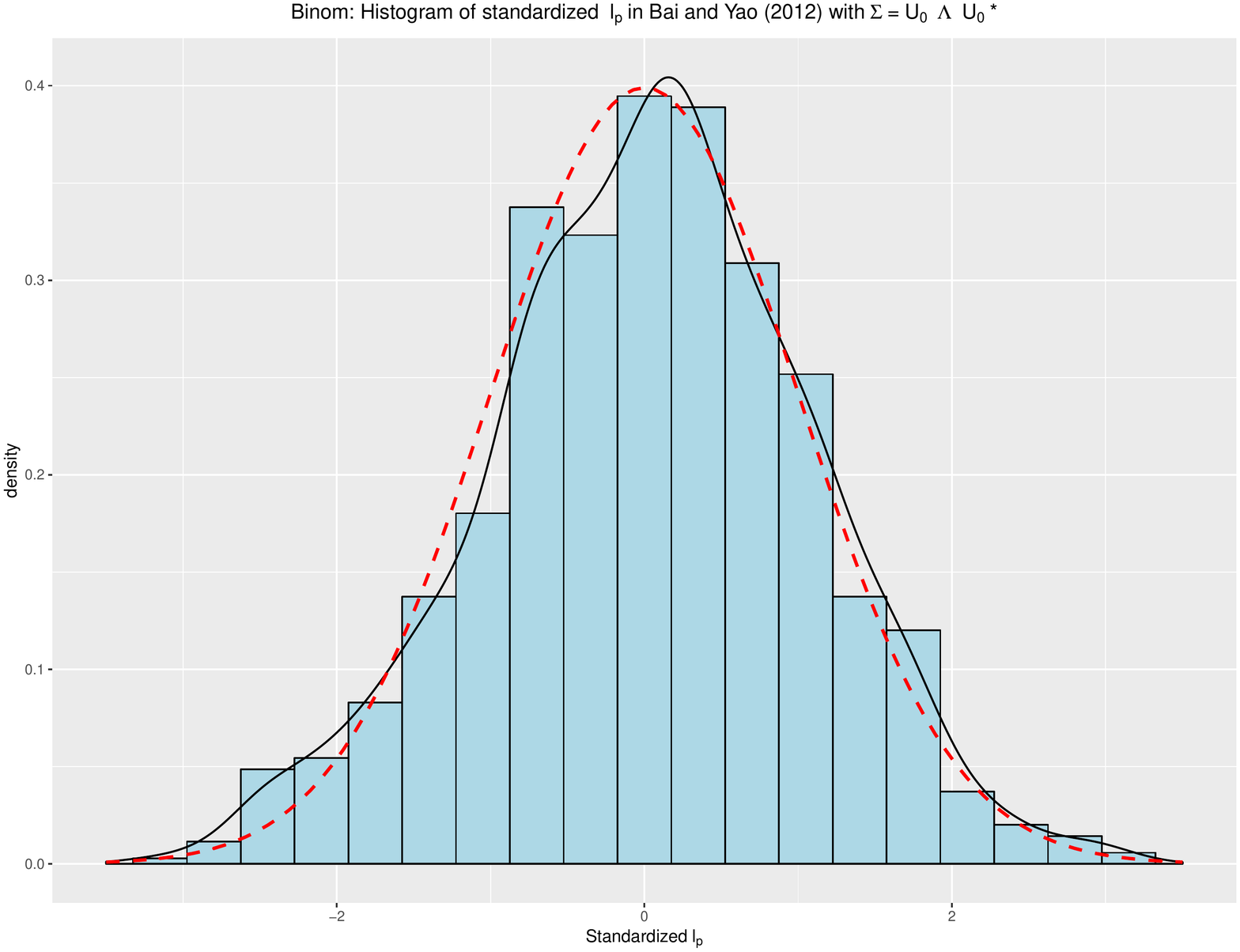}\\
\includegraphics[width = .3\textwidth]{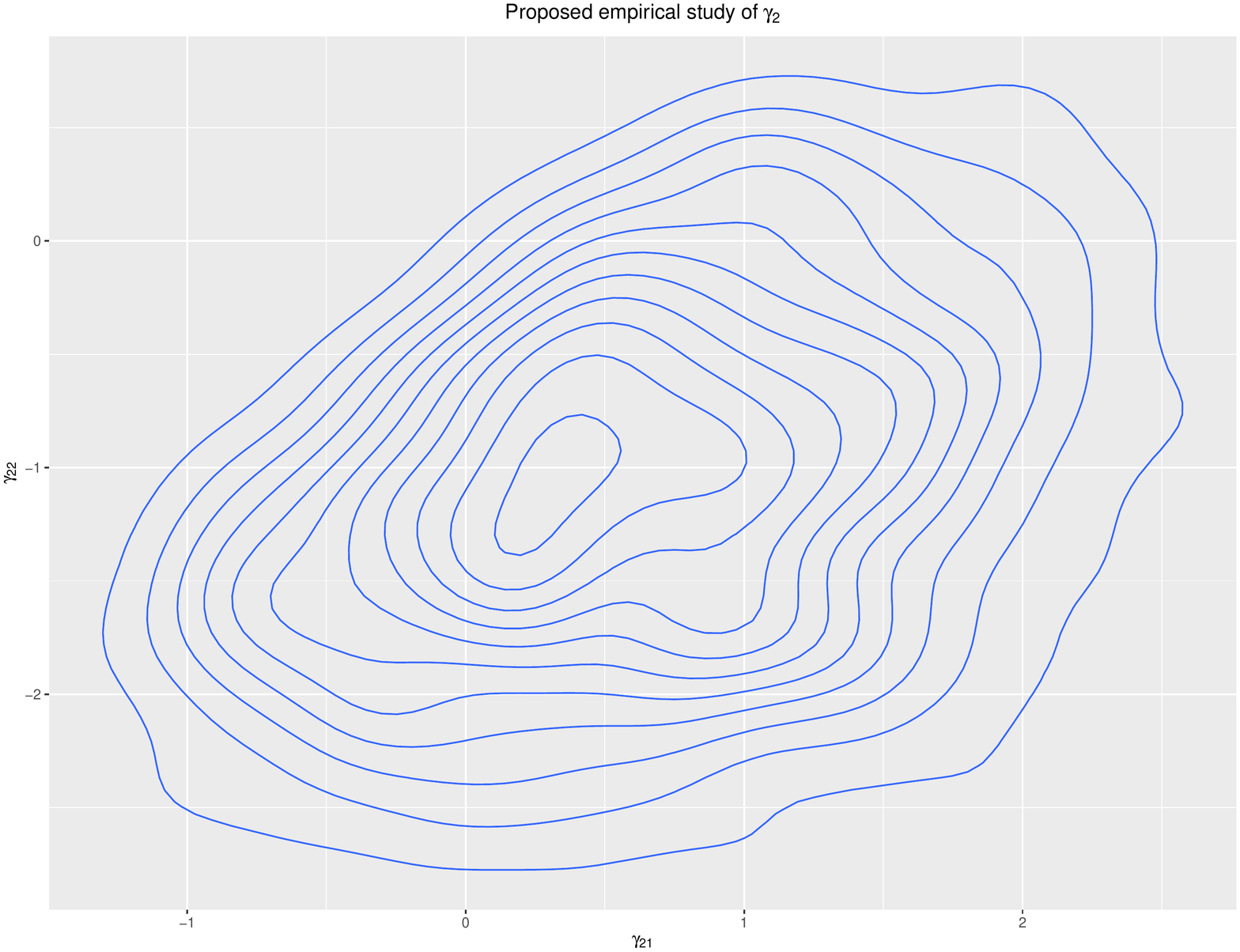}\quad \includegraphics[width = .3\textwidth] {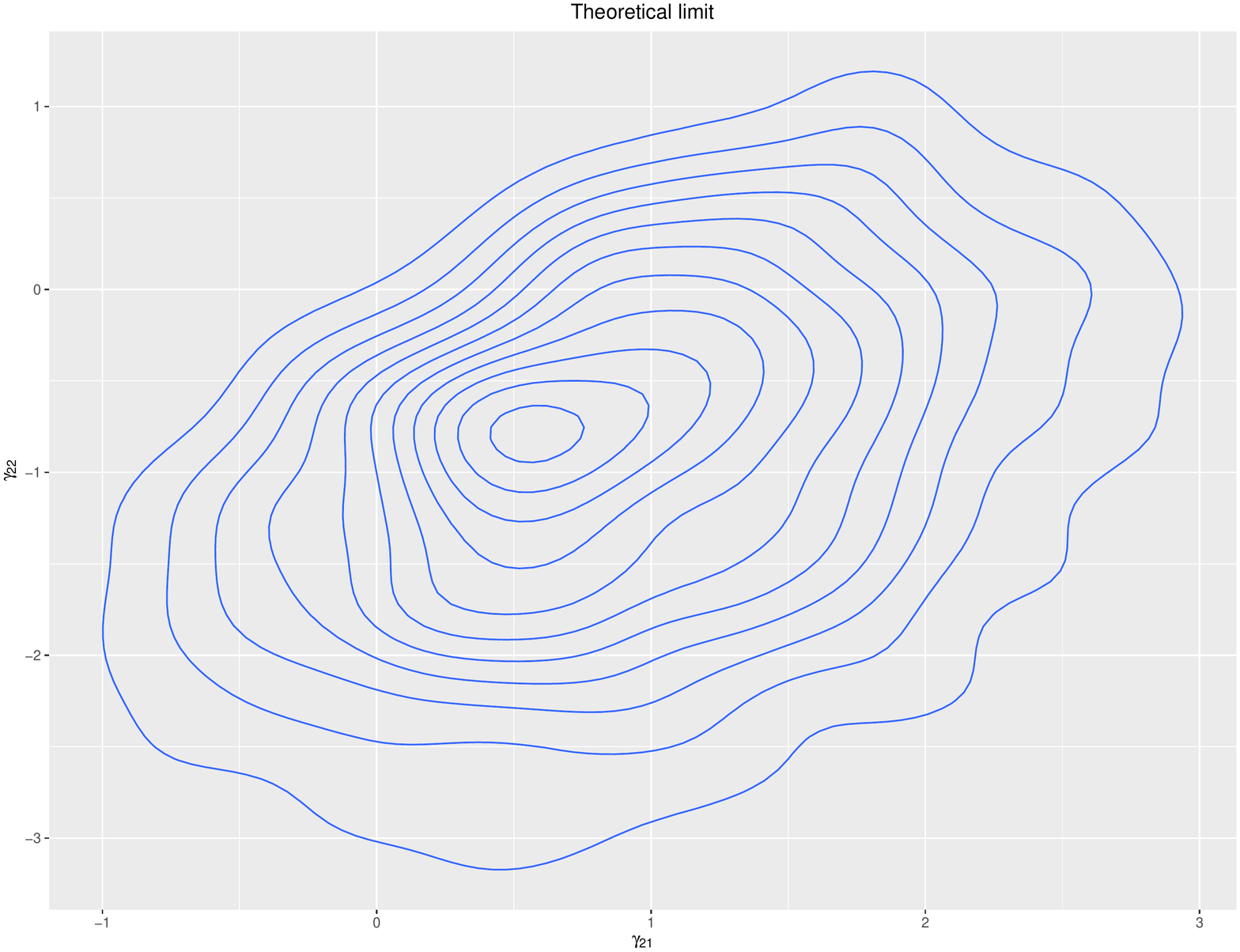}\quad \includegraphics[width = .3\textwidth] {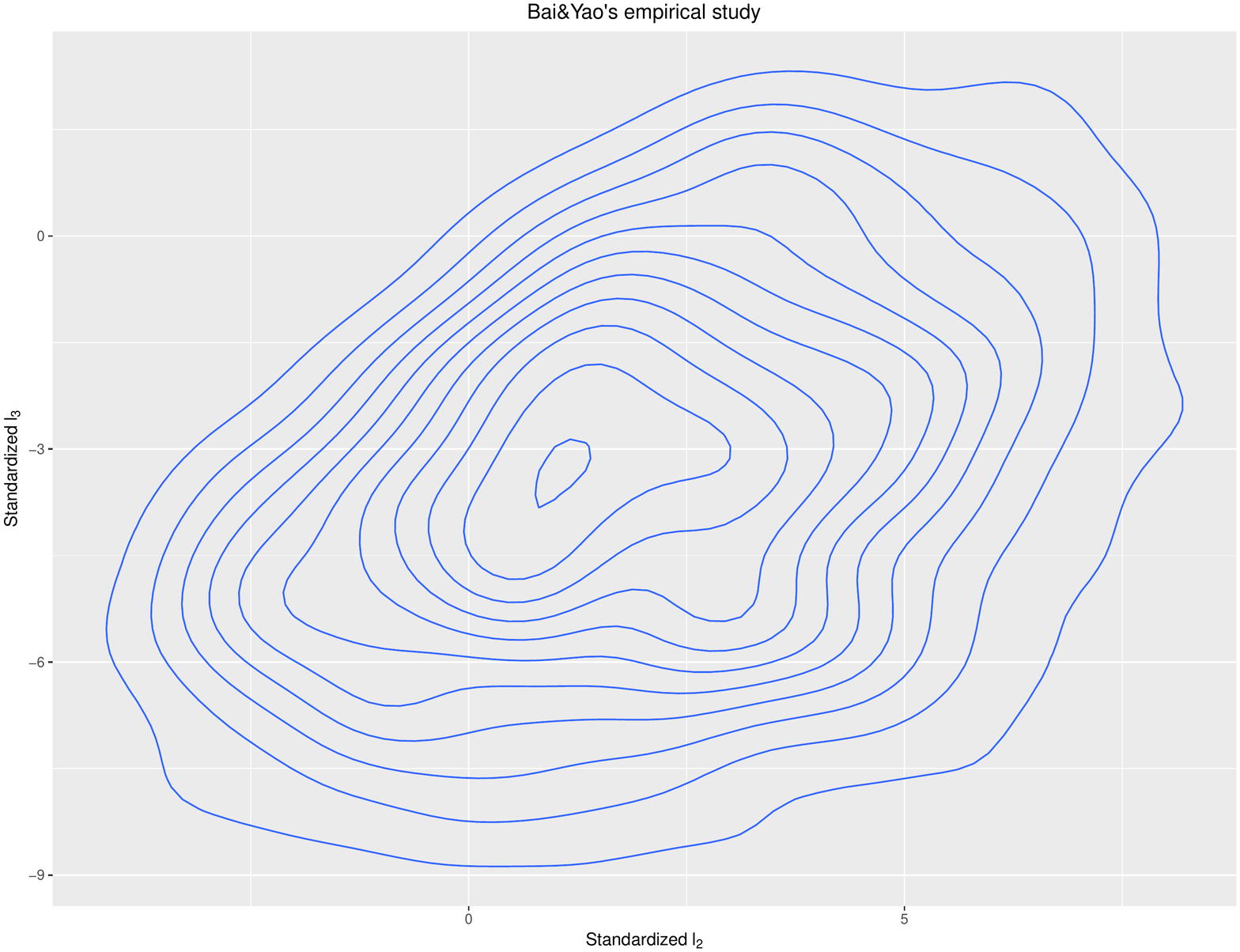}\\
\includegraphics[width = .3\textwidth]{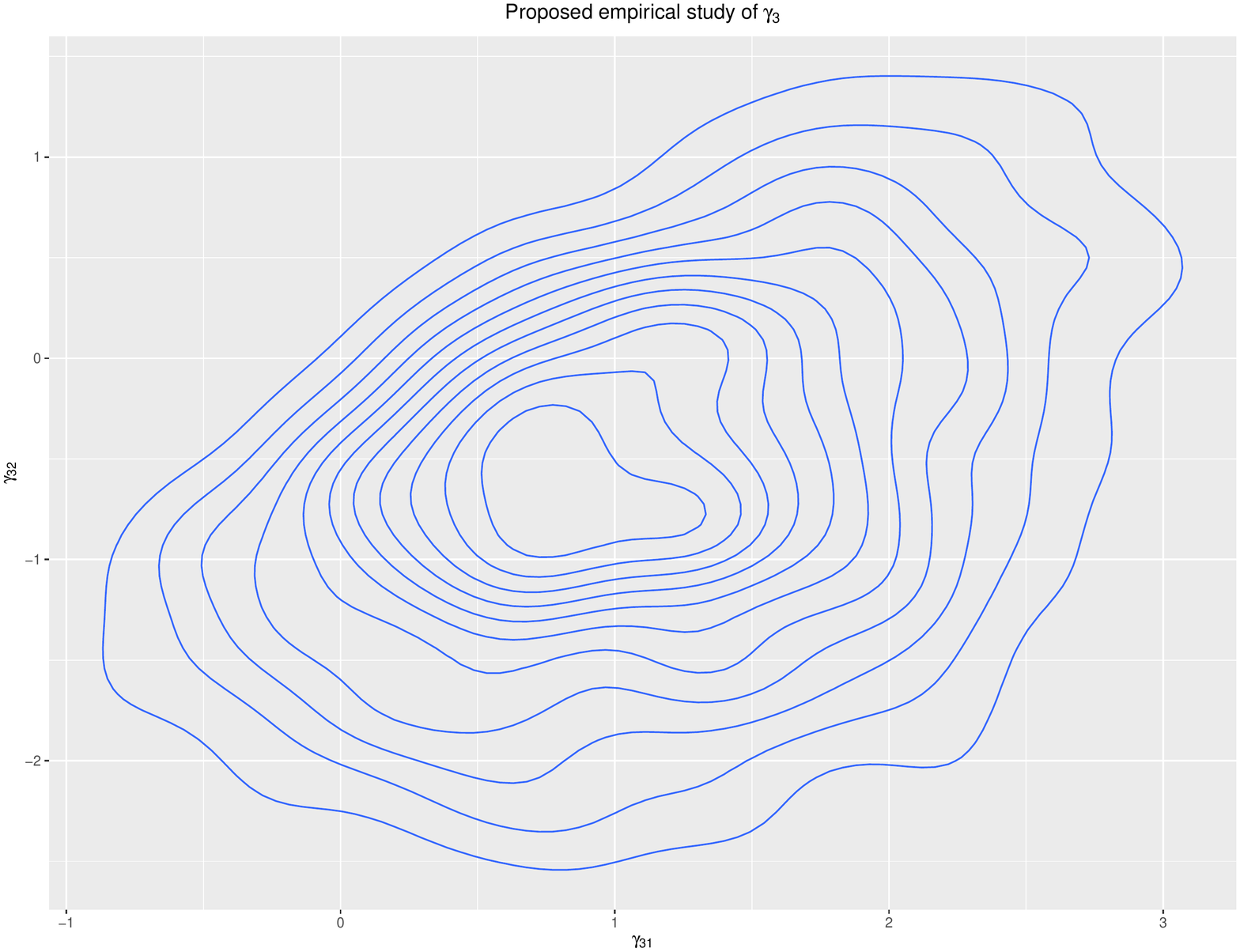}\quad \includegraphics[width = .3\textwidth] {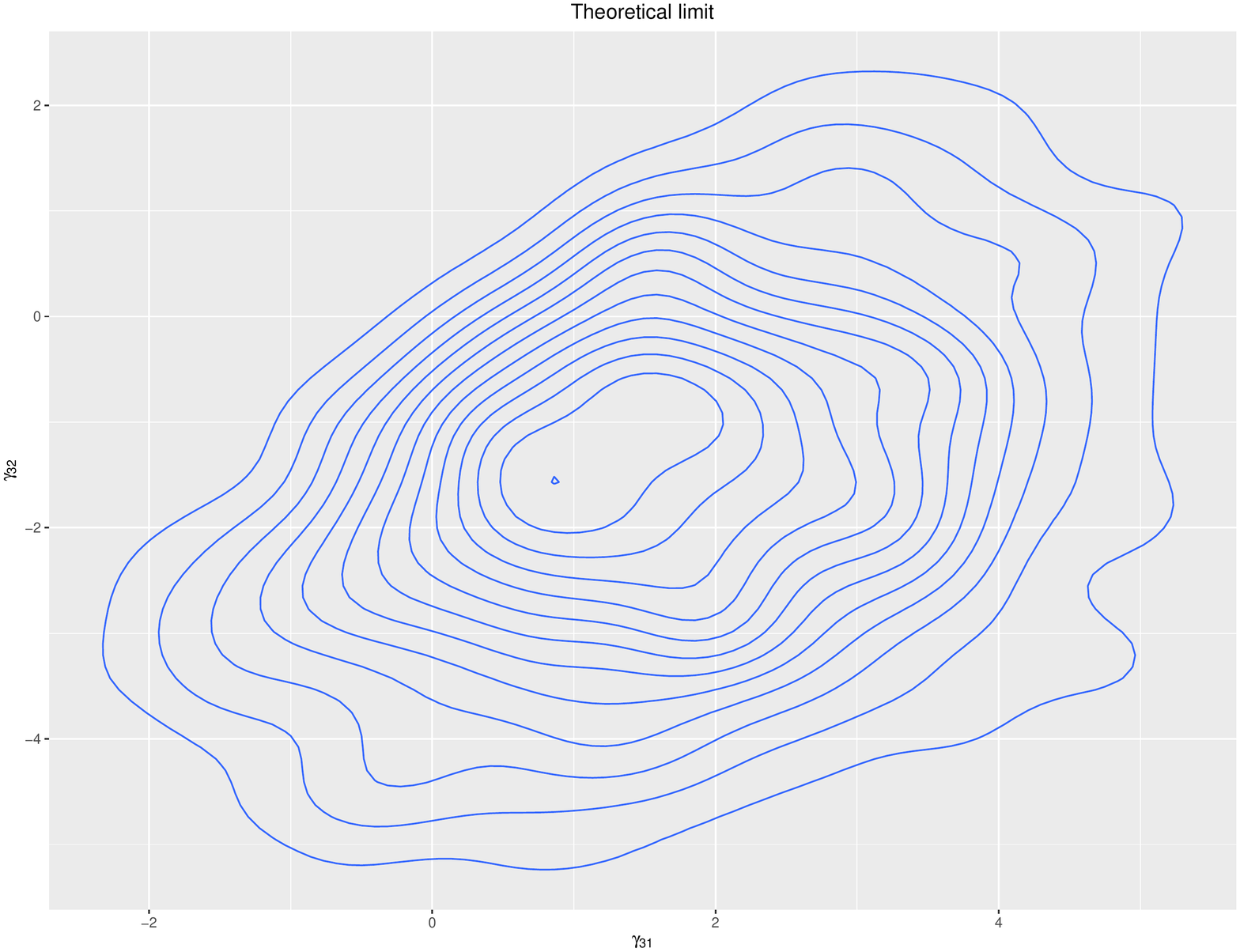}\quad \includegraphics[width = .3\textwidth] {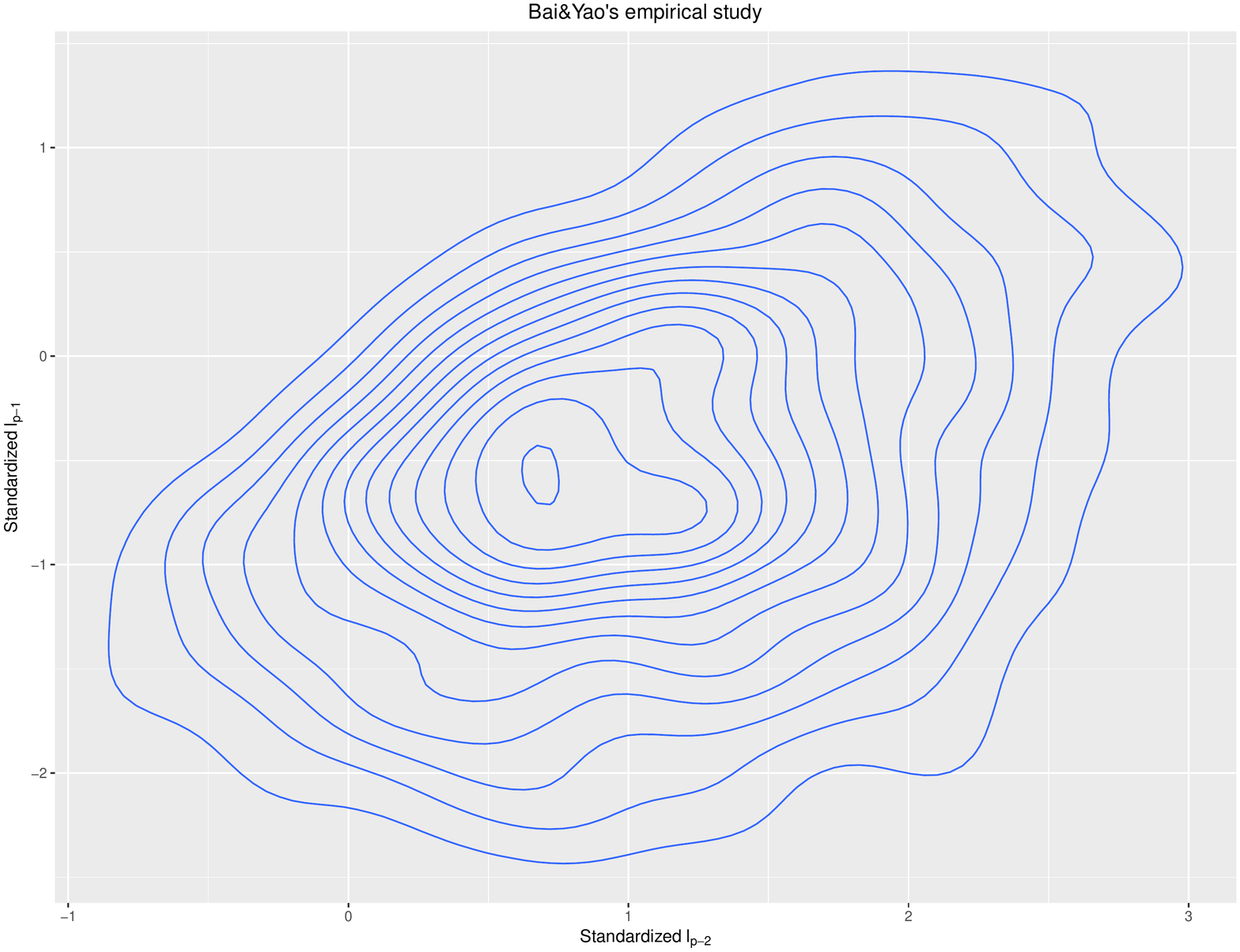}
\caption{ {\rm {\bf Case~II}} under Binomial assumption. 
 }\label{fig:5}
\end{center} 
\end{figure}


\section{Applications and real data analysis}\label{Apply}
\subsection{Application to determine the number of the spikes.}\label{sec5.1}
Since the spiked model is closely related to principal component analysis, it has important applications to the statistical inferences in many scientific fields.  For example,  to reconstruct the original signals in wireless communication,  to rebuild  the observed assets into a low-dimensional set of unobserved variables, which are the factors in economics, and so on. One of the basic but important statistical inferences in these applications is to determine the number of principal components / signals / factors, that is, the number of  spiked eigenvalues.

As mentioned in (\ref{array}), the population eigenvalues are 
\begin{equation}
\beta_{p,1}, \cdots,  \beta_{p,j},\cdots,\beta_{p,p},
\end{equation}
where $\beta_{p,j_k+1}, \cdots \beta_{p, j_k+m_k}=\alpha_k$,  $ k=1,\cdots, K$ are the spikes
 with multiplicity $m_k, k=1,\cdots,K$, and $m_1+\cdots+m_K=M$ is a fixed but unknown number.
 
 We propose to estimate the number of the spikes, $M$, by our result in Theorem~\ref{CLT}. First,
 for every sample eigenvalue $l_j$, $j\in J_k$, it follows from  Theorem~\ref{CLT} that
\[\sqrt{n}\Big(\frac{l_j(S)}{\phi_{n,k}}-1\Big) \bigg/\sigma_k \sim \mathcal{N}(0,1)\]
where $\sigma_k^2=2\theta_k/\kappa_s^2$  under our Assumption~{\bf A} $\sim$ {\bf E} and $\sigma_k^2=(2\theta_k+\beta_x\nu_k)/\kappa_s^2$ under the assumptions of the diagonal or diagonal block independence with the bounded spikes and the 4th moments. 
Then, for every sample eigenvalue $l_j$, we can  calculate  an corresponding interval 
\[C_j=\left[\big(\frac{z_{0.05}\sigma_k}{\sqrt{n}}+1\big)\phi_k, \quad \big(\frac{z_{0.95}\sigma_k}{\sqrt{n}}+1\big)\phi_k\right],\] 
where $z_{0.05}, z_{0.95}$ are the 5\% and 95\% quantiles of the standard normal distribution. 
If $l_j\in C_j$, then it is concluded that the population eigenvalues in according to $l_j$ is a spike; Otherwise, it is not a spike.
Similarly, the same procedures are  conducted  for all the sample eigenvalues, and consequently a sequence of intervals $\{C_j, j=1,\cdots, p\}$ are obtained. Therefore, we propose an estimator for  the number of the spikes, $M$, as follows
\[\hat M_0=\sum\limits_{j=1}^p I_{(l_j \in C_j)}\]
where  $I ( \cdot )$ is 
the indicator function.

However, $\phi_k$ in (\ref{phik}) and $\sigma_k^2$ calculated by Theorem~\ref{CLT} or Remark~\ref{rmk31} cannot  be directly obtained by their expressions in practice, because they are involved with the unknown population spikes $\alpha_k, k=1,\cdots,K$. Therefore, we provide some estimations to get the estimated interval $\hat C_j$, and then 
\[\hat M_0=\sum\limits_{j=1}^p I_{(l_j \in \hat C_j)}\]
which is feasible in practice.

 First, by the first equation in (\ref{eigeneq1}),  it asymptotically holds that
\[l_j+l_j\um(l_j)\alpha_k=0.\]
So we use $-\displaystyle\frac{1}{\um(l_j)}$ to estimate $\alpha_k$, where $\um(\cdot)$ defined in (\ref{um2})  is the Stieltjes transform of the LSD of  the matrix $\frac1n \bX^*\Gamma \bX$. 
Since  the number of spikes is fixed,  the LSD of $\displaystyle\frac1n \bX^*\Gamma \bX$  is approximately the same as the one of the matrix $\displaystyle\frac1n \bX^*UDU^* \bX$, where $D={\rm diag}(D_1,D_2)$. 
Therefore,  we further  define $r_{ij}=|l_i-l_j|/\max(l_i,l_j)$ and adopt 
\[\hat m(l_j)=\frac{1}{p}\sum\limits_{r_{ij}\geq 0.2; i=1}^p(l_i-l_j)^{-1},\]
which 
is a good estimator of $m(l_j)$, and $m(\cdot)$ is the Stieltjes transform of the LSD of  the matrix $\displaystyle\frac1n D_2^{1/2}U_2^*\bX\bX^*U_2 D_2^{1/2}$. 
The setting $\{i\in (1,\cdots, p): r_{ij}\geq 0.2 \}$ is  selected to avoid  the effect of multiple roots, which makes the estimations of the population spikes inaccurate. 
The constant 0.2  is a more suitable threshold value of the ratio based on our simulated results, but for other different populations, the appropriate threshold can be selected by simulation experiments, which is about 10\% to 30\%.
Moreover, by the equation 
\[\um(l_j)=-\displaystyle\frac{1-c}{l_j}+cm(l_j)\]
we obtain the estimator of $\um(l_j)$ as below
\[\hat \um (l_j)=-\displaystyle\frac{1-c}{l_j}+c \hat m(l_j)\]
Finally, we obtain  the estimator of $\alpha_k$, which is expressed as  $\hat\alpha_k=-\displaystyle\frac{1}{\hat \um(l_j)}$.
Without extra efforts, 
the following estimators are automatically  obtained that
\begin{align}
&\hat \phi_k=\phi(\hat \alpha_k);& \label{hatphik}\\
&\hat m(\phi_k)=\frac{1}{p}\sum\limits_{ i=1}^p(l_i-\hat\phi_k)^{-1};\quad &\hat \um(\phi_k)=-\displaystyle\frac{1-c}{\hat\phi_k}+c \hat m(\hat\phi_k);\\
&\hat m_2(\phi_k)=\frac{1}{p}\sum\limits_{i=1}^p(l_i-\hat\phi_k)^{-2};\quad &
\hat \um_2(\phi_k)=\displaystyle\frac{1-c}{\hat\phi_k^2}+c \hat m_2(\hat\phi_k);
\end{align}
So the estimators 
of $\sigma_k, \phi_k$ for the  renewal interval $\hat C_j$ can be expressed by the above estimations.

Through our approach, not only can we estimate the number of the spikes more accurately, but we can also give the estimations of the  population spikes, as well as the limits of the sample spiked eigenvalues. More importantly, we can also provide the specific locations of these  spikes. 

\subsection{Numerical results for Section \ref{sec5.1} } \label{sec5.2}

For the two cases of $\Sigma$ designed in Section~\ref{Sim} with $M=6$, 
we use the method provided in Section~\ref{sec5.1} to estimate the number of the population spikes
 under different population assumptions in Section~\ref{Sim}.
 
 To evaluate the performance of our approach, we shall compare it with some existing methods.
 Since the method in \cite{Onatski2009} provides a better estimator than that in \cite{BaiNg2002}, and 
 \cite{CaiHanPan2017} shows that their approach performs better than that in both of \cite{Onatski2009}  and \cite{Baietal2018}, so we only consider 
  the procedure proposed in \cite{CaiHanPan2017} and the method introduced by \cite{PYao2012},
  which are simply denoted as CHP and PY, respectively. 
  
The following tables report the estimator of the  number of the spikes and its 
corresponding  frequency by three methods. As shown in the tables, our method can give an accurate estimate of the number of the spikes in a large probability, while the other two methods fail to detect the very small spikes because they both assume that the population spikes are the larger eigenvalues, satisfying that  $\alpha_1\geq \alpha_2 \geq \cdots \geq \alpha_M \geq \beta_{1}\geq \cdots \geq \beta_{p-M}$. 
However, it makes sense  to detect all the spiked eigenvalues, including the minimal ones. For example, the original system with all the same eigenvalues has changed after  the input of some signals. If we want to test  which positions in the system have changed,  then it is equivalent to finding out all the spiked eigenvalues.
 In addition, our method has an advantage over other methods,
that is, it also presents the the estimations of the  population spikes, and the specific locations of these  spikes in the tables. 

\begin{table}
\caption{Estimations of the number of the population spikes and its  frequency.}
\begin{center}
   \begin{tabularx}{12cm}{XXXXXXXXX}   
 \multicolumn{9}{c} {~~~~{ \rm \bf{Case I}} under Gaussian Assumption}\\
\hline
\hline
&& &  \multicolumn{5}{c} {Frequency of $\hat M_0$} &\\[0.5mm]
\cline{2-9}
&$\hat M_0$&1&2&3&4&5&{\bf 6}&7\\[0.5mm]
\hline
$p$=200&Ours & 0 &0 &0 &0 &0.024 &{\bf 0.943}  &0.033\\[0.5mm]
$n$=1000&CHP &0  &0 &0 &1 &0      &{\bf 0}        &0\\[0.5mm]
&PY&0.358&0&0.642&0&0&{\bf 0}&0\\[0.5mm]
\hline
$p$=400&Ours & 0 &0 &0 &0 &0.027 &{\bf0.928} &0.045\\[0.5mm]
$n$=1000&CHP &0  &0 &0 &1 &0      &{\bf 0}        &0\\[0.5mm]
&PY&0.371&0&0.629&0&0&{\bf 0}&0\\[0.5mm]
\hline
\hline

  \end{tabularx} 
     \begin{tabularx}{12cm}{XXXXXXXXX}   
 \multicolumn{9}{c} {~~~~{ \rm \bf{Case I}} under Binomial Assumption}\\[0.5mm]
\hline
\hline
&& &  \multicolumn{5}{c} {Frequency of $\hat M_0$} &\\[0.5mm]
\cline{2-9}
&$\hat M_0$&1&2&3&4&5&{\bf 6}&7\\[0.5mm]
\hline
$p$=200&Ours & 0 &0 &0 &0 &0.054 &{\bf 0.943}  &0.003\\[0.5mm]
$n$=1000&CHP &0  &0 &0 &1 &0      &{\bf 0}        &0\\[0.5mm]
&PY&0.622&0&0.378&0&0&{\bf 0}&0\\
\hline
$p$=400&Ours & 0 &0 &0 &0.005 &0.073 &{\bf0.910} &0.012\\[0.5mm]
$n$=1000&CHP &0  &0 &0 &1 &0      &{\bf 0}        &0\\[0.5mm]
&PY&0.640&0&0.360&0&0&{\bf 0}&0\\[0.5mm]
\hline
\hline
  \end{tabularx} 
\end{center}
\label{default}
\end{table}%

%

\begin{table}[htp]
\caption{Estimations of the population spikes and its locations by our method.}
\begin{center}
   \begin{tabularx}{12cm}{XXXXXXX}   
 \multicolumn{7}{c} {~~~~{ \rm \bf{Case I}} under Gaussian Assumption}\\[0.5mm]
\hline\hline\\[-2mm]
$p$=200;~$n$=1000   &  \multicolumn{6}{c} {Estimation of the location of the population spikes} \\[0.5mm]
\cline{3-6}\\[-1mm]
&\multicolumn{6}{c}{(1,~2,~3, ~198,~199,~200)}\\[0.5mm]
&  \multicolumn{6}{c} {Estimation of the population spikes} \\[0.5mm]
\cline{3-6}\\[-1mm]
& $\hat \alpha_1$&$ \hat \alpha_2$& $\hat \alpha_3$& $\hat \alpha_4$& $\hat \alpha_5$& $\hat \alpha_6$\\[0.5mm]
&3.993 &3.207 &3.014 & 0.202 &0.198 &0.098\\[0.5mm]
\hline\\[-2mm]
$p$=400;~$n$=1000   &  \multicolumn{6}{c} {Estimation of the location of the population spikes} \\[0.5mm]
\cline{3-6}\\[-1mm]
&\multicolumn{6}{c}{(1,~2,~3, ~198,~199,~200)}\\[0.5mm]
&  \multicolumn{6}{c} {Estimation of the population spikes} \\[0.5mm]
\cline{3-6}\\[-1mm]
& $\hat \alpha_1$&$ \hat \alpha_2$& $\hat \alpha_3$& $\hat \alpha_4$& $\hat \alpha_5$& $\hat \alpha_6$\\[0.5mm]
&3.930 &3.052 &3.015 & 0.206 &0.186 &0.117\\[0.5mm]
\hline
\hline
  \end{tabularx} 
 ~\\

  \begin{tabularx}{12cm}{XXXXXXX}   
 \multicolumn{7}{c} {~~~~{ \rm \bf{Case I}} under Binomial Assumption}\\[0.5mm]
\hline\hline\\[-2mm]
$p$=200;~$n$=1000   &  \multicolumn{6}{c} {Estimation of the location of the population spikes} \\[0.5mm]
\cline{3-6}\\[-1mm]
&\multicolumn{6}{c}{(1,~2,~3, ~198,~199,~200)}\\[0.5mm]
&  \multicolumn{6}{c} {Estimation of the population spikes} \\[0.5mm]
\cline{3-6}\\[-1mm]
& $\hat \alpha_1$&$ \hat \alpha_2$& $\hat \alpha_3$& $\hat \alpha_4$& $\hat \alpha_5$& $\hat \alpha_6$\\[0.5mm]
&4.025 &3.091 &2.951 & 0.194 &0.185 &0.099\\[0.5mm]
\hline\\[-2mm]
$p$=400;~$n$=1000   &  \multicolumn{6}{c} {Estimation of the location of the population spikes} \\[0.5mm]
\cline{3-6}\\[-1mm]
&\multicolumn{6}{c}{(1,~2,~3, ~198,~199,~200)}\\[0.5mm]
&  \multicolumn{6}{c} {Estimation of the population spikes} \\[0.5mm]
\cline{3-6}\\[-1mm]
& $\hat \alpha_1$&$ \hat \alpha_2$& $\hat \alpha_3$& $\hat \alpha_4$& $\hat \alpha_5$& $\hat \alpha_6$\\[0.5mm]
&4.018 &3.008 &2.876 & 0.207 &0.194 &0.101\\[0.5mm]
\hline
\hline
  \end{tabularx} 
\end{center}
\label{default}
\end{table}%


\begin{table}[htp]
\caption{Estimations of the number of the population spikes and its  frequency.}
\begin{center}
   \begin{tabularx}{12cm}{XXXXXXXXX}   
 \multicolumn{9}{c} {~~~~{ \rm \bf{Case II}} under Gaussian Assumption}\\[0.5mm]
\hline
\hline
&& &  \multicolumn{5}{c} {Frequency of $\hat M_0$} &\\[0.5mm]
\cline{2-9}
&$\hat M_0$&1&2&3&4&5&{\bf 6}&7\\[0.5mm]
\hline
$p$=200&Ours & 0 &0 &0 &0 &0.019 &{\bf 0.950}  &0.031\\[0.5mm]
$n$=1000&CHP &0  &0 &0 &1 &0      &{\bf 0}        &0\\[0.5mm]
&PY&0.375&0&0.625&0&0&{\bf 0}&0\\[0.5mm]
\hline
$p$=400&Ours & 0 &0 &0 &0 &0.025 &{\bf0.927} &0.048\\[0.5mm]
$n$=1000&CHP &0  &0 &0 &1 &0      &{\bf 0}        &0\\[0.5mm]
&PY&0.356&0&0.644&0&0&{\bf 0}&0\\[0.5mm]
\hline
\hline
  \end{tabularx} 
     \begin{tabularx}{12cm}{XXXXXXXXX}   
 \multicolumn{9}{c} {~~~~{ \rm \bf{Case II}} under Binomial Assumption}\\[0.5mm]
\hline
\hline
&& &  \multicolumn{5}{c} {Frequency of $\hat M_0$} &\\[0.5mm]
\cline{2-9}
&$\hat M_0$&1&2&3&4&5&{\bf 6}&7\\[0.5mm]
\hline
$p$=200&Ours & 0 &0 &0 &0 &0.018 &{\bf 0.980}  &0.002\\[0.5mm]
$n$=1000&CHP &0  &0 &0 &1 &0      &{\bf 0}        &0\\[0.5mm]
&PY&0.343&0&0.657&0&0&{\bf 0}&0\\[0.5mm]
\hline
$p$=400&Ours & 0 &0 &0 &0 &0.041 &{\bf0.952} &0.007\\[0.5mm]
$n$=1000&CHP &0  &0 &0 &1 &0      &{\bf 0}        &0\\[0.5mm]
&PY&0.374&0.001&0.625&0&0&{\bf 0}&0\\[0.5mm]
\hline
\hline
  \end{tabularx} 
\end{center}
\label{default}
\end{table}%

\begin{table}[htp]
\caption{Estimations of the population spikes and its locations by our method.}
\begin{center}
   \begin{tabularx}{12cm}{XXXXXXX}   
 \multicolumn{7}{c} {~~~~{ \rm \bf{Case II}} under Gaussian Assumption}\\[0.5mm]
\hline\hline\\[-2mm]
$p$=200;~$n$=1000   &  \multicolumn{6}{c} {Estimation of the location of the population spikes} \\[0.5mm]
\cline{3-6}\\[-1mm]
&\multicolumn{6}{c}{(1,~2,~3, ~198,~199,~200)}\\[0.5mm]
&  \multicolumn{6}{c} {Estimation of the population spikes} \\[0.5mm]
\cline{3-6}\\[-1mm]
& $\hat \alpha_1$&$ \hat \alpha_2$& $\hat \alpha_3$& $\hat \alpha_4$& $\hat \alpha_5$& $\hat \alpha_6$\\[0.5mm]
&4.080 &3.122 &2.909 & 0.208 &0.191 &0.010\\[0.5mm]
\hline\\[-2mm]
$p$=400;~$n$=1000   &  \multicolumn{6}{c} {Estimation of the location of the population spikes} \\[0.5mm]
\cline{3-6}\\[-1mm]
&\multicolumn{6}{c}{(1,~2,~3, ~198,~199,~200)}\\[0.5mm]
&  \multicolumn{6}{c} {Estimation of the population spikes} \\[0.5mm]
\cline{3-6}\\[-1mm]
& $\hat \alpha_1$&$ \hat \alpha_2$& $\hat \alpha_3$& $\hat \alpha_4$& $\hat \alpha_5$& $\hat \alpha_6$\\[0.5mm]
&3.949 &3.217&2.887 & 0.231 &0.188 &0.102\\[0.5mm]
\hline
\hline
  \end{tabularx} 
 ~\\

  \begin{tabularx}{12cm}{XXXXXXX}   
 \multicolumn{7}{c} {~~~~{ \rm \bf{Case II}} under Binomial Assumption}\\[0.5mm]
\hline\hline\\[-2mm]
$p$=200;~$n$=1000   &  \multicolumn{6}{c} {Estimation of the location of the population spikes} \\[0.5mm]
\cline{3-6}\\[-1mm]
&\multicolumn{6}{c}{(1,~2,~3, ~198,~199,~200)}\\[0.5mm]
&  \multicolumn{6}{c} {Estimation of the population spikes} \\[0.5mm]
\cline{3-6}\\[-1mm]
& $\hat \alpha_1$&$ \hat \alpha_2$& $\hat \alpha_3$& $\hat \alpha_4$& $\hat \alpha_5$& $\hat \alpha_6$\\[0.5mm]
&3.838 &3.275 &2.922 & 0.216 &0.192 &0.098\\[0.5mm]
\hline\\[-2mm]
$p$=400;~$n$=1000   &  \multicolumn{6}{c} {Estimation of the location of the population spikes} \\[0.5mm]
\cline{3-6}\\[-1mm]
&\multicolumn{6}{c}{(1,~2,~3, ~198,~199,~200)}\\[0.5mm]
&  \multicolumn{6}{c} {Estimation of the population spikes} \\[0.5mm]
\cline{3-6}\\[-1mm]
& $\hat \alpha_1$&$ \hat \alpha_2$& $\hat \alpha_3$& $\hat \alpha_4$& $\hat \alpha_5$& $\hat \alpha_6$\\[0.5mm]
&4.123 &3.211 &3.001 & 0.216 &0.195 &0.096\\[0.5mm]
\hline
\hline
  \end{tabularx} 
\end{center}
\label{default}
\end{table}%

\subsection{Real data analysis}

Now we apply the procedure of determining the number of the spikes  proposed in Section \ref{sec5.1} to the actual data titled as "Early stage of Indians Chronic Kidney Disease(CKD)"\footnote{ The data is downloaded from 
\url{https://archive.ics.uci.edu/ml/datasets/Chronic_Kidney_Disease.}}.

The data came from records collected by a hospital in India over a period of about 2 months, which 
consists of 400 observations and 25 variables. The first 24 variables $X_1\cdots,X_{24}$ are independent variables,  which rerecord the various laboratory indicators and hospital records, including 
age,	blood pressure (bp), specific gravity (sg), albumin (al), sugar (su), red blood cells (rbc), pus cell (pc), pus cell clumps (pcc),  bacteria (ba),  blood glucose random (bgr), blood urea (bu), serum creatinine (sc), sodium (sod), potassium (pot), hemoglobin (hemo), packed cell volume (pcv), white blood cell count (wc), 
red blood cell count (rc), hypertension (htn), diabetes mellitus (dm), coronary artery disease (cad), appetite (appet), pedal edema (pe), anemia (ane).
The 25th variable is the dependent variable to indicate whether the patient has chronic kidney disease(ckd).

We apply our method to  determine the number of the spikes of the covariance matrix $\Sigma_0$ generated from the 
standardized data of the first 24 variables with 114 observations (For simplicity, we have only chosen 114 observations without missing values). Then, we obtain the following results  in the Table~\ref{real}.

\begin{table}[htp]
\label{real}
\caption{Estimations of the number, the sizes and the location of the population spikes by using the real data.}
\begin{center}
   \begin{tabularx}{12cm}{XXXXXXXXXX}   
\hline\hline
Number: &  \multicolumn{9}{c} {9} \\[1mm]
Location: &\multicolumn{9}{c}{(1,~2, ~18,~19,~20,~21, ~22, ~23, ~24)}\\[1mm]
Sizes: & ~~$\hat \alpha_1$&$ \hat \alpha_2$& $\hat \alpha_{3}$& $\hat \alpha_4$& $\hat \alpha_5$& $\hat \alpha_6$& $\hat \alpha_7$&$ \hat \alpha_8$& $\hat \alpha_{9}$\\[1mm]
&10.818 & 2.143 & 0.219 & 0.166 & 0.124 & 0.101 & 0.064  &  0.048 &0.009\\[1mm]
\hline
\hline
  \end{tabularx} 

\end{center}
\label{default}
\end{table}%

 As seen from the Table~\ref{real}, 
if we define the singular value decomposition of $\Sigma_0$  as $\Sigma_0=U\Lambda_0 U'$, and ${\bf u}_{i}$ is the $i$th
 column of the orthogonal matrix $U$, then
the factors generated from  independent variables $X=(X_1\cdots,X_{24})' $ can be roughly divided into three groups: one group has a greater impact with larger spiked eigenvalues, like ${\bf u}'_{1}X, {\bf u}'_{2}X$; Another group of much weaker effects, like ${\bf u}'_{i}X$, $i=18,\cdots, 24$; The last  group  that may have most of the same effects, like ${\bf u}'_{i}X$, $i=3,\cdots, 17$.
Furthermore, if we use the data with the missing values made up, the experimental results may be more accurate. To make up for missing values, one can use the missForest function in the package missForest.


\section{Conclusion}\label{Con}

In this paper, we propose a G4MT for a generalized spiked covariance matrix, which shows the universality of the asymptotic law for its spiked  eigenvalues. Through the concrete example of the CLT of normalized spiked eigenvalues, we illustrate the basic idea and procedures of the G4MT to show the universality of a limiting result related to the large dimensional random matrices. Unlike \cite{TaoVu2015}, we avoid the estimates of high-order partial derivatives of an implicit function to the entries of the random matrix, and thus, the strong condition $C_0$ of sub-exponential property is avoided.  Moreover, the  required 4th moment  condition is reduced to a tail probability in Assumption~$\bB$,  which is necessary for the existence of the largest eigenvalue limit.
Without the constraint of the existence of the  4th moment,  we only need a more regular and minor condition (\ref{CondU1}) on the elements of $U_1$. On the one hand, our result has much wider applications than \cite{BaiYao2008, BaiYao2012};  on the other hand, the result of 
Bai and Yao (2012) shows the necessity of the condition (\ref{CondU1}).

\section*{Acknowledgments}


%

\newpage


\begin{center}
{\bf Supplement to "Generalized  Four Moment  Theorem and an Application to  CLT  for Spiked Eigenvalues of high-dimensional Covariance Matrices".}
\end{center}
\renewcommand\thesection{\Alph{section}} 
\setcounter{section}{0}
 \section{The detailed explanation of Assumption~{\bf D}}\label{app0}
 
 For the Assumption~{\bf D}, we have 
\begin{remark}
 In the proof of the main theorems, this assumption is actually used as
 \[ \max\limits_{t,s} |u_{ts}|^2 \big(\rE|x_{11}|^4-3\big)I(|x_{11}|<\eta_n\sqrt{n}) \rightarrow 0, \]
where $\eta_n\to 0$ with a slow rate. In fact, because of Assumption $\bB$, the condition (\ref{CondU1}) remains the same as 
\[ \max\limits_{t,s} |u_{ts}|^2\big(\rE|x_{11}|^4-3\big)I(|x_{11}|<\eta\sqrt{n}) \rightarrow 0,\]
provided $|\log \eta|<1/4 \log n$.
 \end{remark}

\begin{remark}
 If the 4th moment of population random variable  $X$ is bounded,  only the condition 
 \[ \max\limits_{t,s} |u_{ts}|^2  \rightarrow 0 \]
  is needed; if the  4th moment does not exist, we only need  
\[\max\limits_{t,s} |u_{ts}|^2 =O({\log^{-1} n})\]
at most, since the 4th moment of the truncated variables is $o( \log n)$ by Lemma~\ref{lemmoments}.
\end{remark}
For example, assume that the random variable $X$  follows the population distribution with the density 
\[d(x)=\frac{a_0}{ \big(|x|+1\big) ^5 \log{\big(|x|+2\big)}},\]
where $a_0$ is a scaling number;
then, 
\begin{align*}
{\rm P}\left(|X| >x\right)&=O\left(|x|^{-4}  \log^{-1}{\big(|x|+2\big)}\right),
 \end{align*}
 which implies 
 \[\rE|X|^4I(|X|> \eta_n \sqrt{n})=O\big(\log(\log n)\big)\]
Then, the condition (\ref{CondU1}) 
 can be reduced to a weaker one, {\rm i.e.}
\[\max\limits_{t,s} |u_{ts}|^2=o\big(1/\log (\log n)\big).\]

 \section{Proof of the Truncation and Centralization}  \label{app1}
  
   By the Assumption~$\bB$,  
let  $\tau=\eta\sqrt{n} \rightarrow \infty$;  for every fixed $\eta>0$,  we obtain that 
\[\eta^4n^2 {\rm P} \left(|x_{ij}|>\eta\sqrt{n}\right) \rightarrow 0. \]
The limiting behavior  still performs well  by removing the fixed $\eta^4$, that is, 
 \be
 n^2 {\rm P} \left(|x_{ij}|>\eta\sqrt{n}\right) \rightarrow 0.\label{2m4}
 \ee
 Because of the arbitrariness of $\eta$ in (\ref{2m4}),  it is proved by Lemma~15 in \cite{LiBaiHu2016}
  that there exist a sequence of positive  numbers $\eta=\eta_n\to 0$ such that 
  \be
  n^2 {\rm P} \left(|x_{ij}|>\eta_n\sqrt{n}\right) \rightarrow 0.
  \label{n2p}
  \ee
  The convergence rate of the constants $\eta_n$ can be selected arbitrarily slowly, and hence, we may assume  that 
$\eta_nn^{1/5} \rightarrow \infty$.

  Then, consider the truncated samples  $\hat x_{ij}= x_{ij} {\rm I}(|x_{ij}| < \eta_n \sqrt{n})$, set 
 \[
  \br_j=\frac{1}{\sqrt{n}} T_p\bx_j,  \quad \hat \br_j= \frac{1}{\sqrt{n}} T_p \hat\bx_j
 \]
  where $T_pT_p^*=\Sigma$. 
Consequently, the generalized spiked sample covariance $S$  is expressed as 
  \[S=T_p
\left(\frac{1}{n}\bX\bX^*\right)T_p^*= \sum \limits_{j=1} ^{n}\br_j\br_j^*:= \bR_n\bR_n^*.\]
Define the matrix with truncated entries as 
    \[\hat S= \sum \limits_{j=1} ^{n} \hat \br_j\hat \br_j^*:= \hat \bR_n\hat \bR_n^*.\]
     Therefore, according to the property  (\ref{n2p}), we have 
     \begin{eqnarray*}
    && {\rm P}\left( S \neq  \hat S\right)=    {\rm P}\left( \sum \limits_{j=1} ^{n}\br_j\br_j^* \neq  \sum \limits_{j=1} ^{n} \hat \br_j\hat \br_j^*\right)\\
    && \leq \sum\limits_{i,j} {\rm P}\left(|x_{ij}| \geq \eta_n \sqrt{n} \right)=np {\rm P}\left(|x_{11}| \geq \eta_n \sqrt{n} \right)\rightarrow 0,  \quad \text{as } 
  n, p \rightarrow \infty.
     \end{eqnarray*}
    
 Next,  define the truncated and centralized sample covariance matrix as   \[\tilde S= \sum \limits_{j=1} ^{n} \tilde \br_j\tilde \br_j^*:= \tilde \bR_n\tilde \bR_n^*,\]
where $ \tilde\br_j= T_p \tilde\bx_j /\sqrt{n}$ and $ \tilde x_{ij}=(\hat x_{ij} - \rE \hat x_{ij})/ \sigma_{n}$  with $\sigma_{n}^2=\rE \left|\hat x_{ij}- \rE  \hat x_{ij} \right|^2$.
Then,  by Theorem A.46 of \cite{BaiSilverstein2010}, we have 
\begin{eqnarray*}
&&\max \left\{ l_j^{1/2}(\tilde S) -l_j^{1/2}(\hat S)\right\} \leq  \Vert \tilde \bR_n -\hat \bR_n\Vert 
= \Vert \rE \hat \bR_n \Vert+\Vert (1-\sigma_n)\tilde \bR_n\Vert\\
& \leq& n^{-1/2}|\rE \hat x_{11}|\min(p,n)+\frac{1-\sigma_n^2}{1+\sigma_n}\Vert \tilde \bR_n\Vert\\
&=&  o_{a.s}(n^{-1}),
\end{eqnarray*}
where we have used the fact that there  exists a finite constant $C_0$ such that $|\rE \hat x_{11}|\le C_0 \int^{+\infty}_{\eta_n\sqrt{n}}P(|x_{11}|\ge x) \md x=o(n^{-3/2})$, 
and that $1-\sigma_n^2=\rE x_{11}^2I(|x_{11}|\ge \eta_n\sqrt{n})+\rE x^2_{11}I(|x_{11}|<\eta_n\sqrt{n})=o(n^{-1})$ in Lemma~C.1 and $\Vert \hat \bR_n\Vert \le \Vert T_p\Vert (1+\sqrt{c}+\varepsilon), a.s.$.

Thus, it is concluded that the procedure of centralization does not have an effect on the limiting distribution of the spiked eigenvalues
 because of 
\[\sqrt{n}\left\{ \frac{l_j(\tilde S) }{\phi_{n,k} }- 1-\left(\frac{l_j(\hat S)}{\phi_{n,k}}-1\right)  \right\} \leq o(\phi_{n,k}^{-1}n^{-{\frac{1}{2}}}).\]

\section{Lemmas} \label{app3}
Some useful lemmas are provided in this section, which are needed to prove the Theorem~\ref{thm2}.
First, we investigate the arbitrary moments of  $\ \tilde x_{ij}$ and depict  their convergence rates  in the  following lemma.
\begin{lemma}\label{lemmoments}
For the entries $\{\tilde x_{ij}\}$ truncated at $\eta_n \sqrt{n}$, centralized and renormalized, it follows that 
\begin{eqnarray*}
|\rE \tilde x_{ij}^\alpha| \leq o(\sqrt{n})^{\alpha-4}, ~ \alpha>4;\\
|\rE \tilde x_{ij}^\alpha| = o(\log n), ~ \alpha=4;\\
|\rE \tilde x_{ij}^\alpha| \leq o(\sqrt{n})^{\alpha-4}, ~ \alpha<4;
\end{eqnarray*}
\end{lemma}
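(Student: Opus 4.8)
The plan is to peel off the centralization and renormalization first, reducing everything to moments of the purely truncated variable $\hat x_{ij}=x_{ij}\,{\rm I}(|x_{ij}|<\eta_n\sqrt n)$, and then to read off those moments from the tail condition in Assumption~$\bB$ through the layer-cake formula.

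\emph{Step 1: reduction to $\hat x_{ij}$.} Since $\tilde x_{ij}=(\hat x_{ij}-\rE\hat x_{ij})/\sigma_n$ with $\sigma_n^2=\rE|\hat x_{ij}-\rE\hat x_{ij}|^2$, and since the centralization estimates already established in Appendix~\ref{app1} give $|\rE\hat x_{ij}|=o(n^{-3/2})$ and $1-\sigma_n^2=o(n^{-1})$, we have $\sigma_n\to1$ and $\sigma_n^{-\alpha}=O(1)$ for each fixed $\alpha$. For $\alpha\ge1$, Minkowski's inequality then yields $|\rE\tilde x_{ij}^\alpha|\le\rE|\tilde x_{ij}|^\alpha\le C_\alpha\big(\rE|\hat x_{ij}|^\alpha+|\rE\hat x_{ij}|^\alpha\big)$, and the error $|\rE\hat x_{ij}|^\alpha=o(n^{-3\alpha/2})$ is negligible against all three target bounds; for $0<\alpha<1$ the same conclusion follows from subadditivity of $t\mapsto t^\alpha$. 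So it suffices to bound $\rE|\hat x_{ij}|^\alpha$.

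\emph{Step 2: moments of $\hat x_{ij}$.} The layer-cake identity gives
\be
\rE|\hat x_{ij}|^\alpha=\rE|x_{11}|^\alpha {\rm I}(|x_{11}|<\eta_n\sqrt n)\le\int_0^{\eta_n\sqrt n}\alpha\,t^{\alpha-1}{\rm P}(|x_{11}|>t)\,\md t .
\ee
I would split the integral at a large cutoff $A$: on $[0,A]$ the bound ${\rm P}(|x_{11}|>t)\le1$ contributes only $A^\alpha=O(1)$, while on $[A,\eta_n\sqrt n]$ I use ${\rm P}(|x_{11}|>t)\le g(t)/t^4$, where $g(t):=\sup_{s\ge t}s^4{\rm P}(|x_{11}|>s)$ is nonincreasing and, by Assumption~$\bB$, tends to $0$. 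The remaining integral $\alpha\,g(A)\int_A^{\eta_n\sqrt n}t^{\alpha-5}\,\md t$ is $O\big(g(A)(\eta_n\sqrt n)^{\alpha-4}\big)$ when $\alpha>4$, $O\big(g(A)\log(\eta_n\sqrt n)\big)$ when $\alpha=4$, and $O\big(g(A)A^{\alpha-4}\big)=O(g(A))$ when $\alpha<4$; in the last case one may alternatively just note $\rE|\hat x_{ij}|^\alpha\le\rE|x_{11}|^\alpha<\infty$, the finiteness again being a consequence of Assumption~$\bB$. Letting $A=A_n\to\infty$ slowly, while keeping $\eta_n$ slow enough (exactly the device that produced (\ref{n2p})), forces $g(A_n)\to0$ and $g(\eta_n\sqrt n)\to0$, so that $\rE|\hat x_{ij}|^\alpha=o\big((\sqrt n)^{\alpha-4}\big)$ for $\alpha>4$, $=o(\log n)$ for $\alpha=4$, and $=O(1)$ for $\alpha<4$. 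Combined with Step~1, this is the assertion of the lemma.

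\emph{Main obstacle.} The computations above are routine; the one point that needs care is making the $o(\cdot)$ in the $\alpha\ge4$ cases uniform. The pointwise statement $t^4{\rm P}(|x_{11}|>t)\to0$ has to be promoted to the monotone envelope $g$ and then synchronized with a slowly growing cutoff $A_n$ and a slowly decaying $\eta_n$, so that $g(A_n)$ and $g(\eta_n\sqrt n)$ both vanish. This is precisely the slow-truncation argument used to select the sequence $\eta_n$ giving (\ref{n2p}) (cf.\ Lemma~15 of \cite{LiBaiHu2016}); once it is in place, the rest is just Minkowski's inequality and the layer-cake formula.
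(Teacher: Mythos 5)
For $\alpha\ge4$ your argument is essentially the paper's: express the truncated moment by the layer--cake formula on $[0,\eta_n\sqrt n]$, split at a slowly growing cutoff, and feed in the $o(t^{-4})$ tail decay; both routes give $o(\log n)$ for $\alpha=4$ and $o\big((\eta_n\sqrt n)^{\alpha-4}\big)\le o\big((\sqrt n)^{\alpha-4}\big)$ for $\alpha>4$. Your Step~1 (peeling off centralization and normalization via Minkowski, using $|\rE\hat x_{ij}|=o(n^{-3/2})$ and $\sigma_n\to1$) is fine and is implicit in the paper's one-line remark that only the $\alpha=1$ estimate is affected by centering.

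The $\alpha<4$ case is where your proposal goes wrong. You bound $\rE|\hat x_{ij}|^\alpha$ by the integral over $[0,\eta_n\sqrt n]$ and arrive at $O(1)$, then declare that ``this is the assertion of the lemma.'' It is not: for $\alpha<4$ the stated rate $o\big((\sqrt n)^{\alpha-4}\big)$ tends to zero, whereas $\rE|\tilde x_{ij}|^\alpha$ is genuinely of order one (for instance $\rE\tilde x_{ij}^2=1$ by the very definition of $\sigma_n$), so no bound of the claimed form can hold for the truncated moment itself. What the paper's proof actually controls in this case is the complementary \emph{tail} integral
$\int_{\eta_n\sqrt n}^\infty\alpha x^{\alpha-1}P(|x_{11}|>x)\,\md x = o\big((\eta_n\sqrt n)^{\alpha-4}\big)$,
i.e.\ the truncation error $\rE|x_{11}|^\alpha I(|x_{11}|>\eta_n\sqrt n)$. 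That is the quantity which vanishes at the advertised rate and is what the centralization estimates in Supplement~\ref{app1} use ($|\rE\hat x_{11}|$ and $1-\sigma_n^2$ are exactly such tail moments, since $\rE x_{11}=0$ and $\rE|x_{11}|^2=1$). Your Step~2 integrates over $[0,\eta_n\sqrt n]$ in all three regimes; for $\alpha<4$ you need to integrate over $[\eta_n\sqrt n,\infty)$ instead, and the $O(1)$ conclusion you reach neither matches the lemma's claim nor supplies what the downstream truncation/centralization argument requires.
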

\begin{proof} We only estimate the inequalities above with $\tilde x_{ij}$ replaced by $\hat x_{ij}$ because the centralization only involves the third estimate with $\alpha=1$. 
For any integer $\alpha<4$, we have
  \begin{eqnarray*}
|\rE \hat x_{ij}^\alpha|
&=&\int_{\eta_n\sqrt{n}}^\infty \alpha x^{\alpha-1}P(|x_{ij}|>x)dx\nonumber\\
&\leq& \int_{\eta_n\sqrt{n}}^\infty o(x^{\alpha-5})dx=o((\eta_n\sqrt{n})^{\alpha-4}).
 \end{eqnarray*}
 Therefore, $|\rE \hat x_{ij}^\alpha| \leq o(\sqrt{n})^{\alpha-4}$,  if $\alpha<4$.
 
 For the case of  $\alpha=4$,  we have 
  \bqn
 |\rE \hat\bx_i^4|&=&
  \int_{0}^{\eta_n\sqrt{n}} x^4\md P(|x_{ij}| \leq x)
 \\&=&-\int_{0}^{\eta_n\sqrt{n}} x^4\md P(|x_{ij}| > x)
\le \int_{0}^{\eta_n\sqrt{n}} 4x^3P(|x_{ij}|>x)dx\\
&\leq &O(1)+\int_{K}^{\eta_n\sqrt{n}} 4x^3o(x^{-4}) \md x=o(\log n).
 \eqn

 For any integer $\alpha>4$,
  \bqn
 \rE |\hat\bx_i^\alpha|&=&
 \int_{0}^{\eta_n\sqrt{n}} x^\alpha\md P_x(|x_{ij}| \leq x)
 \\&\le &\alpha \int_{0}^{\eta_n\sqrt{n}} x^{\alpha-1} P(|x_{ij}|>x)\md x=o(\sqrt{n}^{\alpha-4})
 \eqn
\end{proof}

 Second, before proceeding with the proof of the G4MT numbered as Theorem~\ref{thm2},
 we  begin with some preliminary lemmas used during the process. 

\begin{lemma}\label{lembetak}
 Let ${\bX}=(\bx_1,\cdots,\bx_n)$ and ${\bf Y}=(\by_1,\cdots,\by_n)$ be two independent random matrices satisfying Assumptions $\bA\sim \bE$, and set ${\bf X}_k=(\bx_1,\cdots,\bx_k,$ $\by_{k+1},\cdots,\by_n)$ with convention 
${\bf X}={\bf X}_n$ and ${\bf Y}={\bf X}_0$.
 Denote ${\bf X}_{k0}=(\bx_1\cdots,\!\bx_{k\!-\!1},$ $\by_{k+1}\cdots \by_n)$,
\[\beta_k=1-n^{-1}\bx_k^*\Gamma^{1/2}(\lambda \bI_p-n^{-1}\Gamma^{1/2}{\bf X}_{k0}{\bf X}_{k0}^*\Gamma^{1/2})^{-1}\Gamma^{1/2}\bx_k.\]
and
\[
 \beta_{k0}=1-n^{-1}{\rm tr}\Big(\Gamma (\lambda \bI_p-n^{-1}\Gamma^{1/2}{\bf X}_{k0}{\bf X}_{k0}^*\Gamma^{1/2})^{-1}\Big).
\]
with $\Gamma=U_2 D_2U^*_2$ defined in (\ref{OmegaM}). 
Then, it is  obtained that 
\begin{eqnarray*}
\beta_{k0}&\to &-\frac1{\lambda\underline m(\lambda)}\ne 0\\
\varepsilon_k&=& \beta_k-\beta_{k0} \to 0 \label{eklimit}\\
\rE_k\varepsilon^2_k&\le&o(n^{-1}\log n)\\
\rE_k\varepsilon_k^4&=&o(n^{-1}),
\end{eqnarray*}
where $\um(\lambda)$ is the Stieltjes transform of the LSD of the matrix $\displaystyle\frac{1}{n}{\bf X}_{k0}^*\Gamma {\bf X}_{k0}$,
$\rE_0(\cdot)$ denotes the expectation and $\rE_k (\cdot)$ denotes the conditional expectation with respect to the $\sigma$-field generated by the vectors {$\bx_1,\cdots, \bx_k$}.   
\end{lemma}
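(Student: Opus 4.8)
The plan is to reduce all four assertions to the classical concentration of centered quadratic forms together with the Marchenko--Pastur / Bai--Silverstein theory, using crucially that $\bx_k$ is independent of $\bX_{k0}$, hence of $\beta_{k0}$ and of the resolvent $\bM_k:=(\lambda\bI_p-n^{-1}\Gamma^{1/2}\bX_{k0}\bX_{k0}^{*}\Gamma^{1/2})^{-1}$ (throughout, $\lambda=\phi_{n,k}$ lies strictly outside the support of $\underline F$).

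I would first establish the a.s. bound $\|\bM_k\|=O(1)$ (and, when $\alpha_k\to\infty$, the even stronger $\|\bM_k\|=O(\lambda^{-1})$), uniformly in $k$. By Assumption~\bE and the phase-transition relation~(\ref{eqn0}), $\lambda=\phi_{n,k}$ stays, eventually, at a positive distance from the support of $\underline F$; since the nonzero spectrum of $n^{-1}\Gamma^{1/2}\bX_{k0}\bX_{k0}^{*}\Gamma^{1/2}$ coincides with that of $n^{-1}\bX_{k0}^{*}\Gamma\bX_{k0}$, whose extreme eigenvalues converge a.s. to the edges of that support by the ``no eigenvalues outside the support'' theorem and the exact separation of \cite{BaiSilverstein1998,BaiSilverstein1999}, the distance from $\lambda$ to the spectrum of $n^{-1}\Gamma^{1/2}\bX_{k0}\bX_{k0}^{*}\Gamma^{1/2}$ is bounded below a.s. Writing $\bA_k:=\Gamma^{1/2}\bM_k\Gamma^{1/2}$, boundedness of $\Gamma$ then yields $\|\bA_k\|=O(1)$, hence $\tr(\bA_k\bA_k^{*})=O(p)=O(n)$ and $\tr((\bA_k\bA_k^{*})^{2})=O(n)$ a.s.

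To identify the limit of $\beta_{k0}=1-n^{-1}\tr(\Gamma\bM_k)=1-n^{-1}\tr\bA_k$ I would pass through the companion matrix. For any $p\times n$ matrix $\bZ$, the Schur-complement computation behind the definition of $\beta_k$, combined with identity~(\ref{inout}), gives $\big[(\lambda\bI_n-n^{-1}\bZ^{*}\Gamma\bZ)^{-1}\big]_{jj}=(\lambda\,\beta_j(\bZ))^{-1}$, where $\beta_j(\bZ)$ is the analogue of $\beta_k$ with the $j$-th column of $\bZ$ removed; its deterministic equivalent is of the form of our $\beta_{k0}$. All these deterministic equivalents converge to a common limit $\beta_\infty$ (a.s. convergence of $n^{-1}\tr(\Gamma\bM_k)$ is standard since \cite{BaiSilverstein2004}, and by Marchenko--Pastur universality the limit does not depend on the atom law), while $n^{-1}\tr\big((\lambda\bI_n-n^{-1}\bZ^{*}\Gamma\bZ)^{-1}\big)\to-\um(\lambda)$; averaging the diagonal entries then forces $(\lambda\beta_\infty)^{-1}=-\um(\lambda)$, i.e. $\beta_{k0}\to\beta_\infty=-1/(\lambda\um(\lambda))$. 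This limit is nonzero and uniformly bounded away from $0$ and $\infty$: by~(\ref{eqn0}) one has $\lambda\um(\lambda)\to-\phi_k/\alpha_k$, so $-1/(\lambda\um(\lambda))\to\alpha_k/\phi_k$, which stays in a compact subset of $(0,\infty)$, also when $\alpha_k\to\infty$ (then $\phi_k/\alpha_k\to1$).

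Finally, for the three quantitative estimates observe the exact identity
\[
\varepsilon_k=\beta_k-\beta_{k0}=-\Big(\tfrac1n\bx_k^{*}\bA_k\bx_k-\tfrac1n\tr\bA_k\Big),
\]
a centered quadratic form in $\bx_k$ with the \emph{independent} coefficient matrix $\bA_k$. Conditioning on $\bX_{k0}$ and invoking the standard moment bound for such forms (Lemma~B.26 of \cite{BaiSilverstein2010}), for every even $q\ge2$,
\[
\rE_k\big|\bx_k^{*}\bA_k\bx_k-\tr\bA_k\big|^{q}\le C_q\Big[\big(\rE|x_{11}|^{4}\,\tr(\bA_k\bA_k^{*})\big)^{q/2}+\rE|x_{11}|^{2q}\,\tr\big((\bA_k\bA_k^{*})^{q/2}\big)\Big],
\]
where, crucially, the entries are the truncated and renormalized ones, so that Lemma~\ref{lemmoments} supplies $\rE|x_{11}|^{4}=o(\log n)$ and $\rE|x_{11}|^{2q}=o(n^{q-2})$ for $q>2$. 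With $q=2$ and $\tr(\bA_k\bA_k^{*})=O(n)$ this gives $\rE_k\varepsilon_k^{2}=O(n^{-2})\big(o(\log n)\cdot O(n)+O(n)\big)=o(n^{-1}\log n)$, whence $\varepsilon_k\to0$ (in $L^2$, hence in probability) and $\beta_{k0}$ is bounded away from $0$ eventually a.s.; with $q=4$, $\rE|x_{11}|^{8}=o(n^{2})$ and $\tr((\bA_k\bA_k^{*})^{2})=O(n)$ it gives $\rE_k\varepsilon_k^{4}=O(n^{-4})\big(o(\log^{2}n)\cdot O(n^{2})+o(n^{2})\cdot O(n)\big)=o(n^{-1})$. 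In the complex case the quadratic-form variance only picks up additional $\rE x_{11}^{2}=o(n^{-1})$ terms, negligible at every order. The main obstacle is not any individual estimate --- all the moment bounds are routine once $\bA_k$ is known to be bounded --- but the a.s., $k$-uniform control of $\|\bM_k\|$, uniform also over the possibly divergent spikes $\alpha_k$, which is exactly where Assumption~\bE and the exact-separation results of \cite{BaiSilverstein1999} are indispensable, together with the clean identification of the limit $-1/(\lambda\um(\lambda))$, which requires care with the dimension bookkeeping in the companion-matrix identity.
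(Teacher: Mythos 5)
Your proof is correct and reaches all four conclusions, but it travels a somewhat different road than the paper's. For the limit $\beta_{k0}\to -1/(\lambda\um(\lambda))$ the paper simply cites the deterministic-equivalent computation of Bai and Zhou (2008), eqs.\ (3.11)--(3.14), to get $n^{-1}\tr\big(\Gamma\bM_k\big)\sim 1-\frac{1}{1-c-c\lambda m(\lambda)}$ and then rewrites via $\um(\lambda)=-\frac{1-c}{\lambda}+cm(\lambda)$; you instead rederive the same limit by matching diagonal entries of the companion resolvent $(\lambda\bI_n - n^{-1}\bZ^*\Gamma\bZ)^{-1}$ with $(\lambda\beta_j)^{-1}$ and averaging. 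Both are fine; yours is more self-contained but requires the bookkeeping you flag, while the paper's is shorter. For the moment bounds the paper uses the exact identity (1.15) of Bai--Silverstein (2004) for $\rE_k\varepsilon_k^2$ and then writes out the fourth moment of the quadratic form term by term in $\nu_4,\nu_6,\nu_8$ and the $b_{ij}$; you instead apply the general quadratic-form moment bound (Lemma~B.26 of Bai--Silverstein, 2010) for $q=2$ and $q=4$, which is cleaner and immediately scales. Two minor cautions: (i) Lemma~B.26 as usually stated for $q=2$ would give $\rE_k\varepsilon_k^2\le Cn^{-2}\,\nu_4\,\tr(\bA_k\bA_k^*)$, i.e.\ both summands carry $\nu_4=o(\log n)$; the decomposition you actually wrote, $O(n^{-2})\big(o(\log n)\cdot O(n)+O(n)\big)$, silently invokes the sharper exact variance formula (which isolates $2\tr(\bA_k\bA_k^*)$ with a pure constant), as the paper does --- worth stating which you are using, though the bound $o(n^{-1}\log n)$ is unaffected. (ii) Your emphasis on the a.s., $k$-uniform operator-norm control of $\bM_k$ --- via no-eigenvalues-outside-the-support, exact separation, and boundedness of $\Gamma$ --- is exactly the point the paper leaves implicit (it merely notes ``the eigenvalues of $\Gamma$ are non-spiked and bounded''), so making it explicit is an improvement rather than a redundancy.
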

\begin{proof}
 Denote    $m(\lambda,\bX_{k0})$ and  $\um(\lambda, \bX_{k0})$ as the Stieltjes transforms of the LSDs of the matrix $n^{-1}\Gamma^{1/2}{\bf X}_{k0}{\bf X}_{k0}^*\Gamma^{1/2}$ and $\displaystyle\frac{1}{n}{\bf X}_{k0}^*\Gamma {\bf X}_{k0}$, respectively.
   If no confusion, we still use the notations $m(\lambda)$ and  $\um(\lambda)$ for simplicity.

By  (3.11), (3.13) and  the limitation above (3.14) in \cite{BaiZhou2008},  we have  
\[\frac1{n}{\rm tr}\Big(\Gamma (\lambda \bI_p-n^{-1}\Gamma^{1/2}{\bf X}_{k0}{\bf X}_{k0}^*\Gamma^{1/2})^{-1}\Big)  \sim 1-\frac{1}{1-c-c\lambda m(\lambda)},\]
and it follows that 
\[
 \beta_{k0}=1-n^{-1}{\rm tr}\Big(\Gamma (\lambda \bI_p-n^{-1}\Gamma^{1/2}{\bf X}_{k0}{\bf X}_{k0}^*\Gamma^{1/2})^{-1}\Big) \sim 
-\frac1{\lambda\underline m(\lambda)}\ne 0.
\]
due to the relationship
 $\um(\lambda)=-\displaystyle\frac{1-c}{\lambda}+cm(\lambda)$.

The second conclusion $\varepsilon_k= \beta_k-\beta_{k0} \to 0$ is an easy consequence of the third,  $\rE_k|\varepsilon_k|^2\to0$.

By the formula (1.15) of \cite{BaiSilverstein2004}, we have 
\bqa
&&\rE_k|\varepsilon_k|^2=\frac{1}{n^2} \rE_k\bigg|x_k^*\Gamma^{1/2}(\lambda I_p-n^{-1}\Gamma^{1/2}{\bf X}_{k0}{\bf X}_{k0}^*\Gamma^{1/2})^{-1}\Gamma^{1/2}x_k\non
&&\quad\quad\quad\quad\quad-{\rm tr}\Big(\Gamma (\lambda \bI_p-n^{-1}\Gamma^{1/2}{\bf X}_{k0}{\bf X}_{k0}^*\Gamma^{1/2})^{-1}\Big)\bigg|^2\non
&&=\!\frac{1}{n^2}\!\left\{ \!\rE_k{\rm tr}B^2 +\rE_k{\rm tr}BB^T|\rE x_{11}^2|^2  + \sum\limits_{i}\rE_k |b_{ii}|^2(\rE |x_{ik}^4|-2-|\rE x_{11}^2|^2 )\right\}\non
&&\quad\quad\le \frac{2}{n^2} \rE_k \tr \left[B^2\right]+\frac{o(\log n)}{n^2} \sum\limits_{i}\rE_k |b_{ii}|^2 \to 0, \label{epk2}
\eqa
where $B=\left(b_{ij}\right):=\Gamma^{1/2} (\lambda \bI_p-n^{-1}\Gamma^{1/2}{\bf X}_{k0}{\bf X}_{k0}^*\Gamma^{1/2})^{-1}\Gamma^{1/2}$, and the eigenvalues of $\Gamma$ are non-spiked eigenvalues and bounded. Then, the third conclusion is proved.

Furthermore, for the conditional moments of $\varepsilon_k$, we have
\[\rE_k\varepsilon^2_k= o(n^{-1}\log n),\] 
which is automatically obtained by equation (\ref{epk2}).

Finally, 
\begin{align*}
&\rE_k\varepsilon_k^4=\frac{1}{n^4} \rE_k\bigg\{x_k^*\Gamma^{1/2}(\lambda I_p-n^{-1}\Gamma^{1/2}{\bf X}_{k0}{\bf X}_{k0}^*\Gamma^{1/2})^{-1}\Gamma^{1/2}x_k\non
&-{\rm tr}\Big(\Gamma (\lambda \bI_p-n^{-1}\Gamma^{1/2}{\bf X}_{k0}{\bf X}_{k0}^*\Gamma^{1/2})^{-1}\Big)\bigg\}^4\non
&= \!\frac{1}{n^4}\Bigg\{\big(\rE|x_{11}|^8-4\rE|x_{11}|^6+6\rE|x_{11}|^4-3\big) \sum\limits_{i}  b_{ii}^4\non
&
+\big(4\rE|x_{11}|^6-8\rE|x_{11}|^4+4\big)\sum\limits_{i\neq j} b_{ii}^2b_{ij}^2 +\!\big(\rE|x_{11}|^4 )^2-2\rE|x_{11}|^4+1\big) \sum\limits_{i\neq j} b_{ii}^2b_{jj}^2\non
&+16(\rE|x_{11}|^4 )^2\sum\limits_{i\neq j} b_{ij}^4
+\big(\rE|x_{11}|^4-1\big) \sum\limits_{i\neq i' \neq j'} b_{ii}^2b_{i'j}^2  \non
&+16\rE|x_{11}|^4\!\big(\!\!\sum\limits_{i\neq j \neq j'}\!\!\!b_{ij}^2b_{ij'}^2\!+\!\!\sum\limits_{i\neq i' \neq j'}\!\!\!b_{ij}^2b_{i'j}^2\big)\!\!+\!\!\sum\limits_{i\neq j \neq i' \neq j'}\!\! \!\!\Big(16(b_{ij}^2 +b_{i'j'}^2 )\!+\! b_{ij}b_{i'j}b_{i'j}b_{i'j'}\Big) \!\!\Bigg\}\non
&=K_0\big(o(n^{-1})+ o(n^{-2} \log n  )+o(n^{-3} \log ^2n ) \big)
=o(n^{-1}), \label{epk3}
\end{align*}
where {$K_0$} is an absolute constant and may take different values at different appearances.
\end{proof}

\begin{lemma}\label{lembb1}
Let $\bu_{1i}$, $i=1,2,\cdots,N$  be a column vector of $U_1$, $\bu_{2i}$ be 
 a unit $p$-vector orthogonal to $U_1$, and $\Gamma$ be  an nnd $p\times p$ matrix of bounded spectral norm, where $N=O(n)$. Then, there is a constant $\delta\in(0,1/4)$ such that 
\be\label{eqbj1}
\max_{i\le N}n^{-1}\lambda|\bu_{1i}^*\bX_n(\lambda \bI_n-n^{-1}\bX_n^*\Gamma \bX_n)^{-1}\bX^*_n\bu_{2i}|\le 2 n^{-\delta}, a.s.
\ee 
\end{lemma}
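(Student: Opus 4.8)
The plan is to reduce the bound in (\ref{eqbj1}) to a concentration estimate for a bilinear form in $\bX_n$, and to extract the required decay from the orthogonality built into the hypotheses, namely $\bu_{1i}^*\bu_{2i}=0$ together with the fact that $\Gamma=U_2D_2U_2^*$ annihilates the columns of $U_1$.

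Write $M=\lambda\bI_n-n^{-1}\bX_n^*\Gamma\bX_n$, $\br_{1i}=\bX_n^*\bu_{1i}$ and $\br_{2i}=\bX_n^*\bu_{2i}$, so that the left-hand side of (\ref{eqbj1}) is $\max_{i\le N}|\xi_i|$ with $\xi_i:=n^{-1}\lambda\,\br_{1i}^*M^{-1}\br_{2i}$. First I would record two a priori estimates. (i) Since the relevant $\lambda$ stays outside the support of the limiting spectral distribution of $n^{-1}\bX_n^*\Gamma\bX_n$, the ``no eigenvalues outside the support'' theorem of \cite{BaiSilverstein1998} together with the exact separation of \cite{BaiSilverstein1999} gives $\|M^{-1}\|\le C_1$ a.s.\ for all large $n$. (ii) After the truncation $\|\bX_n\bX_n^*\|=O(n)$ a.s.\ (cf.\ \cite{BaiSilverstein2010}), hence $\|\br_{1i}\|^2=\bu_{1i}^*\bX_n\bX_n^*\bu_{1i}=O(n)$ and likewise $\|\br_{2i}\|^2=O(n)$. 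The crude Cauchy--Schwarz bound $|\xi_i|\le n^{-1}\lambda\,\|M^{-1}\|\,\|\br_{1i}\|\,\|\br_{2i}\|$ only yields $O(1)$, so the content of the lemma is a genuine cancellation of order $n^{-1/2}$ in the bilinear form.

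The structural point is that $\Gamma\bu_{1i}=\bo$, so $M$ and $\br_{2i}$ are functions of $U_2^*\bX_n$ only, whereas $\br_{1i}$ is a row of $U_1^*\bX_n$; moreover $\rE[\br_{1i}\br_{2i}^*]=(\bu_{1i}^*\bu_{2i})\bI_n=\bo$. Were the entries Gaussian, $U^*\bX_n$ would have i.i.d.\ entries, $\br_{1i}$ would be independent of $(M,\br_{2i})$, and hence $\rE[\xi_i\mid M,\br_{2i}]=0$ while $\rE[|\xi_i|^{2q}\mid M,\br_{2i}]=O\big(n^{-2q}\lambda^{2q}\|M^{-1}\br_{2i}\|^{2q}\big)=O(n^{-q})$ a.s.\ for every fixed $q$ --- far stronger than needed. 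For general truncated entries the rows of $U^*\bX_n$ are no longer independent, so instead I would write $\xi_i=\sum_{j=1}^{n}\Delta_{ij}$, where $\Delta_{ij}=(\rE_j-\rE_{j-1})\xi_i$ is a martingale difference with respect to the filtration generated by the columns $\bx_1,\dots,\bx_n$; estimate $\rE_{j-1}|\Delta_{ij}|^2$ by means of the rank-one update of $M$ obtained when $\bx_j$ is removed from $\bX_n$, using $\|M^{-1}\|\le C_1$, the quadratic-form bounds of Lemma~\ref{lembetak} and the vanishing cross-covariance; summing over $j$ yields $\rE|\xi_i|^2=O(n^{-1}\log n)$, and Burkholder's inequality promotes this to $\rE|\xi_i|^{2q}\le K_q(n^{-1}\log n)^{q}$ for each fixed $q$, the truncation $|x_{ij}|<\eta_n\sqrt n$ and the moment rates of Lemma~\ref{lemmoments} being used to absorb the residual terms that vanish identically in the Gaussian case.

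Finally, choosing $q$ so large that $q(1-2\delta)>2$ --- possible for any $\delta<1/2$, in particular for some $\delta\in(0,1/4)$ --- Markov's inequality gives $\sum_{n}\sum_{i\le N}{\rm P}(|\xi_i|>n^{-\delta})\le\sum_{n}N\,n^{2q\delta}K_q(n^{-1}\log n)^{q}<\infty$ since $N=O(n)$, so the Borel--Cantelli lemma shows $\max_{i\le N}|\xi_i|\le n^{-\delta}$, and a fortiori $\le 2n^{-\delta}$, almost surely for all large $n$; this is (\ref{eqbj1}). I expect the martingale-difference estimate in the non-Gaussian setting to be the main obstacle: in the Gaussian case the exact independence of $U_1^*\bX_n$ and $U_2^*\bX_n$ trivialises both the conditional mean and the conditional variance, whereas in general one must show that the corrections coming from the fact that $\rE[\bu_{1i}^*\bx_j\mid U_2^*\bx_j]$ does not vanish and from the dependence of $M^{-1}$ on $\bx_j$ are of lower order, which is exactly where the truncation level $\eta_n\sqrt n$ and Lemma~\ref{lemmoments} come in.
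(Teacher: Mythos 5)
Your proposal captures the concentration half of the argument correctly (martingale decomposition over columns, Burkholder, Borel--Cantelli), and you have correctly identified the key structural facts $\bu_{1i}^*\bu_{2i}=0$ and $\Gamma\bu_{1i}=\bo$. However there is a genuine gap: the martingale decomposition $\xi_i=\sum_{j}(\rE_j-\rE_{j-1})\xi_i$ telescopes to $\xi_i-\rE\xi_i$, not to $\xi_i$. It therefore only controls the \emph{fluctuation} of the bilinear form around its mean; the \emph{bias} $\rE\xi_i$ is an entirely separate object that you never bound, yet it is nonzero for non-Gaussian entries because $U_1^*\bX_n$ and $U_2^*\bX_n$ are no longer independent. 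The sentence in which you claim ``summing over $j$ yields $\rE|\xi_i|^2=O(n^{-1}\log n)$'' is thus unjustified as stated; Burkholder on the martingale differences gives $\rE|\xi_i-\rE\xi_i|^{2q}\le K_q(n^{-1}\log n)^q$, and one still has to show $|\rE\xi_i|\le n^{-\delta}$.

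The paper handles the bias by expanding the resolvent over indices, $R_i=J_{1i}+J_{2i}$ with $J_{2i}$ further split into five pieces $J_{21i},\dots,J_{25i}$. The orthogonality relations kill the leading contributions ($\rE J_{1i}$ and $\rE J_{21i}$ vanish identically because $\bu_{1i}^*\bu_{2i}=0$ and $\mathcal{H}\bu_{1i}=0$), but several residual pieces do not vanish: in particular $\rE J_{23i}$ and $\rE J_{24i}$ are only $O(n^{-1/4}\log n)$, and it is precisely this term that forces $\delta<1/4$ in the statement. Your argument targets $\delta<1/2$ and then quietly concedes to $\delta<1/4$ without a derivation; in fact your concentration estimate alone would permit $\delta$ close to $1/2$, and the $1/4$ threshold is invisible in your scheme. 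A further, smaller omission: the paper does all of this on the complement of a bad event $\mathcal{E}$ (controlling $\|n^{-1/2}\bX\|$, the resolvent norm, $\varepsilon_k$ and $\beta_{k_1k_2}$) with $P(\mathcal{E})=o(n^{-t})$; your appeal to Bai--Silverstein for $\|M^{-1}\|\le C_1$ a.s.\ is in the right spirit, but the careful event machinery is needed to make both the bias and the Burkholder moment estimates uniform and summable over $i$ and $n$. To repair the proposal you would need to add the bias analysis $\max_i|\rE\xi_i I_{\mathcal{E}^c}|\le n^{-\delta}$ as a separate step before the martingale concentration, and derive rather than assert the constraint $\delta<1/4$.
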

 
 \begin{proof}
 Note that 
\bqn
R_i&=& n^{-1}\lambda \bu_{1i}^*\bX_n(\lambda \bI_n-n^{-1}\bX_n^*\Gamma \bX_n)^{-1}\bX^*_n\bu_{2i}\non
&=&\frac1n\sum_{k=1}^n \frac{\bu_{1i}^*x_kx_k^*\bu_{2i}}{\beta_k}+\frac1n\sum_{k_1\ne k_2 } \frac{\bu_{1i}^*x_{k_1}x_{k_2}^*\bu_{2i} \beta_{k_1k_2}}{\beta_{k_1;k_2}\beta_{k_2;k_1}-\beta_{k_1k_2}^2}:=J_{1i}+J_{2i}
\eqn
where 
\bqn
\beta_{k_1k_2}&=&n^{-1}x_{k_1}^*U_2D_2^{1/2}G_{k_1,k_2}^{-1}D_2^{1/2}U_2^*x_{k_2}\\
 \beta_{k_1;k_2}&=& {\color{blue}  1-n^{-1}x_{k_1}^*U_2D_2^{1/2}G^{-1}_{\color{red}k_1, k_2}D_2^{1/2}U_2^*x_{\color{red}k_1}}\\
G_{k_1,k_2}&=&\lambda\bI_p-n^{-1}D_2^{1/2}U_2\bX_{k_1,k_2}\bX_{k_1,k_2}U_2^*D_2^{1/2}
\eqn
where $\bX_{k_1,k_2}$ is identical to $\bX$ excluding the $k_1$th and $k_2$th columns.

Define notations 
\bqn
{\color{blue} \bar\beta_{\color{red}k_1; k_2}}&=&{\color{blue}  1-n^{-1}\rE{\rm tr}G^{-1}_{\color{red}k_1, k_2}D_2^{1/2}U_2^*U_2D_2^{1/2},}\\
{\color{blue} 
\varepsilon_{k_1;k_2}}&=&{\color{blue} n^{-1}x_{k_1}^*U_2D_2^{1/2}G_{\color{red}k_1,k_2}^{-1}D_2^{1/2}U_2^*x_{k_1}-n^{-1}\rE{\rm tr}G^{-1}_{\color{red}k_1,k_2}D_2^{1/2}U_2^*U_2D_2^{1/2},}
\eqn
and events
\bqn
{\cal E}_{1k}&=&\{|\varepsilon_k|>\delta_k/2\},\ \ \ {\cal E}_{2k}=\{|n^{-1/2}\bX_{k0}|>b_+\}\\
{\cal E}_{k_1,k_2}&=&\{|\beta_{k_1k_2}|>\delta_k/2\},\\
{\cal E}&=&\bigcup_{k=1}^n{\cal E}_k\bigcup_{k_1\ne k_2}{{\cal E}_{k_1, k_2}}\{\|\bX/\sqrt{n}\|>b_+\}\bigcup\{\|G_{k_1,k_2}^{-1}\|>B^+\},
\eqn
where $\delta_k=\rE \beta_k$, $b_+>(1+\sqrt{c})^2$ and $B^+$ are large constants.

Referring to the selection of $\lambda$ and checking the proofs of \cite{BaiSilverstein1998}, 
 one may find that 
\bqn
P(\| G_{k_1,k_2}^{-1}\|>B^+)&=&o(n^{-t}),\\
P(\|\bX/\sqrt{n}\|>b^+)&=&o(n^{-t}),
\eqn
for any given $t>0$.

Employing Lemma 9.1 of \cite{BaiSilverstein2010}, one can prove that 
\bqn
P({\cal E}_{1k})&=&o(n^{-t}),\\ 
P({\cal E}_{2k})&=&o(n^{-t}),\\
 P({\cal E}_{k_1,k_2})&=&o(n^{-t}).
\eqn
Hence, we have 
\be\label{eqbf06}
P({\cal E})=o(n^{-t}).
\ee

We next prove that
\be\label{eqbj2}
\max_{i}\rE R_iI_{{\cal E}^c}\le n^{-\delta},
\ee
for some constant $\delta\in(0,1/4)$.

In fact, 
\begin{align}
\label{eqbj3}
&|\rE J_{1i}I_{{\cal E}^c}|= \left|\frac1n\sum_{k=1}^n\rE \frac{\bu_{1i}^*x_kx_k^*\bu_{2i}}{\beta_{k0}}I_{{\cal E}^c}-\frac1n\sum_{k=1}^n \rE\frac{\bu_{1i}^*x_kx_k^*\bu_{2i}\varepsilon_k}{\beta_{k0}\beta_k}I_{{\cal E}^c}\right|\nonumber\\
&=\left|\frac1n\sum_{k=1}^n\rE \frac{\bu_{1i}^*x_kx_k^*\bu_{2i}}{\beta_{k0}}-\frac1n\sum_{k=1}^n\rE \frac{\bu_{1i}^*x_kx_k^*\bu_{2i}}{\beta_{k0}}I_{{\cal E}}-\frac1n\sum_{k=1}^n \rE\frac{\bu_{1i}^*x_kx_k^*\bu_{2i}\varepsilon_k}{\beta_{k0}\beta_k}I_{{\cal E}^c}\right|\\
&\le P({\cal E})+\frac{{K_0}}n\sum_{k=1}^n\rE|\bu_{1i}^*x_kx_k^*\bu_{2i}\varepsilon_k|\nonumber\\
&\le {K_0}n^{-1/2}\log n \le  n^{-\delta},\nonumber
\end{align}
where {$K_0$} is an absolute constant independent of $i$, and may take different values at different appearances. Here, we remind the reader that the first term in (\ref{eqbj3}) is zero by  the assumption that 
$\bu_{1i}^*\bu_{2i}=0$ and the probability of ${\cal E}$ is $O(n^{-1})$ by (\ref{eqbf06}). 

Next, we have
\bqn
|\rE J_{2i}I_{{\cal E}^c}|&=& \left|\frac1n\sum_{k_1\ne k_2 } \rE\frac{\bu_{1i}^*x_{k_1}x_{k_2}^*\bu_{2i} \beta_{k_1k_2}}{\beta_{k_1;k_2}\beta_{k_2;k_1}-\beta_{k_1k_2}^2}\right|I_{{\cal E}^c}\\
&\le& \sum_{l=1}^5\left|\rE J_{2li}\right|I_{{\cal E}^c},
\eqn
where 
\bqn
J_{21i}&=&\frac1n\sum_{k_1\ne k_2 } \frac{\bu_{1i}^*x_{k_1}x_{k_2}^*\bu_{2i} \beta_{k_1k_2}}{\bar\beta^2_{\color{red}k_1; k_2}}\\ 
J_{22i}&=&\frac1n\sum_{k_1\ne k_2 } \frac{\bu_{1i}^*x_{k_1}x_{k_2}^*\bu_{2i} \beta_{k_1k_2}(\varepsilon_{k_1;k_2}+\varepsilon_{k_2;k_1})}{\bar\beta^3_{\color{red}k_1; k_2}}\\ 
J_{23i}&=&\frac1n\sum_{k_1\ne k_2 } \frac{(\bu_{1i}^*x_{k_1}x_{k_2}^*\bu_{2i}+\bu_{1i}^*x_{k_2}x_{k_1}^*\bu_{2i}) \beta_{k_1k_2}\varepsilon^2_{k_1;k_2}}{\bar\beta^3_{\color{red}k_1; k_2}\beta_{k_1;k_2}} \\
J_{24i}&=&\frac1n\sum_{k_1\ne k_2 } \frac{\bu_{1i}^*x_{k_1}x_{k_2}^*\bu_{2i} \beta_{k_1k_2}(\varepsilon^2_{k_1;k_2}+\varepsilon_{k_1;k_2}\varepsilon_{k_2;k_1}+\varepsilon^2_{k_2;k_1})}{\bar\beta^2_{\color{red}k_1; k_2}\beta_{k_1;k_2}\beta_{k_2;k_1}}\\
J_{25i}&=&\frac1n\sum_{k_1\ne k_2 } \frac{\bu_{1i}^*x_{k_1}x_{k_2}^*\bu_{2i} \beta_{k_1k_2}^3}{\beta_{k_1;k_2}\beta_{k_2;k_1}(\beta_{k_1;k_2}\beta_{k_2;k_1}-\beta_{k_1k_2}^2)}
\eqn
By elementary calculation, we have
\bqn
|\rE J_{21i}I_{{\cal E}^c}|=|-\rE J_{21i}I_{{\cal E}}|\le K_0P({\cal E})\le O(n^{-t})
\eqn
because $\rE J_{21i}=0$, which follows from $U_2^*\bu_{1i}=0$, and hence 
\bqn
\rE(\bu_{1i}^*x_{k_1}x_{k_2}^*\bu_{2i}\beta_{k_1k_2}|{\bX_{k_1,k_2}})=n^{-1}\bu_{1i}^*\mathcal{H}\bu_{2i}=0,
\eqn
where $\mathcal{H}=U_2D_2^{1/2}G^{-1}_{\color{red}k_1, k_2}D^{1/2}_2U_2^*$. 

Also, we have 
\bqn
\rE J_{22i}I_{{\cal E}^c}=o(n^{-1}\log n), \mbox{ (uniformly in $i$,)}
\eqn
by using
\bqn
&&\rE(\bu_{1i}^*x_{k_1}x_{k_2}^*\bu_{2i}\beta_{k_1k_2}\varepsilon_{k_2;k_1}|{\bX_{k_1,k_2}})\\
&=&n^{-1}\rE(x^*_{k_2}\bu_{2i}\varepsilon_{k_2;k_1}\bu_{1i}^*\mathcal{H}x_{k_2}|\bX_{k_1,k_2})=0 \mbox{ (since } \mathcal{H}\bu_{1i}=0)
\eqn
and
\bqn
&&|\rE(\bu_{1i}^*x_{k_1}x_{k_2}^*\bu_{2i}\beta_{k_1k_2}\varepsilon_{k_1;k_2}|\bX_{k_1k_2})|\\
&=&|n^{-1}\rE(\bu_{1i}^*x_{k_1}x_{k_1}^*\mathcal{H}\bu_{2i}\varepsilon_{k_1;k_2}|\bX_{k_1,k_2})|\\
&=&|n^{-2}\Big(2\bu_{1i}^*\mathcal{H}^2\bu_{2i}+o(\log n)\sum_{l=1}^pu_{1il}\bu_{2i}^*\mathcal{H}\bbe_l\mathcal{H}_{ll}\Big)|\mbox{ (since } \mathcal{H}\bu_{1i}=0)\\
&\le&o(n^{-1}\log n).
\eqn
Applying the Cauchy-Schwarz inequality, one can easily show that for $l=3,4$, 
\bqn
|\rE J_{2il}I_{{\cal E}^c}|&\le& \frac{K_0}n\sum_{k_1\ne k_2}\|\bu_{1i}\|\|\bu_{2i}\|
(\rE|\beta_{k_1k_2}^4)^{1/4}(\rE|\varepsilon_{k_1;k_2}^8|+\rE|\varepsilon_{k_2;k_1}^8|)^{1/4}\\
&\le& K_0n^{-1/4}\log n
\eqn
Furthermore, by the Holder inequality, we have
\bqn
|\rE J_{25i}I_{{\cal E}^c}|&\le& \frac{K_0}n\sum_{k_1\ne k_2}\rE|\bu_{1i}^* x_{k_1}x^*_{k_2}\bu_{2i}\beta_{k_1k_2}^3|\\
&\le& \frac{K_0}n\sum_{k_1\ne k_2}(\rE|\bu_{1i}^* x_{k_1}x^*_{k_2}\bu_{2i}|^2\rE|\beta_{k_1k_2}^6|)^{1/2}\\
&\le& K_0n^{-1/2}\log n.
\eqn
Summing the inequalities above, our assertion (\ref{eqbj2}) is proved.

Next, we shall show that
\be\label{eqbj4}
\max_{i\le N}|R_iI_{{\cal E}^c}-\rE R_iI_{{\cal E}^c}|\le n^{-\delta}, a.s.
\ee
To this end, we will employ the traditional approach of the martingale decomposition: Let $\rE_k$ denote the conditional expectation given vectors {$\bx_1,\cdots, \bx_k$}. Then, we have
\bqn
&&R_iI_{{\cal E}^c}-\rE R_iI_{{\cal E}^c}\\
&=&\sum_{k=1}^n(\rE_{k-1}-\rE_k)R_iI_{{\cal E}^c}=\sum_{k=1}^n(\rE_{k-1}-\rE_k)(R_iI_{{\cal E}^c}-R_{ik}I_{{\cal E}_{1k}^c}I_{{\cal E}_{2k}^c})\\
&=&\frac1n\sum_{k=1}^n(\rE_{k-1}-\rE_k)\gamma_{ki}I_{{\cal E}^c}+\sum_{k=1}^n(\rE_{k-1}-\rE_k)R_{ik}I_{{\cal E}}I_{{\cal E}_{1k}^c}I_{{\cal E}_{2k}^c}\\
&:=&I_{in1}+I_{in2},
\eqn
where 
$$
R_{ik}=\frac{\lambda}n \bu_{1i}^*\bX_{k0}(\lambda\bI_{n-1}-n^{-1}\bX_{k0}^*\Gamma\bX_{k0})^{-1}\bX_{k0}^*\bu_{2i}.
$$
By the inverse matrix formula, we have
\bqn
\gamma_{ki}&=&\frac1{\beta_k}\big(\bu_{1i}^*(\bI_p+n^{-1}{\bf X}_{k0}(\lambda I_{n-1}-n^{-1}{\bf X}^*_{k0}\Gamma{\bf X}_{k0})^{-1}{\bf X}^*_{k0}\Gamma\big)x_k\times\\
&& x_k^*\big(\bI_p+n^{-1}\Gamma{\bf X}_{k0}(\lambda I_{n-1}-n^{-1}{\bf X}^*_{k0}\Gamma{\bf X}_{k0})^{-1}{\bf X}^*_{k0}\big)\bu_{2i}
\eqn
By the Burkholder inequality, we have
\bqn
&&P(\max_{i}|I_{in1}|\ge n^{-\delta})\\
&\le&\sum_{i=1}^N P(|I_{in1}|\ge n^{-\delta})\le\sum_{i=1}^Nn^{l\delta}\rE|I_{in1}|^l\\
&\le&n^{l\delta}\sum_{i=1}^N\left(\sum_{k=1}^nn^{-l}\rE|\gamma_{ki}|^lI_{{\cal E}^c}+\rE\left(n^{-2}\sum_{k=1}^n\rE_{k-1}\gamma_{ni}^2I_{{\cal E}^c}\right)^{l/2}\right).
\eqn
Note that $\|n^{-1/2}\bX\|\le b_+$ implies $\|n^{-1/2}\bX_{k0}\|\le b_+$ and that $\| G^{-1}\|\le B^+$ implies $G_{k_1k_2}^{-1}\|\le B^++\delta$ with an exception of probability of $O(n^{-t})$ for any given $t$; thus, we have
\bqn
&&\rE_{k-1}\gamma_{ni}^2I_{{\cal E}^c}\\
&\le& K_0\rE_{k-1}|\beta_k\gamma_{ni}|^2I_{\{\|n^{-1/2}\bX_{k0}|\|\le b_+\}}I_{\{\|G_{k_1k_2}^{-1}\le B^++\delta\}}+O(n^{-t})\\
&\le& K_0\rE_{k-1}\|\bu_{1i}\|^2\|\bu_{2i}\|^2+O(n^{-t})\le K_0
\eqn
and one can similarly prove that
\bqn
\rE|\gamma_{ki}|^lI_{{\cal E}^c}\le K_0.
\eqn
Therefore, 
\bqa\label{eqbjj1}
P(\max_{i}|I_{in1}|\ge n^{-\delta})\le K_0N(n^{-l+1+l\delta}+n^{-l/2+l\delta})
\eqa
which is summable when $l(\frac12-\delta)>1$.

Finally, one finds that $R_{ik}$ is bounded when ${\cal E}_{1k}^c$ and ${\cal E}_{2k}^c$ happen. Thus, we have
\bqa\label{eqbjj2}
P(\max_{i}|I_{in2}|\ne 0)\le \sum_{i=1}^N\sum_{k=1}^n K_0P({\cal E})=o(n^{-t}),
\eqa
which is summable.

 Combining 
(\ref{eqbjj1}) and (\ref{eqbjj2}), we complete the proof of (\ref{eqbj4}). Then, the lemma follows from (\ref{eqbj2}), (\ref{eqbj4}) and the fact that $I({\cal E})\to 0$, a.s.
\end{proof}

\section{Proof of Theorem \ref{thm2}}\label{SG4MT}

  Following the notations in Lemma~\ref{lembetak}, we still use ${\bf X}=(\bx_1,\cdots,\bx_n)$ and ${\bf Y}=(\by_1,\cdots,\by_n)$ to be two independent random matrices satisfying Assumptions $\bA\sim \bE$ and denote ${\bf X}_k=(\bx_1,\cdots,\bx_k,$ $\by_{k+1},\cdots,\by_n)$ with convention 
${\bf X}={\bf X}_n$ and ${\bf Y}={\bf X}_0$.  ${\bf X}_{k0}=(\bx_1\cdots,\bx_{k-1},\by_{k+1}\cdots \by_n)$ is the overlapping part between $\bX$ and $\bY$,  and $\Omega_M(\phi_{n,k}, {\bf X})$ is simply defined as $\Omega_M({\bf X})$ if no confusion. Note that the difference between ${\bf X}_k$ and ${\bf X}_{k0}$ lies in the $k$th column, that is, $\bx_k$ in ${\bf X}_k$, and the difference between ${\bf X}_{k-1}$ and ${\bf X}_{k0}$ is also in the $k$th column, that is, $\by_k$ in ${\bf X}_{k-1}$.

By applying the inverse matrix formula, we have 
\bqn
&&\left(\Omega_M({\bf X}_{k})-\Omega_M({\bf X}_{k0})\right)\nonumber\\
&=&\frac1{\phi_k\beta_k\sqrt{n}}D_1^{\frac{1}{2}}\Big(\big(1+n^{-2}\bx_k^*\Gamma {\bf X}_{k0}
(\phi_{n,k} \bI_{n-1} -n^{-1}{\bf X}_{k0}^*\Gamma{\bf X}_{k0})^{-2}{\bf X}^*_{k0}\Gamma \bx_k\big) \bI_M\nonumber\\
&&-U_1^*(\bx_k+n^{-1}{\bf X}_{k0}(\phi_{n,k} \bI_{n-1}-n^{-1}{\bf X}^*_{k0}\Gamma{\bf X}_{k0})^{-1}{\bf X}^*_{k0}\Gamma \bx_k)\nonumber\\
&&(\bx_k^*+n^{-1}\bx_k^*\Gamma{\bf X}_{k0}(\phi_{n,k} \bI_{n-1}-n^{-1}{\bf X}^*_{k0}\Gamma{\bf X}_{k0})^{-1}{\bf X}^*_{k0})U_1\Big)D_1^{\frac{1}{2}},
\eqn
where $\Gamma=U_2 D_2U^*_2$, and $\beta_k,\beta_{k0}, \varepsilon_k$ are defined in Lemma~\ref{lembetak}.

Denote
\begin{eqnarray}
\tau_{k0}&=&\Big(1+n^{-2}{\rm tr}\big(\Gamma {\bf X}_{k0}
(\phi_{n,k} \bI_{n-1} -n^{-1}{\bf X}_{k0}^*\Gamma{\bf X}_{k0})^{-2}{\bf X}^*_{k0}\Gamma\big)\Big)\bI_M\nonumber\\
&&- U_1^*
\big(\bI_p+n^{-1}{\bf X}_{k0}(\phi_{n,k} \bI_{n-1}-n^{-1}{\bf X}^*_{k0}\Gamma{\bf X}_{k0})^{-1}{\bf X}^*_{k0}\Gamma\big)\nonumber\\
&&\big(\bI_p+n^{-1}\Gamma{\bf X}_{k0}(\phi_{n,k} \bI_{n-1}-n^{-1}{\bf X}^*_{k0}\Gamma{\bf X}_{k0})^{-1}{\bf X}^*_{k0} \big)U_1\non
\tau_k&=&\Big(n^{-2}\big(\bx_k^*\Gamma {\bf X}_{k0}
(\phi_{n,k} \bI_{n-1} -n^{-1}{\bf X}_{k0}^*\Gamma{\bf X}_{k0})^{-2}{\bf X}^*_{k0}\Gamma \bx_k\non
&&-{\rm tr}(\Gamma {\bf X}_{k0}
(\phi_{n,k} \bI_{n-1} -n^{-1}{\bf X}_{k0}^*\Gamma{\bf X}_{k0})^{-2}{\bf X}^*_{k0}\Gamma)\big)\Big) \bI_M\nonumber\\
&&-U_1^*\big(\bI_p+n^{-1}{\bf X}_{k0}(\phi_{n,k} I_{n-1}-n^{-1}{\bf X}^*_{k0}\Gamma{\bf X}_{k0})^{-1}{\bf X}^*_{k0}\Gamma\big)(\bx_k\bx_k^*-\bI_p)\nonumber\\
&&\big(\bI_p+n^{-1}\Gamma{\bf X}_{k0}(\phi_{n,k} I_{n-1}-n^{-1}{\bf X}^*_{k0}\Gamma{\bf X}_{k0})^{-1}{\bf X}^*_{k0}\big)U_1
\label{tauk}
\end{eqnarray}

Then,  we have
\begin{align*}
&\big(\Omega_M({\bf X}_{k})-\Omega_M({\bf X}_{k0})\big)\\
=&\frac{1}{\phi_{n,k}}\!D_1^{1 \over 2}\!\left(\!\frac{1}{\beta_{k0}\sqrt{n}}\!\big(\!\tau_{k0}\!+\!\tau_k\!\big)\!-\!
\frac{1}{ \beta_{k0}^2\sqrt{n}}\!\big(\!\tau_{k0}\!+\!\tau_k\!\big)\varepsilon_k\!+\!
\frac{1}{ \beta_{k0}^2\beta_k\sqrt{n}}\!\big(\!\tau_{k0}\!+\!\tau_k\!\Big)\varepsilon_k^2\!\right)\!D_1\!^{1 \over 2}.
\end{align*}
Similarly, we have
\begin{align*}
&\big(\Omega_M({\bf X}_{k-1})-\Omega_M({\bf X}_{k0})\big)\\
=&\frac{1}{\phi_{n,k}}\!D_1^{1 \over 2}\!\!\left(\!\frac{1}{\beta_{k0}\sqrt{n}}\!\big(\!\tau_{k0}\!+\!\tau_{ky}\!\big)\!-\!
\frac{1}{ \beta_{k0}^2\sqrt{n}}\!\big(\!\tau_{k0}\!+\!\tau_{ky}\!\big)\!\varepsilon_{ky}\!+\!
\frac{1}{ \beta_{k0}^2\beta_{ky}\sqrt{n}}\!\big(\!\tau_{k0}\!+\!\tau_{ky}\!\Big)\!\varepsilon_{ky}^2\!\right)\!\!D_1\!^{1 \over 2},
\end{align*}
where $\beta_{ky}$, $\tau_{ky}$ and $\varepsilon_{ky}$ are similarly defined as $\beta_k$, $\tau_{k}$ nand $\varepsilon_k$ with $\bx_k$ replaced by $\by_k$. 

For  any $M\times M$ symmetric matrix $\bf W$,
a proposition about  $\rE (\tr \bW \tau_k)^2$  is  formulated  in the following lemma, which
is used  in the process of proof for 
the G4MT.

\begin{lemma}
Under the Assumptions $\bA \sim \bE$, for  $\tau_k$ defined in (\ref{tauk}) and any $M\times M$ symmetric matrix $\bf W$, we have 
\be
\rE\big(\tr(\bW\tau_k)\big)^2 =2{\rm tr} (\bDe^2)+  o(1)
\ee
\label{lemtauk}
where $\bDe$ is defined in (\ref{bDe}).
\end{lemma}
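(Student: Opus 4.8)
The plan is to condition on the frozen block $\bX_{k0}$, reduce $\tr(\bW\tau_k)$ to a centred quadratic form in the single column $\bx_k$, apply the standard second--moment identity for such forms, and then show that the fourth--cumulant contribution is annihilated by Assumption~{\bf D} while the remaining term converges to $2\tr(\bDe^2)$. Write $\mathcal{G}_k=\phi_{n,k}\bI_{n-1}-n^{-1}\bX_{k0}^*\Gamma\bX_{k0}$ and set
\[
A=\bI_p+n^{-1}\bX_{k0}\mathcal{G}_k^{-1}\bX_{k0}^*\Gamma ,\qquad C=\Gamma\bX_{k0}\mathcal{G}_k^{-2}\bX_{k0}^*\Gamma .
\]
Reading off the definition of $\tau_k$ and using $A^{*}=\bI_p+n^{-1}\Gamma\bX_{k0}\mathcal{G}_k^{-1}\bX_{k0}^*$, a short computation gives
\[
\tr(\bW\tau_k)=\bx_k^*M_n\bx_k-\tr M_n ,\qquad M_n:=-\,A^*U_1\bW U_1^*A+n^{-2}(\tr\bW)\,C ,
\]
where $M_n$ is Hermitian and measurable with respect to $\bX_{k0}$, hence independent of $\bx_k$. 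Conditioning on $\bX_{k0}$ and invoking the elementary variance formula for a centred quadratic form with i.i.d.\ standardized entries, one obtains, in the real case,
\[
\rE\!\left[\big(\tr(\bW\tau_k)\big)^2 \,\big|\, \bX_{k0}\right]=\big(\rE|x_{11}|^4-3\big)\sum_i (M_n)_{ii}^2+2\,\tr(M_n^2),
\]
the complex case (with $\rE x_{11}^2=o(n^{-1})$ after truncation) being entirely analogous.

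First I would kill the fourth--cumulant term $\big(\rE|x_{11}|^4-3\big)\sum_i (M_n)_{ii}^2$. Throughout one works on the event on which $\|n^{-1/2}\bX_{k0}\|$ and $\|\mathcal{G}_k^{-1}\|$ are bounded --- the latter because $\phi_{n,k}$ lies outside the support of the relevant limiting distribution, by exact separation --- whose complement $\mathcal{E}$ has probability $o(n^{-t})$ for every $t$. Since $C\succeq0$, $\sum_i(n^{-2}C_{ii})^2\le(\max_i n^{-2}C_{ii})(n^{-2}\tr C)=O(n^{-1})$. Writing $N_n:=A^*U_1\bW U_1^*A$ and $v_i:=U_1^*A\bbe_i$ one has $(N_n)_{ii}=v_i^*\bW v_i$, so $\sum_i (N_n)_{ii}^2\le\|\bW\|^2(\max_i\|v_i\|^2)\,\|U_1^*A\|_F^2$. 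Because $U_1^*\Gamma=\Gamma U_1=0$, the cross terms of $AA^*$ drop and $\|U_1^*A\|_F^2=\tr(U_1^*AA^*U_1)=M+\tr P_n=O(1)$, where $P_n:=n^{-2}U_1^*\bX_{k0}\mathcal{G}_k^{-1}\bX_{k0}^*\Gamma^2\bX_{k0}\mathcal{G}_k^{-1}\bX_{k0}^*U_1\succeq0$; and since every column of $\Gamma$ is orthogonal to $U_1$, Lemma~\ref{lembb1} applied columnwise yields $\max_i\|v_i\|^2\le 2M\max_{t,s}|u_{ts}|^2+O(n^{-2\delta})$ with $\delta\in(0,1/4)$. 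Combining these bounds with Assumption~{\bf D} in its truncated form $\max_{t,s}|u_{ts}|^2\big(\rE|x_{11}|^4-3\big)I(|x_{11}|<\eta_n\sqrt n)\to0$ and with $\rE|x_{11}|^4=o(\log n)$ from Lemma~\ref{lemmoments}, one gets $\big(\rE|x_{11}|^4-3\big)\sum_i (M_n)_{ii}^2=o(1)$.

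It then remains to identify the main term $2\tr(M_n^2)$. Expanding, $\tr(M_n^2)=\tr(N_n^2)-2n^{-2}(\tr\bW)\tr(N_nC)+n^{-4}(\tr\bW)^2\tr(C^2)$; using $U_1^*\Gamma=\Gamma U_1=0$ to cancel the outer factors together with the norm and resolvent bounds above, the last two terms are $O(n^{-1})$. For the first, $\tr(N_n^2)=\tr\big((\bI_M+P_n)\bW(\bI_M+P_n)\bW\big)$, so one must show that $P_n$ converges (a.s.\ on the good event) to a deterministic limit, which by the orthogonality of $U_1$ to the range of $\Gamma$ and a symmetry argument over the columns of $U_1$ is a scalar multiple of $\bI_M$; this delivers $\tr(M_n^2)=\tr(\bDe^2)+o(1)$ with $\bDe$ as in (\ref{bDe}). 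Finally the $o(1)$ estimates hold on the good event, while on $\mathcal{E}$ a Cauchy--Schwarz bound combined with the polynomial moment controls from truncation (Lemma~\ref{lemmoments}) gives $\rE\big[(\tr(\bW\tau_k))^2 I(\mathcal{E})\big]=o(1)$; taking expectations yields $\rE\big(\tr(\bW\tau_k)\big)^2=2\tr(\bDe^2)+o(1)$.

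The step I expect to be the main obstacle is the convergence of $P_n$ just invoked: establishing $P_n\to p_k\bI_M$ requires a careful deterministic--equivalent / trace--lemma analysis of bilinear forms $n^{-1}U_1^*\bX_{k0}\,g\!\left(n^{-1}\bX_{k0}^*\Gamma\bX_{k0}\right)\bX_{k0}^*U_1$ taken against the deterministic matrix $U_1$ orthogonal to the range of $\Gamma$, with the resolvent evaluated at the spectral point $\phi_{n,k}$ lying outside the support; this is carried out by martingale / leave--one--column--out arguments in the spirit of Lemmas~\ref{lembetak} and~\ref{lembb1} and the CLT--type estimates of \cite{BaiSilverstein2004} and \cite{BaiZhou2008}.
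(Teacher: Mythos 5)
Your route---rewriting $\tr(\bW\tau_k)=\bx_k^*M_n\bx_k-\tr M_n$ for a single $\bX_{k0}$-measurable Hermitian $M_n$ and applying the second-moment identity once---is a legitimate variant of the paper's computation, which instead splits $\tau_k=\tau_{k1}-\tau_{k2}$, applies formula (1.15) of \cite{BaiSilverstein2004} separately to each piece, and disposes of the cross-term by Cauchy--Schwarz. Your estimates killing the fourth-cumulant contribution (via Assumption~{\bf D}, Lemma~\ref{lembb1} applied columnwise, and the truncated fourth-moment bound $\rE|x_{11}|^4=o(\log n)$ from Lemma~\ref{lemmoments}), and your $O(n^{-1})$ bounds on the $C$-terms in $\tr(M_n^2)$, are correct and closely parallel the paper's.

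The gap is in the final step, which you flag as ``the main obstacle'' and leave unproven. Having shown $\tr(M_n^2)=\tr(N_n^2)+O(n^{-1})$ with $N_n:=A^*U_1\bW U_1^*A$, you assert that one must establish $P_n\to p_k\bI_M$ by a deterministic-equivalent / trace-lemma analysis in order to conclude $\tr(M_n^2)=\tr(\bDe^2)+o(1)$. But with $A=\bI_p+\bH_1$ you have $N_n=(\bI_p+\bH_1^*)U_1\bW U_1^*(\bI_p+\bH_1)$, which \emph{is} the matrix $\bDe$ of (\ref{bDe}); hence $\tr(N_n^2)=\tr(\bDe^2)$ is an algebraic identity and no asymptotics for $P_n$ are needed. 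The $\bDe$ in (\ref{bDe}) is a random, $\bX_{k0}$-measurable matrix, not a deterministic limit: Lemma~\ref{lemtauk} is invoked in the proof of Theorem~\ref{thm2} only to say that the conditional second moment of $\tr(\bW\tau_k)$ given $\bX_{k0}$ equals $2\tr\bDe^2+o(1)$ a.s., so that this quantity agrees (up to $o(1)$) under the $\bX$-swap and the $\bY$-swap precisely because $\bDe$ depends only on the shared block $\bX_{k0}$. Replacing your final paragraph with the observation $N_n=\bDe$ closes the argument; as written, your proof stalls at a non-obstacle that you do not resolve.
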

\begin{proof} 
 By the definition (\ref{tauk}),  let 
  \bqn
\bH_1&=&\frac{1}{n}{\bf X}_{k0}(\phi_{n,k} I_{n-1}-n^{-1}{\bf X}^*_{k0}\Gamma{\bf X}_{k0})^{-1}{\bf X}^*_{k0}\Gamma\\
\bH_2&=&\frac{1}{n}\Gamma {\bf X}_{k0}(\phi_{n,k} I_{n-1}-n^{-1}{\bf X}^*_{k0}\Gamma{\bf X}_{k0})^{-2}{\bf X}^*_{k0}\Gamma,
\eqn
 and then  we have 
 \begin{eqnarray*}
\tau_k&=&\frac{1}{n}\big(\bx_k^*\bH_2 \bx_k-{\rm tr}( \bH_2)\big) \bI_M-U_1^*(\bI_p+\bH_1)(\bx_k\bx_k^*-\bI_p)(\bI_p+ \bH_1^*)U_1\\
& :=& \tau_{k1} -\tau_{k2} 
\end{eqnarray*}

 For any $M\times M$ symmetric matrix $\bf W$,
 \be
\rE\big(\tr(\bW\tau_k)\big)^2= \rE\big(\tr(\bW\tau_{k1})\big)^2+ \rE\big(\tr(\bW\tau_{k2})\big)^2-2\rE\big(\tr(\bW\tau_{k1})\tr(\bW\tau_{k2})\big).
\ee
Then, by the equation (1.15) in \cite{BaiSilverstein2004}, we have 
\bqn
\rE\big(\tr(\bW\tau_{k1})\big)^2
&=&\frac{1}{n^2}\big( (\rE|x_{11}|^4-\rE|x_{11}^2|^2-2)\sum\limits_{t}  h_{2,tt}^2+2{\rm tr} (\bH_2^2)\big)({\rm tr} \bW)^2\\
&=&o(n^{-1}\log n)
\eqn
where $h_{2,ts}$'s  the $(t,s)$ element of the matrix $\bH_2$.

Let 
\be
\bDe=(\bI_p+ \bH_1^*)U_1\bW U_1^*(\bI_p+\bH_1),\label{bDe}
\ee
then 
 \bqn
\big(\tr(\bW\tau_{k2})\big)^2&=&\big(\bx_k^*\bDe \bx_k-{\rm tr}( \bDe)\big)^2\\
&=& (\rE|x_{11}|^4-\rE|x_{11}^2|^2-2)\sum\limits_{t}  \Delta_{tt}^2+2{\rm tr} (\bDe^2),
\eqn
where $\Delta_{ts}$'s  the $(t,s)$ element of the matrix $\bDe$.

By the Assumption~$\bD$ and Lemma~\ref{lembb1},
 we have
\begin{align*}
\rE|x_{11}|^4\sum\limits_{t}  \Delta_{tt}^2 & \leq \rE|x_{11}|^4\sum\limits_{t}  (2 \bbe_t^* U_1\bW U_1^* \bbe_t + 
2 \bbe_t^*\bH_1^* U_1\bW U_1^*\bH_1 \bbe_t)^2\\
& \leq \rE|x_{11}|^4 \sum\limits_{t}  8 \Big( ( \bbe_t^* U_1\bW U_1^* \bbe_t )^2+ 
( \bbe_t^*\bH_1^* U_1\bW U_1^*\bH_1 \bbe_t)^2\Big)\\
&\leq o(1)+o( n^{-4\delta}\log n)
\end{align*}
where $\bbe_t$ is a $p$-dimensional unit vector with the $t$th element  equal to 1 and others equal to 0. 
By similar techniques, we also obtain 
\begin{align*}
{\rm tr} (\bDe^2)&=\sum\limits_{t,s}  \Delta_{ts}^2  \leq \sum\limits_{t,s}  (2 \bbe_t^* U_1\bW U_1^* \bbe_s + 
2 \bbe_t^*\bH_1^* U_1\bW U_1^*\bH_1 \bbe_s)^2\\
& \leq \sum\limits_{t,s}  8 \Big( ( \bbe_t^* U_1\bW U_1^* \bbe_s )^2+ 
( \bbe_t^*\bH_1^* U_1\bW U_1^*\bH_1 \bbe_s)^2\Big)\\
&\leq 8\Big( 1+o(n^{-4\delta})\Big)\tr (\bW\bW^*)\\
&= O(1) 
\end{align*}
Thus, 
\[\big(\tr(\bW\tau_{k})\big)^2= 2{\rm tr} (\bDe^2)+o(1).\]
\end{proof}

Now, we are in position to complete the proof of the G4MT. To this end, we only need to show that the difference in the
 characteristic functions tends to zero. For any $M\times M$ symmetric matrix $\bf W$, 
 $\rE e^{i{\rm tr}\left({\bf W}\Omega_M({\bf X}) \right)}-\rE e^{i{\rm tr}({\bf W}\Omega_M({\bf Y}))}\to 0$ is proved in the following part. Using the notations we introduced above, we have
 \begin{align}
& \rE e^{i{\rm tr}({\bf W}\Omega_M({\bf X}))}-\rE e^{i{\rm tr}({\bf W}\Omega_M({\bf Y}))}\nonumber\\
=&\sum_{k=1}^n\Big( \rE e^{i{\rm tr}\big({\bf W}\Omega_M({\bf X}_{k})\big)}-\rE e^{i{\rm tr}\big({\bf W}\Omega_M({\bf X}_{k-1})\big)}\Big) \nonumber\\
=&\sum_{k=1}^n \!\rE e^{i{\rm tr}\big({\bf W}\Omega_M({\bf X}_{k0})\big)}\!\Big(\rE_k e^{i{\rm tr}\big({\bf W}(\Omega_M({\bf X}_{k})\!-\!\Omega_M({\bf X}_{k0}))\big)} \!-\!\rE_k e^{i{\rm tr}\big({\bf W}(\Omega_M({\bf X}_{k\!-\!1})\!-\!\Omega_M({\bf X}_{k0}))\big)}\!\Big) \nonumber\\
=&\sum_{k=1}^n\rE e^{i{\rm tr}\!\Big(\!{\bf W}\!\big(\!\Omega_M({\bf X}_{k0})\!+\!\frac{D_1^{\frac{1}{2}}\tau_{k0} D_1^{\frac{1}{2}}}{ \phi_{n,k}\beta_{k0}\sqrt{n}}\!\big)\!\Big)\!}\!\bigg(\!\rE_k e^{i{\rm tr}\Big(\frac{{\bf W} D_1^{\frac{1}{2}}}{\phi_{n,k}}\big(\frac{\tau_k}{ \beta_{k0}\sqrt{n}}-\frac{(\tau_{k0}+\tau_k)\varepsilon_k}{\beta_{k0}^2\sqrt{n}}\!+\!\frac{(\tau_{k0}+\tau_k)\varepsilon^2_k}{\beta_{k0}^2\beta_k\sqrt{n}}\big)D_1^{\frac{1}{2}}\Big)}\non
\quad& -\rE_k e^{i{\rm tr}\Big(\frac{{\bf W} D_1^{\frac{1}{2}}}{\phi_{n,k}}\big(\frac{\tau_{ky}}{ \beta_{k0}\sqrt{n}}-\frac{(\tau_{k0}+\tau_{ky})\varepsilon_{ky}}{\beta_{k0}^2\sqrt{n}}+\frac{(\tau_{k0}+\tau_{ky})\varepsilon^2_{ky}}{\beta_{k0}^2\beta_{ky}\sqrt{n}}\big)D_1^{\frac{1}{2}}\Big)}\bigg)\non
=&\sum_{k=1}^n\rE e^{i{\rm tr}\!\Big(\!{\bf W}\!\big(\!\Omega_M({\bf X}_{k0})\!+\!\frac{D_1^{\frac{1}{2}}\tau_{k0} D_1^{\frac{1}{2}}}{ \phi_{n,k}\beta_{k0}\sqrt{n}}\!\big)\!\Big)\!}\non
\quad&\bigg(\!\rE_k e^{i{\rm tr}\!\Big(\!\frac{{\bf W} D_1^{\frac{1}{2}}}{\phi_{n,k}}\!\big(\!\frac{\tau_k}{ \beta_{k0}\sqrt{n}}-\frac{(\tau_{k0}+\tau_k)\varepsilon_k}{\beta_{k0}^2\sqrt{n}}\!\big)\!D_1^{\frac{1}{2}}\!\Big)} \!-\! \rE_k e^{i{\rm tr}\!\Big(\!\frac{{\bf W} D_1^{\frac{1}{2}}}{\phi_{n,k}}\!\big(\!\frac{\tau_{ky}}{ \beta_{k0}\sqrt{n}}-\frac{(\tau_{k0}+\tau_{ky})\varepsilon_{ky}}{\beta_{k0}^2\sqrt{n}}\!\big)\!D_1^{\frac{1}{2}}\!\Big)\!}\bigg)\nonumber\\
&+\sum_{k=1}^n\rE e^{i{\rm tr}\!\Big(\!{\bf W}\!\big(\!\Omega_M({\bf X}_{k0})\!+\!\frac{D_1^{\frac{1}{2}}\tau_{k0} D_1^{\frac{1}{2}}}{ \phi_{n,k}\beta_{k0}\sqrt{n}}\!\big)\!\Big)\!}\non
& \Bigg(\rE_k e^{i{\rm tr}\Big(\frac{{\bf W} D_1^{\frac{1}{2}}}{\phi_{n,k}}\big(\frac{\tau_k}{ \beta_{k0}\sqrt{n}}-\frac{(\tau_{k0}+\tau_k)\varepsilon_k}{\beta_{k0}^2\sqrt{n}}\big)D_1^{\frac{1}{2}}\Big)}\Big(e^{i{\rm tr}\big(\frac{{\bf W} D_1^{\frac{1}{2}}(\tau_{k0}+\tau_k)D_1^{\frac{1}{2}}\varepsilon^2_k}{\phi_{n,k}\beta_{k0}^2\beta_k\sqrt{n}}\big)}-1\Big)\non
\quad &
 -\rE_k e^{i{\rm tr}\Big(\frac{{\bf W} D_1^{\frac{1}{2}}}{\phi_{n,k}}\big(\frac{\tau_{ky}}{ \beta_{k0}\sqrt{n}}-\frac{(\tau_{k0}+\tau_{ky})\varepsilon_{ky}}{\beta_{k0}^2\sqrt{n}}\big)D_1^{\frac{1}{2}}\Big)}
\Big(e^{i{\rm tr}\big(\frac{{\bf W} D_1^{\frac{1}{2}}(\tau_{k0}+\tau_{ky})D_1^{\frac{1}{2}}\varepsilon^2_{ky}}{\phi_{n,k}\beta_{k0}^2\beta_{ky}\sqrt{n}}\big)}-1\Big)\Bigg).
\label{eqb5}
 \end{align}
where $\rE_k=\rE(\cdot|{\bf X}_{k0})$. 

By the Lemma~{\ref{lembetak}}, it follows that $\beta_{k0}\to -\frac 1{ \lambda m(\lambda)}\neq 0$ and {$ \varepsilon_k\to 0 $}, and we conclude that the last two terms are $o(1)$. As an example,
\begin{align}
&\Bigg|\sum_{k=1}^n\rE e^{i{\rm tr}\!\Big(\!{\bf W}\!\big(\!\Omega_M({\bf X}_{k0})\!+\!\frac{D_1^{\frac{1}{2}}\tau_{k0} D_1^{\frac{1}{2}}}{ \phi_{n,k}\beta_{k0}\sqrt{n}}\!\big)\!\Big)\!}\non
&\quad\cdot \rE_k e^{i{\rm tr}\Big(\frac{{\bf W} D_1^{\frac{1}{2}}}{\phi_{n,k}}\big(\frac{\tau_k}{ \beta_{k0}\sqrt{n}}-\frac{(\tau_{k0}+\tau_k)\varepsilon_k}{\beta_{k0}^2\sqrt{n}}\big)D_1^{\frac{1}{2}}\Big)}\Big(e^{i{\rm tr}\big(\frac{{\bf W} D_1^{\frac{1}{2}}(\tau_{k0}+\tau_k)D_1^{\frac{1}{2}}\varepsilon^2_k}{\phi_{n,k}\beta_{k0}^2\beta_k\sqrt{n}}\big)}-1\Big)\Bigg|\nonumber\\
&\le\sum_{k=1}^n\rE\left(\rE_k(2I(|\varepsilon_k|\ge\delta_k))+\frac{K_0}{\sqrt{n}}\rE_k|{\rm tr}\bW(\tau_{k0}+\tau_k)\varepsilon_k^2|\right)\label{cheb}\\
&\le\sum_{k=1}^nK_0\big(\rE\varepsilon_k^4+\frac1{\sqrt{n}}(\rE|({\rm tr}\bW(\tau_{k0}+\tau_k)|^2\rE\varepsilon_k^4))^{1/2}\big)=o(1),\nonumber
\end{align} 
where we have used the facts from Lemma \ref{lembetak} that 
\begin{eqnarray*}
&&\rE_k\varepsilon_k^4\le
o(n^{-1}),\\
&&\rE_k|{\rm tr}\bW(\tau_{k0}+\tau_k)|^2\le K_0
\end{eqnarray*}
with  $K_0$ being a suitable constant valued different at different appearances. 

 Here, $\delta_k$ should not be too small, like the half of the non-zero limit of $\beta_{k0}$; then, we have 
\[|\beta_k|\geq|\beta_{k0}| -|\varepsilon_k| \] 
and moreover $|\beta_k|\geq 1/2|\beta_{k0}|$, which is bounded from below. Thus,  we give the  partition as (\ref{cheb}),  and use the Chebyshev's inequality when $|\varepsilon_k|\ge\delta_k$ and then apply the Taylor expansion and Cauchy-Schwartz  inequality to the case of  $|\varepsilon_k|<\delta_k$. Similarly, we can show the other term is $o(1)$.

Therefore, we have 
 \begin{align}
& \rE e^{i{\rm tr}{\bf W}\Omega_M({\bf X})}-\rE e^{i{\rm tr}{\bf W}\Omega_M({\bf Y})}\nonumber\\
=&\sum_{k=1}^n\rE e^{i{\rm tr}\!\Big(\!{\bf W}\!\big(\!\Omega_M({\bf X}_{k0})\!+\!\frac{D_1^{\frac{1}{2}}\tau_{k0} D_1^{\frac{1}{2}}}{ \phi_{n,k}\beta_{k0}\sqrt{n}}\!\big)\!\Big)\!}\!\bigg(\!\rE_k e^{i{\rm tr}\Big(\frac{{\bf W} D_1^{\frac{1}{2}}}{\phi_{n,k}}\big(\frac{\tau_k}{ \beta_{k0}\sqrt{n}}-\frac{(\tau_{k0}+\tau_k)\varepsilon_k}{\beta_{k0}^2\sqrt{n}} \big)D_1^{\frac{1}{2}}\Big)}\non
\quad& -\rE_k e^{i{\rm tr}\Big(\frac{{\bf W} D_1^{\frac{1}{2}}}{\phi_{n,k}}\big(\frac{\tau_{ky}}{ \beta_{k0}\sqrt{n}}-\frac{(\tau_{k0}+\tau_{ky})\varepsilon_{ky}}{\beta_{k0}^2\sqrt{n}} \big)D_1^{\frac{1}{2}}\Big)}\bigg)+o(1)\nonumber
\end{align}

By the same approach, one may show that 
 \begin{align}
& \rE e^{i{\rm tr}{\bf W}\Omega_M({\bf X})}-\rE e^{i{\rm tr}{\bf W}\Omega_M({\bf Y})}\nonumber\\
=&\sum_{k=1}^n\rE e^{i{\rm tr}\!\Big(\!{\bf W}\!\big(\!\Omega_M({\bf X}_{k0})\!+\!\frac{D_1^{\frac{1}{2}}\tau_{k0} D_1^{\frac{1}{2}}}{ \phi_{n,k}\beta_{k0}\sqrt{n}}\!\big)\!\Big)\!}\!\bigg(\!\rE_k e^{i{\rm tr}\Big(\frac{{\bf W} D_1^{\frac{1}{2}}}{\phi_{n,k}}\big(\frac{\tau_k}{ \beta_{k0}\sqrt{n}}-\frac{\tau_{k0}\varepsilon_k}{\beta_{k0}^2\sqrt{n}} \big)D_1^{\frac{1}{2}}\Big)}\non
\quad& -\rE_k e^{i{\rm tr}\Big(\frac{{\bf W} D_1^{\frac{1}{2}}}{\phi_{n,k}}\big(\frac{\tau_{ky}}{ \beta_{k0}\sqrt{n}}-\frac{\tau_{k0}\varepsilon_{ky}}{\beta_{k0}^2\sqrt{n}} \big)D_1^{\frac{1}{2}}\Big)}\bigg)+o(1)\label{eqb6}
\end{align}
%

Since
\begin{eqnarray*}
&&\bigg({\rm tr}\big(\frac{{\bf W}\tau_{k}}{ \beta_{k0}\sqrt{n}}-\frac{{\bf W}\tau_{k0}\varepsilon_{k}}{\beta_{k0}^2\sqrt{n}}\big)
\bigg)^2\\
&=&\frac{1}{n}\bigg( \frac{1}{\beta_{k0}^2} \big(\tr(\bW\tau_k)\big)^2-\frac{2}{\beta_{k0}^3}\tr(\bW\tau_k)\tr(\bW\tau_{k_0}\varepsilon_k)+\frac{1}{\beta_{k0}^4}\big(\tr(\bW\tau_{k_0})\big)^2\varepsilon_k^2\bigg)
\end{eqnarray*} 
and noted that
\[\rE \Big( \tr(\bW\tau_k)\tr(\bW\tau_{k_0}\varepsilon_k)\Big)=o(1)\]
and
\[\rE\bigg(\big(\tr(\bW\tau_{k_0})\big)^2\varepsilon_k^2\bigg)=o(n^{-1}\log n).\]

Then, by  Lemma~\ref{lemtauk}, we have 
\begin{align*}
&\bigg({\rm tr}\big({\bf W}\frac{\tau_{k}}{ \beta_{k0}\sqrt{n}}-\frac{\tau_{k0}\varepsilon_{k}}{\beta_{k0}^2\sqrt{n}}\big)
\bigg)^2= \frac{1}{n\beta_{k0}^2} \big(2{\rm tr} (\bDe^2)+o(1)\big) 
\end{align*}

Because $\bX$ and $\bY$ satisfy the Assumptions $\bA \sim \bE$; then, 
${\rm tr} (\bDe^2)$ 
are identical with $\by_k$ instead of $\bx_k$. Furthermore, $\alpha_i/\phi_{n,k}, i=1,\cdots, K$ are bounded. 
Therefore, 
by (\ref{eqb6}), we have  
\begin{align}
& \rE e^{i{\rm tr}{\bf W}\Omega_M({\bf X})}-\rE e^{i{\rm tr}{\bf W}\Omega_M({\bf Y})}\nonumber\\
\leq &\Bigg|\sum_{k=1}^n\rE e^{i{\rm tr}\!\Big(\!{\bf W}\!\big(\!\Omega_M({\bf X}_{k0})\!+\!\frac{D_1^{\frac{1}{2}}\tau_{k0} D_1^{\frac{1}{2}}}{ \phi_{n,k}\beta_{k0}\sqrt{n}}\!\big)\!\Big)\!}\non
&K_0\Bigg(\rE_k \Big(e^{i{\rm tr}\big({\bf W}(\frac{\tau_k}{ \beta_{k0}\sqrt{n}}-\frac{\tau_{k0}\varepsilon_k}{\beta_{k0}^2\sqrt{n}})\big)}\!\!-\!1\!-\!\!i{\rm tr}\Big({\bf W}\big(\frac{\tau_k}{ \beta_{k0}\sqrt{n}}-\frac{\tau_{k0}\varepsilon_k}{\beta_{k0}^2\sqrt{n}}\big)\Big)
+\frac{1}{n\beta_{k0}^2} {\rm tr} (\bDe^2)
\Big)\nonumber\\
& \!\!-\!\rE_k \Big(e^{i{\rm tr}\big({\bf W}(\frac{\tau_{ky}}{ \beta_{k0}\sqrt{n}}-\frac{\tau_{k0}\varepsilon_{ky}}{\beta_{k0}^2\sqrt{n}})\big)}\!\!-\!1\!-\!\!i{\rm tr}\Big({\bf W}\big(\frac{\tau_{ky}}{ \beta_{k0}\sqrt{n}}\!-\!\frac{\tau_{k0}\varepsilon_{ky}}{\beta_{k0}^2\sqrt{n}}\big)\Big)\!+\!\frac{1}{n\beta_{k0}^2} {\rm tr} (\bDe^2)\!\Big)\!\Bigg)\!\Bigg| \non
\le & \tilde{K_0}~o(1),\nonumber
\end{align}
where $K_0$ is a suitable bounded constant  taking different values at different appearances. 

The proof of Theorem~{\ref{thm2}} is completed.

\section{Proof of Corollary~\ref{coro1}: Limiting distribution of $\Omega_M(\phi_{n,k},\bX)$}
\label{SOmega}
\begin{proof}
According to the Theorem~\ref{thm2}, we can derive the limiting distribution of $\Omega_M(\phi_{n,k},\bX)$ under the Gaussian assumption of the entries from $\bX$.
Define  ${\bm \xi_1}=U_1^*\bX$ and  ${\bm \xi_2}=U_2^*\bX$, where $U=(U_1,U_2)$ is defined in (\ref{UDU}).
Then, ${\bm \xi_1}$ and ${\bm \xi_2}$ are independent random sample matrixes with the elements from $\mathcal{N}(0,1)$.
Further,  by the expression of  $\Omega_M(\phi_{n,k},\bX)$, we have 
\begin{align*}
&\Omega_M(\phi_{n,k}, {\bm \xi})\\
&= \frac{1}{\sqrt{n}}D_1^{\frac{1}{2}}\Big({\rm tr}\big( (\phi_{n,k} \bI_{n}- \frac1n {\bm \xi_2}^*D_2 {\bm \xi_2})^{-1} \big)  \bI_M-{\bm \xi_1} (\phi_{n,k} \bI_{n}- \frac1n {\bm \xi_2}^*D_2 {\bm \xi_2})^{-1} {\bm \xi_1}^*\Big)D_1^{\frac{1}{2}}. 
\end{align*}
Let ${\bm \xi^*_{1,i}}$ be the $i$th row of ${\bm \xi_1}$;  
then,  the $(i,j)$ element of $\Omega_M(\phi_{n,k}, {\bm \xi})$ is 
defined as  
\begin{align*}
&\omega_{ij}
\!=\! \frac{\alpha_k}{\sqrt{n}}\!\Big(\!{\rm tr}\big( (\phi_{n,k} \bI_{n}\!-\! \frac1n {\bm \xi_2}^*D_2 {\bm \xi_2})^{-1} \big) \delta_{i,j}\!-\!{\bm \xi^*_{1,i}} \!(\!\phi_{n,k} \bI_{n}\!-\! \frac1n {\bm \xi_2}^*D_2 {\bm \xi_2})^{-1} \!{\bm \xi_{1,j}}\!\Big)\!,
\end{align*}
where $\delta_{i,j}=0$ if $i \neq j$ and $\delta_{i,i}=1$.

{
By the classical limiting theory,  it is easily obtained that 
\[\rE(\omega_{ij})\to 0 ~\text{and}~  (\rE|\omega_{ij}|)^2 \to 0\]
Furthermore, by the formula (1.15) of \cite{BaiSilverstein2004} and Gaussian assumption, we have 
 \[\rE(\omega_{ii}^2) \to 2\alpha_{k}^2\um_2(\phi_k) ~\text{and}~\rE(\omega_{ij}^2) \to \alpha_{k}^2\um_2(\phi_k), ~\text{if}~i \neq j\]
for the real case;
and 
\[\rE(\omega_{ij}^2) \to \alpha_{k}^2\um_2(\phi_k), ~i \leq j\]
for the complex case.

Let $\theta_k=\alpha_{k}^2\um_2(\phi_k)$, 
then it is concluded that $\Omega_M(\phi_{n,k},\bX)$
converges weakly to  an $M\times M$ Hermitian matrix $\Omega_{\phi_{k}}=(\Omega_{ij})$,  where $\frac{1}{\sqrt{\theta_k}}\Omega_{\phi_{k}}$ is GOE for the real case, with the entries above the diagonal being ${\rm i.i.d.} \mathcal{N}(0,1)$ and the entries on the diagonal being ${\rm i.i.d.} \mathcal{N}(0,2)$. 
For the complex case, the $\frac{1}{\sqrt{\theta_k}}\Omega_{\phi_{k}}$ is  GUE, whose  entries  are all ${\rm i.i.d.} \mathcal{N}(0,1)$.}

\begin{itemize}
\item  The cases  involved with the 4th moment.
\end{itemize}

For the cases  where the Assumption~{\bf D} is not met and the 4th moment is bounded, we reconsider  
\begin{align*}
&\omega_{ij}
\!=\! \frac{\alpha_k}{\sqrt{n}}\!\Big(\!{\rm tr}\big( (\phi_{n,k} \bI_{n}\!-\! \frac1n {\bm \xi_2}^*D_2 {\bm \xi_2})^{-1} \big) \delta_{i,j}\!-\!{\bm \xi^*_{1,i}} \!(\!\phi_{n,k} \bI_{n}\!-\! \frac1n {\bm \xi_2}^*D_2 {\bm \xi_2})^{-1} \!{\bm \xi_{1,j}}\!\Big)\!.
\end{align*}
by the formula (1.15) of \cite{BaiSilverstein2004}, we have
\be
\rE(\omega_{ii}^2) \to 2\alpha_{k}^2\um_2(\phi_k)+\alpha_{k}^2\frac 1{n} \sum\limits_{s=1}^n(\sum\limits_{t=1}^pu_{ti}^4\rE|x_{11}|^4-3)a_{ss}^2
\label{wii}
\ee
where ${\bf u}_i=(u_{1i}, \cdots, u_{pi})'$ is the $i$th column of the matrix $U_1$,  $a_{ss}$ is the $(s,s)$-th element of the matrix $(\!\phi_{n,k} \bI_{n}\!-\! \displaystyle\frac1n {\bm \xi_2}^*D_2 {\bm \xi_2})^{-1}$. For a further step,  we detail the $a_{ss}$ as follows. By equation (\ref{inout}), we have 
\begin{align}
&\phi_{n,k}(\!\phi_{n,k} \bI_{n}\!-\! \frac1n {\bm \xi_2}^*D_2 {\bm \xi_2})^{-1}
=\phi_{n,k}(\!\phi_{n,k} \bI_{n}\!-\! \frac1n \bX U_2D_2U_2^*\bX )^{-1}\\
=&\bI_n +\frac{1}{n}\bX^*U_2D_2^{1 \over 2}(\!\phi_{n,k} \bI_{p-M}\!-\! \frac1n D_2^{1 \over 2}U_2^*\bX\bX^*U_2D_2^{1 \over 2})^{-1}D_2^{1 \over 2}U_2^*\bX
\end{align}
then $a_{ss}=(1+b_{ss})/\phi_{n,k}$, where
$b_{ss}$ is the $(s,s)$-th element of the matrix $\displaystyle\frac{1}{n}\bX^*U_2D_2^{1 \over 2}(\phi_{n,k} \bI_{p-M}- \frac1n D_2^{1 \over 2}U_2^*\bX\bX^*U_2D_2^{1 \over 2})^{-1}D_2^{1 \over 2}U_2^*\bX$. 
Let $e_s$ be the $n$-dimensional  column vector with the $s$th element equal to 1and others being 0. Since 
\begin{align}
b_{ss}&=e_s^*\bX^*U_2D_2^{1 \over 2}(\!\phi_{n,k} \bI_{p-M}\!-\! \frac1n D_2^{1 \over 2}U_2^*\bX\bX^*U_2D_2^{1 \over 2})^{-1}D_2^{1 \over 2}U_2^*\bX e_s\\
&=-\frac{\frac{1}{n}e_s^*\bX^*U_2D_2^{1 \over 2}(\!\phi_{n,k} \bI_{p-M}\!-\! \frac1n D_2^{1 \over 2}U_2^*\bX_{-s}\bX_{-s}^*U_2D_2^{1 \over 2})^{-1}D_2^{1 \over 2}U_2^*\bX e_s}{1+\frac{1}{n}e_s^*\bX^*U_2D_2^{1 \over 2}(\!\phi_{n,k} \bI_{p-M}\!-\! \frac1n D_2^{1 \over 2}U_2^*\bX_{-s}\bX_{-s}^*U_2D_2^{1 \over 2})^{-1}D_2^{1 \over 2}U_2^*\bX e_s},
\end{align}
where $\bX_{-s}$ is the matrix $\bX$ without the $s$th column.
Then, by the similar derivation of Lemma~6.1 in \cite{BaiYao2008}, we have 
\[b_{ss} \to -\frac{cm(\phi_{k})}{1+cm(\phi_{k})},\]
 where $m(\phi_{k})$ is the limit of $\tilde m_p(\phi_{k})= \frac{1}{p-M} \sum_{q=M+1}^p \frac{d_q}{l_q-\phi_{k}}$ with 
  $d_q$ being the $q$th diagonal element of the matrix $D_2$, and $l_q$'s are the eigenvalues of  the matrix $\displaystyle\frac1n D_2^{1 \over 2 }U_2^* \bX\bX^* U_2D_2^{1 \over 2 }$.
  If $D_2=\bI_{p-M}$, then $m(\phi_{k})$ is the Stieltjes transform of the LSD of  the matrix $\displaystyle\frac1n U_2^* \bX\bX^* U_2$, {\rm i.e.}
  \be
m(\phi_{k})=\displaystyle\int \frac{1}{x-\phi_{k}} \md F(x)  \ee
with $F(x)$  being the LSD of  the matrix $\displaystyle\frac1n U_2^* \bX\bX^* U_2$.
Therefore,
\[a_{ss}\to 1/\big(\phi_{k}(1+cm(\phi_{k}))\big).\]

Define 
\[\nu_k=\alpha_k^2/\big(\phi_k(1+cm(\phi_k))\big)^2; ~\beta_x=(\sum\limits_{t=1}^pu_{ti}^4\rE|x_{11}|^4-3), \]
where ${\bf u}_i=(u_{1i}, \cdots, u_{pi})'$ is the $i$th column of the matrix $U_1$. If the covariance matrix is a diagonal matrix, then $\beta_x=(\sum\limits_{t=1}^pu_{ti}^4\rE|x_{11}|^4-3)=\rE|x_{11}|^4-3$. 
Then, by equation~(\ref{wii}), we obtain
\[\rE(\omega_{ii}^2) \to 2\alpha_{k}^2\um_2(\phi_k)+\beta_x \nu_k\]
for the diagonal or diagonal block independent population covariance matrix in the real case.
Moreover, $\rE(\omega_{ij}^2) \to \alpha_{k}^2\um_2(\phi_k), ~\text{if}~i \neq j$
for the real case.
For the complex case.
\[\rE(\omega_{ij}^2) \to \alpha_{k}^2\um_2(\phi_k), ~i \leq j.\]
 \end{proof}

 \end{document}